\newtheorem{theorem}{Theorem}
\newtheorem{lemma}[theorem]{Lemma}
\newtheorem{proposition}[theorem]{Proposition}
\newtheorem{corollary}[theorem]{Corollary}
\newtheorem{observation}[theorem]{Observation}
\newtheorem{claim}[theorem]{Claim}
\newtheorem{problem}{Problem}
\def\inst#1{$^{#1}$}
\begin{document}


\title{Graph sharing games: complexity and connectivity~\thanks{The research
was supported by the project CE-ITI (GACR P2020/12/G061) of the Czech Science Foundation
and by the grant SVV-2010-261313 (Discrete Methods and Algorithms). Viola M\'{e}sz\'{a}ros was also partially supported by OTKA Grant K76099 and by the grant no. MSM0021620838 of the Ministry of Education of the Czech Republic. Josef Cibulka and Rudolf Stola\v{r} were also supported by the Czech Science Foundation under the contract no.\ 201/09/H057. Rudolf Stola\v{r} was also supported by the Grant Agency of the Charles University, GAUK project number 66010.
} 
} 

\author{Josef Cibulka\inst{1}, Jan Kyn\v{c}l\inst{2}, Viola M\'{e}sz\'{a}ros\inst{2,3}, \\
Rudolf Stola\v{r}\inst{1} and Pavel Valtr\inst{2}
} 

\date{}

\maketitle

\begin{center}
{\footnotesize
\inst{1} 
Department of Applied Mathematics, \\
Charles University, Faculty of Mathematics and Physics, \\
Malostransk\'e n\'am.~25, 118~ 00 Praha 1, Czech Republic; \\ 
\texttt{cibulka@kam.mff.cuni.cz, ruda@kam.mff.cuni.cz} 
\\\ \\
\inst{2}
Department of Applied Mathematics and Institute for Theoretical Computer Science, \\
Charles University, Faculty of Mathematics and Physics, \\
Malostransk\'e n\'am.~25, 118~ 00 Praha 1, Czech Republic; \\
\texttt{kyncl@kam.mff.cuni.cz}
\\\ \\
\inst{3}
Bolyai Institute, University of Szeged, \\
Aradi v\'ertan\'uk tere 1, 6720 Szeged, Hungary; \\
\texttt{viola@math.u-szeged.hu}
}
\end{center}  

\begin{abstract}
We study the following combinatorial game played by two players, Alice and Bob, which generalizes the Pizza game considered by Brown, Winkler and others. Given a connected graph $G$ with nonnegative weights assigned to its vertices, the players alternately take one vertex of $G$ in each turn. The first turn is Alice's. The vertices are to be taken according to one (or both) of the following two rules: (T) the subgraph of $G$ induced by the taken vertices is connected during the whole game, (R) the subgraph of $G$ induced by the remaining vertices is connected during the whole game. We show that if rules (T) and/or (R) are required then for every $\varepsilon > 0$ and for every $k\ge 1$ there is a $k$-connected graph $G$ for which Bob has a strategy to obtain $(1-\varepsilon)$ of the total weight of the vertices. This contrasts with the original Pizza game played on a cycle, where Alice is known to have a strategy to obtain $4/9$ of the total weight.

We show that the problem of deciding whether Alice has a winning strategy (i.e., a strategy to obtain more than half of the total weight) is PSPACE-complete if condition (R) or both conditions (T) and (R) are required. We also consider a game played on connected graphs (without weights) where the first player who violates condition (T) or (R) loses the game. We show that deciding who has the winning strategy is PSPACE-complete.
\end{abstract}

\section{Introduction}

\subsubsection*{The pizza problem.}
Dan Brown devised the following pizza puzzle in 1996 which we formulate using graph notation. The pizza with $n$ slices of not necessarily equal size, can be considered as a cycle $C_n$ with nonnegative weights on the vertices. Two players, Bob and Alice, are sharing it by taking turns alternately. In every turn one vertex is taken. The first turn is Alice's. Afterwards, a player can take a vertex only if the subgraph induced by the remaining vertices is connected. Dan Brown asked if Alice can always obtain at least half of the sum of the weights. Peter Winkler and others solved this puzzle by constructing a weighted cycle where Bob had a strategy to get at least $5/9$ of the total weight~\cite{czechpizza}. Peter Winkler conjectured that Alice can always gain at least $4/9$ of the total weight. This conjecture has been proved independently by two groups of authors~\cite{czechpizza,germanpizza}.



\subsubsection*{The games T, R and TR.} 
In this paper we investigate a generalized setting where the cycle $C_n$ is replaced with an arbitrary connected graph $G$. Consider the following two conditions:

\begin{enumerate}
\item[(T)] the subgraph of $G$ induced by the {\em taken\/} vertices is connected during the whole game, 

\item[(R)] the subgraph of $G$ induced by the {\em remaining\/} vertices is connected during the whole game. 
\end{enumerate}

Observe that the two conditions are equivalent if $G$ is a cycle or a clique.

The generalized game is called {\em game T}, {\em game R}, or {\em game TR\/} if we require condition (T), condition (R), or both conditions (T) and (R), respectively. Note that in games T and R, regardless of the strategies of the players, there is always an available vertex for the player on turn until all vertices are taken as the graph $G$ is connected. This need not be the case in game TR, however. Therefore we consider game TR played only on $2$-connected graphs, where the game never ends prematurely; see Lemma~\ref{lemma_2connected}. In Proposition~\ref{prop_TR} we give a complete characterization of the graphs $G$ where the game TR ends with all vertices taken for all strategies of the players. 

\subsubsection*{Playing on $k$-connected graphs.} 
In games T, R and TR, there are graphs where Bob can ensure (almost) the whole weight to himself. See Fig.~\ref{f:trees} for examples in games T and R where the graph $G$ is a tree. For any $k\ge 2$, examples of $k$-connected  graphs $G$ exist in any of the three variants of the game. See Fig.~\ref{f:2connected} for examples of such $2$-connected graphs.

\begin{figure}
\begin{center}
\includegraphics[scale=1]{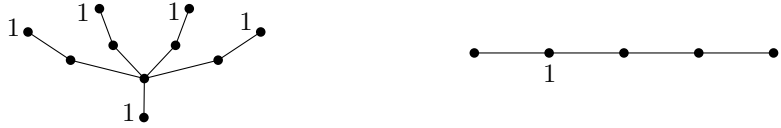}
\end{center}
\caption{Left, game T: a tree where Alice gets at most one vertex of weight $1$. Right, game R: a path of even length, Bob gets the only vertex of positive weight. Vertices with no label have weight $0$.}

\label{f:trees}
\end{figure}

\begin{figure}
\begin{center}
\includegraphics[scale=1]{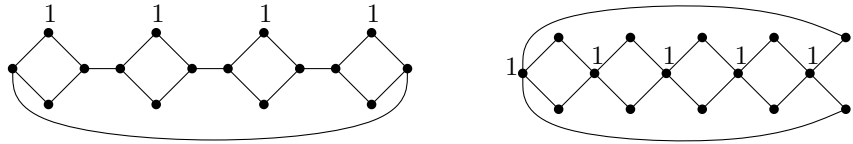}
\end{center}
\caption{Examples of $2$-connected graphs where Alice gets at most one vertex of weight $1$. Left: game T, or game TR. Right: game R; the number of vertices of degree $4$ (and of weight $1$) must be odd. Vertices with no label have weight $0$.}

\label{f:2connected}
\end{figure}

\begin{theorem}\label{veta_kconnected}
For each of the games T, R and TR, for every $\varepsilon > 0$ and for every $k\ge 1$ there is a $k$-connected graph $G$ for which Bob has a strategy to obtain at least $(1-\varepsilon)$ of the total weight of the vertices.
\end{theorem}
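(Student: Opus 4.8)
The plan is to reduce Theorem~\ref{veta_kconnected} to the following cleaner statement: for each of the three games and each fixed $k\ge 1$ it suffices to construct a \emph{single} $k$-connected weighted graph $G=G(k,N)$ in which all positive weight sits on $N$ vertices of weight $1$ (the \emph{heavy} vertices) and every other vertex has weight $0$, such that Bob has a strategy guaranteeing that Alice takes at most one heavy vertex. If Bob can enforce this, then Alice collects weight at most $1$ out of the total $N$, so Bob gets at least an $(N-1)/N=1-1/N$ fraction, and taking $N\ge 1/\varepsilon$ finishes the proof. The base cases $k=1$ and $k=2$ are exactly the gadgets in Figures~\ref{f:trees} and~\ref{f:2connected}; I would first isolate the mechanism that makes them work and then lift it to all $k$. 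For game TR I would additionally insist that $G$ be $2$-connected, so that by Lemma~\ref{lemma_2connected} the game never ends prematurely and the counting argument above is valid.

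The mechanism I expect to exploit is \emph{sequential unlocking}: arrange the heavy vertices $h_1,\dots,h_N$ along a ``backbone'' of weight-$0$ connector vertices so that, under the relevant connectivity rule, at most one heavy vertex is ever \emph{available} to the player on turn, and each heavy vertex becomes available only after a neutral move ``opens'' it. In game T a vertex is available once it is adjacent to the taken set, so I would place a neutral gate between consecutive heavy vertices; in game R a vertex is available once it is not a cut vertex of the remaining graph, so the gates play the dual role. The crucial point is a parity invariant: the neutral opening moves and the heavy-grabbing moves alternate in such a way that it is always \emph{Alice} who is forced to open the next heavy vertex and always \emph{Bob} who then grabs it, so Alice can seize a heavy vertex only once, namely with her very first move. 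To pass from $k$ to general $k$ I would thicken this backbone, replacing each connector (and possibly each heavy vertex) by a suitable neutral block so as to raise the vertex-connectivity to at least $k$ while keeping the linear ``open-then-grab'' structure intact; concretely I would try a product-type construction (e.g.\ substituting a clique $K_k$ of weight-$0$ vertices for each neutral connector, or adding $k-1$ parallel neutral rails), which matches the way the degree-$4$ vertices in the $2$-connected game-R example in Figure~\ref{f:2connected} generalize to degree-$2k$ vertices, the odd-count parity condition there being exactly the invariant above. Proposition~\ref{prop_TR} would be invoked to guarantee that the thickened TR-graph still lets the game run to the end.

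Bob's winning strategy would then be proved by induction on the number of moves, maintaining the invariant ``at most one heavy vertex is currently available, and it is Bob's turn whenever one is'': whenever Alice opens a heavy vertex, Bob takes it immediately, and whenever Alice makes a neutral move, Bob answers with a fixed paired neutral move that preserves both the connectivity rule and the parity. The main obstacle, which I expect to absorb most of the work and to be handled separately for each of T, R and TR, is the tension between high connectivity and tight control of Alice. Being $k$-connected forces every heavy vertex to have degree at least $k$, i.e.\ many neutral neighbours, and the danger (already visible for the join with a clique of weight-$0$ apices) is that opening a \emph{single} shared neutral neighbour simultaneously exposes \emph{all} heavy vertices, after which the heavy vertices simply split evenly and Bob loses his advantage. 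The heart of the proof is therefore to design the neutral gadgetry so that it supplies $k$ internally disjoint connections to each heavy vertex yet never lets more than one heavy vertex sit on the frontier at once, and to verify that Bob's paired responses are always \emph{legal} moves under rule (T), rule (R), or both rules together; establishing these two facts, together with the bound $\kappa(G)\ge k$ for the thickened graph, is where the real content lies.
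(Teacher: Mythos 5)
Your plan for games T and TR is essentially the paper's proof: the paper takes a long even cycle, substitutes a $2\lceil k/4\rceil$-clique of weight-$0$ vertices for each cycle vertex and a complete bipartite graph for each edge, puts one heavy vertex in every other clique, and has Bob grab any available heavy vertex and otherwise extend an already-started clique. Under rule (T) the ``frontier'' really is just the two cliques adjacent to the taken interval, so your sequential-unlocking invariant holds and Alice gets at most one heavy vertex (in her first \emph{or second} turn, a harmless slackening of your ``only the very first move'' claim).

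For game R, however, your reduction is unachievable and the step would fail. In game R a vertex is available exactly when it is not a cut vertex of the remaining graph, and a $k$-connected graph minus $t\le k-2$ vertices is still $2$-connected, hence has no cut vertices at all. So on Alice's $j$-th turn, when only $2(j-1)$ vertices have been removed, \emph{every} remaining vertex is available as long as $2(j-1)\le k-2$; Alice can therefore simply take heavy vertices on her first $\lfloor k/2\rfloor$ turns no matter how Bob plays or how the graph is built. No $k$-connected graph with at least $\lfloor k/2\rfloor$ heavy vertices can limit Alice to one of them for $k\ge 4$, so no amount of gadget design will make ``at most one heavy vertex sits on the frontier'' true in the opening of the game --- the tension you flag at the end is not an obstacle to be engineered around but a genuine impossibility. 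The fix (and what the paper does) is to relax the target to ``Alice gets at most $\lfloor k/2\rfloor$ heavy vertices'' and make the number of heavy vertices large compared to $k/\varepsilon$. Achieving even that relaxed bound requires a construction quite unlike a thickened backbone: the paper uses vertex classes $X,Y,Z$ with $Y$ the heavy set, $Z$ the family of all $k$-element subsets of $Y$ attached to their elements (so that each remaining $y\in Y$ keeps a near-pendant neighbour in $Z$ and stays a cut vertex, i.e.\ unavailable), and a large balancing set $X$ joined completely to $Y$; Bob's strategy and the counting argument $\binom{a+b}{k}\le b-a+1$ that caps Alice at $\lfloor k/2\rfloor$ heavy vertices in the opening phase are the real content there, and none of it is supplied by your outline.
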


\subsubsection*{Parity of the vertex set.} 
Micek and Walczak~\cite{graph_grabbing,parity_sharing} also studied, independently of us, generalizations of the pizza game. They investigated how the parity of the number of vertices affects Alice's chances to gain a positive fraction of the total weight in games T and R, particularly when the game is played on a tree. They proved that Alice can gain at least $1/4$ of the total weight in game R on a tree with an even number of vertices and in game T on a tree with an odd number of vertices. For the opposite parities they constructed examples of trees (such as those in Figure~\ref{f:trees}) where Bob has a strategy to get almost all the weight. They conjectured that Alice can gain at least $1/2$ of the total weight in game R on a tree with an even number of vertices; this has been confirmed by Seacrest and Seacrest~\cite{seacrest}.

In our original proof of Theorem~\ref{veta_kconnected} the graphs for games T and TR had an even number of vertices and the graphs for game R an odd number of vertices.
Micek and Walczak (personal communication) asked whether also $k$-connected graphs with opposite parities satisfying the conditions of Theorem~\ref{veta_kconnected} exist. 
In Subsection~\ref{sub_2_2} we find examples of such graphs for all three variants of the game. 

Micek and Walczak~\cite{parity_sharing} also conjectured that there is a function $f(m) > 0$ such that in game T, Alice can gain at least $f(m)$ of the weight of any graph with an odd number of vertices and with no $K_m$-minor. G\k{a}gol~\cite{gagol} showed that under the additional assumption that the input graph $G$ has only $\{0,1\}$ weights, Alice gan gain at least 
$\Omega(1/(m\sqrt{\log m})^3)$ of the total weight.
G\k{a}gol, Micek and Walczak~\cite{GMW12_subdivision} recently informed us about proving a stronger version of their conjecture: there is a function $f(m)>0$ such that in game T, Alice can
gain at least $f(m)$ of the weight of any graph with an odd number of
vertices and with no subdivision of $K_m$. On the other hand, they show that the construction of odd graphs which are arbitrarily bad for Alice can be easily modified to yield graphs with bounded expansion.

Micek and Walczak~\cite{graph_grabbing} noted that all known examples of sequences of even graphs where Alice's gain in game R tends to zero, contain arbitrarily large cliques as subgraphs. We show that the condition of containing arbitrarily large cliques is not necessary to force Alice's gain to tend to zero. In fact, the graphs in our example have arbitrarily large girth and form a class with bounded local expansion. 

\begin{theorem}\label{veta_BE}
For every $\varepsilon>0$ there is a $\{0,1\}$-weighted even graph $G$ with arbitrarily large girth for which in game R, Bob has a strategy to obtain at least $(1-4\varepsilon)$ of the total weight of the vertices. Moreover, 
\begin{enumerate}

\item[{\rm(1)}] there exists an infinite class $\mathcal{G}_{\varepsilon}$ of such graphs with maximum degree bounded by $O(\frac{1}{\varepsilon}\log(\frac{1}{\varepsilon}))$; in particular, $\mathcal{G}_{\varepsilon}$ has bounded expansion. 

\item[{\rm(2)}] There is a class $\mathcal{G}$ with bounded local expansion of $\{0,1\}$-weighted even graphs for which the infimum of the fraction of Alice's gain in game R is zero.
\end{enumerate}
\end{theorem}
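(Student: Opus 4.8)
The plan is to build, for each target girth $g$ and each $\varepsilon>0$, an even $\{0,1\}$-weighted graph consisting of a bounded-degree, high-girth, weight-$0$ \emph{skeleton} to which we attach many weight-carrying \emph{brooms}. A broom is a vertex $v$ of weight $1$ joined to a single skeleton vertex and equipped with $t=\Theta(\frac1\varepsilon\log\frac1\varepsilon)$ pendant weight-$0$ leaves, its \emph{guards}. Since every cycle of the graph lies inside the skeleton, the girth equals that of the skeleton, which we may take arbitrarily large while keeping its degree (hence the whole degree, apart from the $t$ guards at a broom center) bounded by an absolute constant; we choose the skeleton to have minimum degree at least $3$ so that it stays connected after the removal of many vertices, and we adjust one pendant leaf to make the total number of vertices even. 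This already explains why a forest will not do: by Seacrest and Seacrest~\cite{seacrest} Alice secures half of the weight on every even tree, so genuine long cycles in the skeleton are essential.

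First I would isolate the elementary phenomenon behind Bob's advantage, namely the $3$-vertex gadget $a-v-b$ with $w(v)=1$: while both guards are present $v$ is a cut vertex, so Alice is forced onto a guard, after which $v$ becomes a non--cut vertex and Bob takes it. The broom is the robust version of this gadget: to make the weight-$1$ center $v$ removable under rule (R) one must first delete all $t$ of its guards, and \emph{because the girth is large} the neighbourhood of each broom is genuinely tree-like, so no short cycle lets Alice expose $v$ by a shortcut or reach $v$ while a guard survives. Thus each broom behaves like an independent ``$t$ guards, then one prize'' subgame, and the large girth is precisely what decouples the brooms.

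Next I would describe Bob's strategy and its analysis, the technical heart. Bob plays a local response strategy: when Alice removes a guard he answers inside the \emph{same} broom (so that he is never the player who deletes a broom's last guard), and whenever a center $v$ has just become exposed he immediately takes it; when Alice instead spends a move on a weight-$0$ skeleton vertex or on a broom whose center is already gone, Bob answers with a safe weight-$0$ vertex, using the guard reservoir and the minimum-degree-$3$ skeleton to guarantee that such a safe reply, one that keeps the remaining graph connected and exposes no center to Alice, always exists. A parity/charging argument should then show that Alice can ``steal'' the center of only boundedly many brooms near the places where she forces a desynchronisation, and that these amount to at most an $O(\varepsilon)$ fraction of all brooms; with $t=\Theta(\frac1\varepsilon\log\frac1\varepsilon)$ (the logarithm absorbing a union bound certifying that the high-girth skeleton can be realised and that safe replies never run out) Bob collects at least $(1-4\varepsilon)$ of the weight.

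The hard part, and the step I expect to fight with, is exactly this global coupling: in the naive one-guard version (the corona of a long cycle) Alice can repeatedly take the bare skeleton vertex created by each of Bob's grabs and thereby force Bob to expose the next center, producing an even $50$--$50$ split. The whole point is therefore to prove that enlarging the guard set to size $\Theta(\frac1\varepsilon)$ together with the large girth confines this desynchronisation to a vanishing fraction of brooms, while rule (R) never blocks Bob. Finally, for part~(1) the family $\mathcal{G}_\varepsilon$ arises by letting the skeleton range over all sufficiently large high-girth graphs of the fixed base degree, giving maximum degree $O(\frac1\varepsilon\log\frac1\varepsilon)$ and hence bounded expansion; for part~(2) I would take $\mathcal{G}=\bigcup_i \mathcal{G}_{\varepsilon_i}$ for a sequence $\varepsilon_i\to 0$, observe that the degree bound grows with $i$ so the class loses bounded expansion but keeps bounded \emph{local} expansion (the guards form a bounded-depth local attachment to uniformly sparse skeletons), and conclude that the infimum of Alice's gain over $\mathcal{G}$ is $0$.
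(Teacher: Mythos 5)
Your construction is fundamentally different from the paper's, and the step you yourself flag as ``the hard part'' is a genuine gap that I do not believe can be closed. In your graph the weight-$1$ vertices are pendant objects: each broom centre $v$ hangs off a single skeleton vertex and is shielded only by its own private leaves. Once its last guard is gone, $v$ is a leaf of the remaining graph and is available to whoever moves next, so the whole game reduces to a tempo battle over who is forced to take the last guard of each broom. Enlarging $t$ does not change this battle: the extra guards are neutral moves available to both players symmetrically, and your description of Bob's strategy is already internally inconsistent --- if Bob ``answers inside the same broom'', then when Alice takes the penultimate guard Bob takes the last one and hands her the centre, contradicting the requirement that he never deletes a broom's last guard. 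Since rule (R) on a connected graph forces the game to end with every vertex taken, every centre is eventually exposed by somebody, and because the weight-$1$ vertices are structurally independent of one another (they interact only through the weight-$0$ skeleton, which supplies a large symmetric reservoir of waiting moves) there is no visible mechanism that lets Bob win much more than half of these exposures. The claimed ``parity/charging argument'' showing Alice steals only an $O(\varepsilon)$ fraction of the centres is not given, and I see no reason it exists; your own observation that the $t=1$ corona of a cycle splits $50$--$50$ is evidence of the generic behaviour, not an anomaly cured by more guards. Large girth does not ``decouple'' anything here: it only makes the neighbourhood of the weight even more tree-like, which by the Seacrest--Seacrest result is exactly the regime where Alice does well.

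The paper's construction supplies the missing asymmetry, and it works quite differently. There each weight-$1$ vertex has a single weight-$0$ leaf, and the weight-$1$ vertices themselves form a graph $H$ whose complement contains no $K_{\varepsilon n,\varepsilon n}$; this co-expansion, not the girth, is the engine (the paper states explicitly that only connectivity of $H$ and the forbidden complementary $K_{\varepsilon n,\varepsilon n}$ are used in the strategy analysis). Bob tracks the unique large component $L$ of the unexposed part of $H$, immediately reclaims any vertex of $L$ that Alice exposes, and never exposes one himself; the expansion property forces all but $O(\varepsilon n)$ of the weight to remain inside $L$ (hence to consist of cut vertices unavailable to Alice) until the end, so Alice can only ever acquire weight from the small components and the final collapse of $L$. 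The girth $\Omega(\log n)$ and the degree bound $O(\frac{1}{\varepsilon}\log\frac{1}{\varepsilon})$ come simultaneously and essentially for free from a single probabilistic construction (a random graph $G(3n,c/n)$ with $c=\Theta(\frac{1}{\varepsilon}\log\frac{1}{\varepsilon})$, followed by deletion of bad vertices); they are needed only for the girth claim and for the bounded (local) expansion statements, not for Bob's strategy. If you want to repair your approach, the fix is not more guards per prize but making the prizes themselves form such an expander.
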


For definitions of the classes with {\em bounded expansion\/} and {\em bounded local expansion}, see~\cite{BLE}.

\subsubsection*{Canonical game and misere game.} 
As game TR does not always end with all vertices taken, it is natural to consider the following variation of the game, which we call a {\em canonical game TR}. Given a connected graph $G$, Alice and Bob take turns alternately. In each turn a player takes one vertex of $G$. The first player who has no move satisfying conditions (T) and (R) loses the game. We also consider a {\em misere game TR}, where the first player who has no move satisfying conditions (T) and (R) wins the game. 

\subsubsection*{Complexity results and open problems.}
We determine the complexity of deciding who has the winning strategy in the canonical and the misere game.

\begin{theorem}\label{veta_canonical_pspace}
It is PSPACE-complete to decide who has the winning strategy in the canonical game TR and in the misere game TR.
\end{theorem}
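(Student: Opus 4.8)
The plan is to establish both membership in PSPACE and PSPACE-hardness.

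\medskip

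\noindent\textbf{Membership in PSPACE.} I would first observe that both the canonical and the misere game TR are finite perfect-information two-player games of length at most $n=|V(G)|$, since each move permanently removes a distinct vertex. A position is fully described by the set of already taken vertices, which fits in $O(n)$ bits, and its legal successors---vertices $v$ such that taking $v$ keeps both the taken set and the remaining set connected---are computable in polynomial time. Hence the winner is determined by the standard recursive minimax evaluation: the player to move wins iff some legal move leads to a position losing for the opponent, with the base case supplied by the ``no legal move'' rule (a loss in the canonical game, a win in the misere game). The recursion depth is at most $n$ and each frame stores one position together with a pointer into its list of candidate moves, so the computation runs in polynomial space. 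This places both decision problems in PSPACE, and is the routine part of the argument.

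\medskip

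\noindent\textbf{PSPACE-hardness.} For the lower bound I would reduce from a normal-play PSPACE-complete game on graphs, most naturally a suitably restricted version of Generalized Geography, where two players alternately advance a token along directed edges without revisiting a vertex and the player unable to move loses. Given an instance $(H,s)$, I would build an undirected graph $G$ whose canonical game TR simulates the token game: the growing set of taken vertices encodes the current token path, each vertex and each directed edge of $H$ being replaced by a gadget so that the only legal continuations correspond exactly to the legal geography moves out of the current vertex. Condition (T) is used to force the taken region to grow as a single advancing front mimicking the token's trajectory, while condition (R) is used to forbid deviations, in particular to prevent a player from taking a vertex out of turn or from jumping to an already-visited gadget. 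To keep condition (R) satisfiable along every legitimate line, I would attach a connectivity backbone (a ``reservoir'') to the gadget vertices, so that the remaining graph stays connected until precisely the moment a gadget is meant to be sealed off; since a move is legal exactly when the taken vertex is not a cut vertex of the current remaining graph, the gadgets can be engineered so that this non-cut condition encodes ``this move is allowed in $H$.'' Finally I would arrange the parities so that the player stuck in $G$ is exactly the player who loses in $H$, yielding the canonical-game result; the misere variant follows from the same construction with the terminal convention adjusted (or by appending a short parity-flipping gadget), as the reduction is uniform in the play convention.

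\medskip

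\noindent\textbf{Main obstacle.} The delicate point is the simultaneous enforcement of conditions (T) and (R) inside each gadget: a vertex may be taken only if it neither disconnects the taken region nor disconnects the remaining region, and these two requirements pull in opposite directions. The combinatorial heart of the proof is to design gadgets in which, at every reachable position, the set of non-cut boundary vertices is exactly the set of intended legal moves, and to verify that no unintended shortcut leaves the remaining graph connected. Controlling the turn parity through the vertex- and edge-gadgets so that ``stuck'' aligns with ``loses'' consistently for both the canonical and the misere variant is the remaining bookkeeping.
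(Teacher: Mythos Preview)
Your PSPACE membership argument is correct and matches the paper's one-line justification.

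For hardness, your approach differs from the paper's: you propose reducing from Generalized Geography, whereas the paper reduces directly from TQBF. The paper does not try to simulate a moving token at all. Instead it uses small local gadgets to rigidly script the first $n{+}1$ moves. A \emph{V-gadget} is a path of length four attached at its middle vertex; one checks that no vertex of a V-gadget can ever be taken without the taker losing, so attaching V-gadgets to chosen vertices turns them into permanent cut vertices of the remaining graph. Together with \emph{order enforcers} (which punish starting at the wrong place), this forces the players, on pain of immediate loss, to take exactly one of $T_i,F_i$ in turn $i$, encoding the $i$th quantified variable. In turn $n{+}1$ Alice must take a final vertex $L$, and this move is legal iff no clause vertex $C_l$ gets disconnected, i.e., iff the chosen assignment satisfies $\varphi$. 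The misere version reuses the same idea with the parity of $n$ flipped and the two gadgets slightly shortened/doubled.

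Your proposal, by contrast, is only a plan: you assert that vertex- and edge-gadgets should exist so that at every reachable position the non-cut boundary vertices coincide with the legal Geography moves, and you yourself flag this as the ``main obstacle,'' but you do not construct any such gadget. That construction is the entire content of the hardness proof, so as it stands the proposal is not a proof. There is also a concrete tension your outline does not resolve: your ``connectivity backbone'' must keep the remaining graph connected throughout, yet you must simultaneously prevent either player from ever playing \emph{into} the backbone (which would abandon the simulated path). Under rule~(T) a backbone vertex becomes playable as soon as it is adjacent to some taken vertex, and under rule~(R) it is playable unless it is a cut vertex of the remaining graph; hence every backbone vertex adjacent to the growing path must be a cut vertex at every stage of play. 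Maintaining such an invariant is exactly the kind of thing the paper achieves with its V-gadgets, and until you exhibit explicit gadgets and verify this invariant your hardness argument remains incomplete.
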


We also consider the complexity of determining the winning strategy (i.e., gaining more than half of the total weight) for the original three types of the game. We show the following.

\begin{theorem}\label{veta_weighted_pspace}
It is PSPACE-complete to decide who has the winning strategy in game R and TR.
\end{theorem}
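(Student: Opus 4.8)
The plan is to prove PSPACE-completeness by a reduction from a known PSPACE-complete combinatorial game. Since we must show hardness for \emph{weighted} graph grabbing with condition (R) (and also (TR)), I would reduce from a game where two players alternately make moves on a graph-like structure and the winner is determined by a reachability or formula-satisfaction condition. Natural candidates are \textsc{Generalized Geography} (or its bipartite/planar restrictions), \textsc{Quantified Boolean Formula} (QBF), or one of the Schaefer-style stochastic/game versions. I expect \textsc{Generalized Geography} to be the most convenient source, because it is already an alternating vertex-selection game on a graph, which matches the flavor of the grabbing game, and its PSPACE-completeness is classical.

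The reduction will construct, from an instance of the source game, a connected weighted graph $G$ together with a weighting so that the induced grabbing game under rule (R) (resp. both rules) simulates the source game. The key design challenge is to \emph{encode} the legal moves of the source game into the connectivity constraint (R): I would build gadgets (paths, bundles of low-weight or zero-weight vertices, and ``anchor'' vertices of large weight) whose only connectivity-preserving moves force the players to play along edges of the source instance in the intended order. A second, separate mechanism must make the \emph{payoff} reflect the winner: I would place one or a few vertices of overwhelming weight so that whoever is forced into the losing position in the simulated game is the player who ends up with (just) less than half of the total weight. The parity of the number of vertices and who moves first (Alice) must be tracked carefully, since in grabbing games these determine who is forced to make a move when the other player's desired vertex becomes unavailable; I would pad with zero-weight vertices to fix parity and turn order.

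The main obstacle I anticipate is enforcing that the players \emph{cannot} deviate from the intended simulation while respecting (R): in the original pizza-type game a player may pick any vertex whose removal keeps the remaining graph connected, and there are typically many such vertices. I must ensure that any ``off-script'' move is either impossible (it disconnects the remaining graph, violating (R)) or strictly dominated (it either hands the opponent an immediate large-weight advantage or leads to a provably losing position). This is where the bulk of the argument lives: a careful case analysis showing that the weighted-majority objective plus condition (R) exactly pin down the move sequence of the simulated game, so that Alice gets more than half of the total weight if and only if the first player wins the source game. I would structure this as a sequence of claims---first that all intended moves are legal under (R), then that all deviations are losing---culminating in the equivalence of winning strategies.

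Finally, for the (TR) variant I would either reuse the same gadget (noting that on the relevant parts of $G$ conditions (T) and (R) coincide, e.g. where the construction is locally cycle-like or clique-like, as observed in the excerpt) or add a parallel ``taken-side'' skeleton that keeps the induced subgraph of taken vertices connected along the same intended play. Membership in PSPACE is the easy direction: the game has polynomially many turns and each position is describable in polynomial space, so the standard alternating/recursive evaluation of the game tree places the decision problem in PSPACE for all three variants. Combining membership with the hardness reduction yields PSPACE-completeness for games R and TR.
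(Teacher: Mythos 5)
Your proposal correctly identifies the overall shape of the argument (membership in PSPACE is routine; hardness needs a reduction from a PSPACE-complete alternating game) and correctly names the central difficulties, but it resolves none of them: there are no gadgets, no weighting scheme, and no argument that deviations are dominated. You yourself write that ``this is where the bulk of the argument lives,'' and that bulk is entirely absent. In particular, the hardest part of the actual proof is not the move-encoding but the \emph{weight design}: the paper assigns exponentially separated weights (e.g.\ $w(x_i)=w(\neg x_i)=9^{n+1-i}$ for the variable vertices and weights of order $1/999^j$ for the $j$th clause gadget) so that each group of vertices outweighs everything that comes after it, which is what forces both optimal players into an essentially greedy, lexicographically ordered sequence of turns that simulates the quantifier alternation of a TQBF instance. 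Without such a scheme, your worry that ``there are typically many connectivity-preserving vertices'' is fatal, and nothing in your sketch addresses it. A single ``vertex of overwhelming weight'' deciding the winner, as you suggest, does not work either; the paper instead balances the weights of two dominating vertices $a$ and $b$ against the sum of the clause-gadget weights so that Alice wins exactly when she collects one heavy vertex from every clause gadget, which happens exactly when every (NAE-)clause is satisfied by the assignment chosen during the variable phase.

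Two further concrete points where the paper's route differs from (and repairs) yours. First, the source problem is TQBF with NAE-3-SAT clauses, not Generalized Geography; the NAE condition is what makes the clause gadget symmetric enough that ``satisfied'' versus ``unsatisfied'' translates into Alice getting a heavy vertex or not. Second, the TR case is handled by a clean trick you do not have: the two vertices $a$ and $b$ are adjacent to \emph{all} other vertices, so after the (forced) first two turns take $a$ and $b$, condition (T) is automatically satisfied forever and game TR literally reduces to game R on the same graph; no parallel ``taken-side skeleton'' is needed. The supporting lemmas (e.g.\ that in game R a player on turn can always eventually secure any prescribed remaining vertex, via an analysis of cut vertices) are also essential to showing deviations are dominated, and nothing in your outline substitutes for them. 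As it stands the proposal is a reasonable research plan but not a proof.
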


However, we are unable to determine the complexity of deciding the winner for game T.

\begin{problem}
What is the complexity of deciding who has the winning strategy in game T?
\end{problem}

\begin{problem}
What is the complexity of deciding who has the winning strategy in game R and in game T when the input graph $G$ is a tree?
\end{problem}

We are not able to show that the problem is polynomial even if the tree has only one branching vertex. 

Game $R$ played on a tree is also known as the {\em gold-grabbing game\/}~\cite{gold_grabbing, seacrest}.

Since the weights used in our proof of Theorem~\ref{veta_weighted_pspace} are growing exponentially, it is natural to ask the following question.

\begin{problem}
What is the complexity of deciding who has the winning strategy in game R and TR, when the weights of the vertices of $G$ are only $0$ or $1$?
\end{problem}


\section{Proofs}

\subsection{Graphs where game TR ends with all vertices taken}

Here we characterize the graphs where game TR always ends with all vertices taken, and also the graphs for which there is at least one sequence of turns in game TR such that the game ends with all vertices taken. First we prove an auxiliary observation implying that it is always ``safe'' to play game TR on a $2$-connected graph.

\begin{lemma}\label{lemma_2connected}
Let $G$ be a $2$-connected graph with $n$ vertices and let $v$ be a vertex of $G$. Let $C$ be a set of $i\le n-2$ vertices of $G$ such that $v\notin C$ and the two induced subgraphs $H_i=G[C]$ and $H'_i=G[V(G)\setminus C]$ are connected. Then there is a vertex $v_{i+1} \in V(G)\setminus (C \cup \{v\})$ such that both induced subgraphs $H_{i+1}=G[C\cup \{v_{i+1}\}]$ and $H'_{i+1}=G[V(G)\setminus (C \cup \{v_{i+1}\})]$ are connected.

\end{lemma}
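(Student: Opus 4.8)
The plan is to restate the problem as the choice of a single vertex to move from the remaining side to the taken side. Put $R := V(G)\setminus C$, so that $v\in R$, $|R| = n-i \ge 2$, and both $G[C]$ and $H := G[R]$ are connected. A candidate $w := v_{i+1}$ is admissible exactly when $w\in R\setminus\{v\}$, when adding $w$ keeps the taken side connected, and when deleting $w$ keeps the remaining side connected. Since $G[C]$ is already connected, the first of these means that either $C=\emptyset$ or $w$ has a neighbour in $C$; the second means that $w$ is a non-cut vertex of $H$ (as $R\setminus\{w\} = V(G)\setminus(C\cup\{w\})$). Hence the whole lemma reduces to the claim that $H$ has a non-cut vertex adjacent to $C$ and distinct from $v$. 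The case $C=\emptyset$ is immediate: then $H=G$ is $2$-connected, so every vertex is a non-cut vertex and any $w\in V(G)\setminus\{v\}$ works, which exists because $n\ge 2$.

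So I would assume $C\ne\emptyset$ and analyse $H$ through its block-cut tree. If $H$ is a single block (i.e.\ $2$-connected, or a single edge when $|R|=2$), then every vertex of $R$ is a non-cut vertex of $H$, and it suffices to produce two vertices of $R$ adjacent to $C$: at least one of them differs from $v$. Let $A\subseteq R$ be the set of vertices with a neighbour in $C$; then $A\ne\emptyset$ because $G$ is connected, and if $A=\{a\}$ were a single vertex then every $C$--$R$ edge would pass through $a$, making $a$ a cut vertex of $G$ separating the nonempty sets $C$ and $R\setminus\{a\}$, contradicting $2$-connectivity. Thus $|A|\ge 2$, as required.

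If instead $H$ has at least two blocks, I would work with its end-blocks, of which there are at least two. The key step is the claim that every end-block contains a non-cut vertex of $H$ that is adjacent to $C$. Fix an end-block $L$ and let $c$ be its unique cut vertex of $H$, so that $S := V(L)\setminus\{c\}$ is a nonempty set of non-cut vertices of $H$ whose neighbours inside $R$ all lie in $L$. Now delete $c$ from $G$: since $G$ is $2$-connected, $G-c$ is connected, and $S$ is a proper nonempty subset of $V(G)\setminus\{c\}$ because $C\ne\emptyset$. As $S$ has no neighbour in $R\setminus V(L)$, the only way $S$ can attach to the rest of $G-c$ is through an edge to $C$; hence some vertex of $S$ is adjacent to $C$, proving the claim. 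Applying the claim to two disjoint end-blocks yields two distinct non-cut vertices of $H$ adjacent to $C$, and at least one of them is not $v$.

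The main obstacle is the end-block claim in the last paragraph: it is exactly where $2$-connectivity of $G$ is used (to guarantee that $G-c$ stays connected) together with the separation property of an end-block (to force the needed edge to land in $C$ rather than elsewhere in $R$). The requirement $w\ne v$ is what makes it necessary to extract \emph{two} admissible candidates rather than one, which is why the argument is organised around two end-blocks in the general case and around the inequality $|A|\ge 2$ in the $2$-connected case.
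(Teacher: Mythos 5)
Your proof is correct, but it takes a genuinely different route from the paper's. Both arguments share the same overall skeleton --- reduce the problem to finding a non-cut vertex of $H'_i=G[V(G)\setminus C]$ that is adjacent to $C$, and produce \emph{two} such candidates so that one of them can be chosen different from $v$ --- but they find the two candidates by different means. The paper uses a short extremal argument: among the (at least two, by $2$-connectivity) neighbours of $C$ in $H'_i$, take a pair $w_1,w_2$ at maximum distance within $H'_i$; if $w_1$ were a cut vertex of $H'_i$, the component of $H'_i-w_1$ avoiding $w_2$ would contain another neighbour $w_3$ of $C$ (again by $2$-connectivity of $G$), and the $w_2$--$w_3$ distance would exceed the $w_1$--$w_2$ distance, a contradiction. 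You instead run a case analysis on the block structure of $H'_i$ and prove that each end-block contains a non-cut vertex adjacent to $C$ (via connectivity of $G-c$ for the end-block's cut vertex $c$), handling the single-block and $C=\emptyset$ cases separately. Your version is somewhat longer and leans on the block-cut tree, but it is fully rigorous and even covers the degenerate case $C=\emptyset$, which the paper's phrasing silently excludes; the paper's version is shorter and avoids any decomposition machinery. One cosmetic point: two end-blocks need not be \emph{disjoint} (they may share a cut vertex of $H'_i$), but the sets $S$ you extract from them are disjoint, so your conclusion that the two candidates are distinct still stands.
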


\begin{proof}
The $2$-connectivity of $G$ implies that $H'_i$ has at least two vertices that are neighbors of $H_i$. Let $w_1,w_2$ be such a pair of vertices with the largest possible distance in $H'_i$. We claim that neither of the vertices $w_1,w_2$ separates $H'_i$ and therefore we can choose $v_{i+1} \in \{w_1,w_2\} \setminus \{v\}$. Suppose for contradiction that $w_1$ is a cut vertex of $H'_i$ and let $C_1$ be a component of $H'_i-w_1$ that does not contain $w_2$. Since $w_1$ does not separate $G$, the component $C_1$ contains a neighbor $w_3$ of $H_i$. The shortest path in $H'_i$ between $w_2$ and $w_3$ passes through $w_1$, which contradicts the choice of the pair $w_1,w_2$.
\end{proof}

\begin{corollary}\label{cor_order}
Let $G$ be a $2$-connected graph with $n$ vertices and let $u$ and $v$ be two distinct vertices of $G$. Then the vertices of $G$ can be ordered as $u_1=u,u_2,\dots,u_n=v$ so that for each $i=1,2,\dots,{n-1}$, both induced graphs $G[\{u_1,u_2,\dots,u_i\}]$ and $G[\{u_{i+1},\dots,u_{n-1},u_n\}]$ are connected.
\qed
\end{corollary}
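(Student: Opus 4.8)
The plan is to build the ordering greedily, one vertex at a time, by repeatedly invoking Lemma~\ref{lemma_2connected}. The key observation is that the lemma is already tailored to maintain exactly the invariant we need: at every stage both the set of already-chosen vertices and the set of not-yet-chosen vertices induce connected subgraphs, and the distinguished vertex $v$ is never chosen prematurely.

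First I would dispose of the trivial case $n=2$, where $G$ is a single edge $uv$ and the ordering $u_1=u$, $u_2=v$ works immediately, so assume $n\ge 3$. Set $u_1:=u$ and $C_1:=\{u_1\}$. Then $G[C_1]$ is connected trivially, and $G[V(G)\setminus C_1]$ is connected because a $2$-connected graph has no cut vertex, so removing $u$ leaves a connected graph. Thus the hypotheses of Lemma~\ref{lemma_2connected} are satisfied with $i=1$, noting that $v\notin C_1$ since $u\ne v$.

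Next I would iterate. Suppose for some $i$ with $1\le i\le n-2$ that we have already chosen $u_1,\dots,u_i$ forming a set $C_i$ with $v\notin C_i$ and with both $G[C_i]$ and $G[V(G)\setminus C_i]$ connected. Applying Lemma~\ref{lemma_2connected} to $C_i$ yields a vertex $u_{i+1}\in V(G)\setminus(C_i\cup\{v\})$ such that $C_{i+1}:=C_i\cup\{u_{i+1}\}$ again induces a connected subgraph and so does its complement. In particular $v\notin C_{i+1}$, so the invariant is preserved and the induction continues. Running this from $i=1$ up to $i=n-2$ produces $u_2,\dots,u_{n-1}$; at that point $C_{n-1}=\{u_1,\dots,u_{n-1}\}$ has $n-1$ vertices and the unique remaining vertex is $v$, which we set as $u_n$.

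Finally I would read off that the resulting order satisfies the required property: for each $i=1,\dots,n-1$ the set $\{u_1,\dots,u_i\}$ equals $C_i$ and the set $\{u_{i+1},\dots,u_n\}$ equals $V(G)\setminus C_i$, and both were shown connected along the way, with the cases $i=1$ and $i=n-1$ reducing to the connectivity of a single vertex together with the base step. There is essentially no hard part here; the only points to watch are that it is precisely the $2$-connectivity that supplies connectivity of the complement in the base case, and that because the lemma always selects its new vertex from $V(G)\setminus(C_i\cup\{v\})$, the vertex $v$ is automatically forced into the last position without any separate argument.
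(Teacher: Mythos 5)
Your proof is correct and is exactly the argument the paper intends: the corollary is stated with an immediate \qed because it follows by iterating Lemma~\ref{lemma_2connected} from $C=\{u\}$, with $v$ excluded at every step and hence forced into the last position. Nothing is missing.
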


\begin{proposition}\label{prop_TR}
{\rm (1)} For game TR on a graph $G$ there is a sequence of turns to take all the vertices of $G$ if and only if each cut vertex of $G$ separates the graph into precisely two components and every $2$-connected component of $G$ contains at most two cut vertices of $G$. 

{\rm (2)} Game TR on a graph $G$ will always end with all vertices taken if and only if each cut vertex of $G$ separates the graph into precisely two components and every $2$-connected component of $G$ with at least three vertices contains at most one cut vertex of $G$.
\end{proposition}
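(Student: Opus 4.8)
The plan is to analyze the structure of $G$ through its block-cut tree (the tree whose nodes are the $2$-connected components, i.e.\ blocks, and the cut vertices, with edges joining each cut vertex to the blocks containing it). The key conceptual point is that a valid game-TR sequence $u_1,u_2,\dots,u_n$ must keep \emph{both} the taken set and the remaining set connected at every intermediate step; Corollary~\ref{cor_order} already tells us that on a $2$-connected graph any such ordering exists between any prescribed first and last vertex, so the obstructions to completing the game must come entirely from how the blocks and cut vertices are arranged. I would therefore reduce both statements to combinatorial conditions on this tree.

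For the ``if'' direction of part~(1), I would argue that under the stated hypothesis the blocks can be laid out in a path-like order: since each cut vertex separates $G$ into exactly two pieces and each block contains at most two cut vertices, the block-cut tree is itself a path. One can then take the blocks one at a time from one end to the other, using Corollary~\ref{cor_order} \emph{inside} each block to order its private vertices (fixing the entry cut vertex as the first vertex and the exit cut vertex as the last), and checking that at the seams the remaining graph stays connected because everything not yet taken lies in the later blocks, which hang together through the shared cut vertices. For the ``only if'' direction I would show that either forbidden configuration forces the game to stall: if a cut vertex $c$ separates $G$ into three or more components, then as soon as we start taking vertices we must do so from a single component to keep the taken set connected, but finishing that component (other than through $c$) and then passing to another forces $c$ to be either taken too early (disconnecting the remaining blocks) or left as a cut vertex of the remaining set; a careful case analysis shows no legal continuation exists. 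Similarly, a block containing three cut vertices creates a branching in the block-cut tree that cannot be traversed while keeping both sides connected.

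For part~(2) the difference is that we now demand \emph{every} play ends with all vertices taken, not merely that one such play exists, so the condition must be strengthened. The natural strengthening is that no block with at least three vertices may contain two cut vertices, because such a block offers the players a genuine \emph{choice} of which private vertices to take first, and a bad choice (taking an interior vertex of the block before having a clear path to both cut vertices) can strand the remaining graph. I would prove the ``if'' direction by showing that under the strengthened hypothesis, at every position with both sides connected, Lemma~\ref{lemma_2connected} (applied within the current block, or trivially for a block that is a single edge) guarantees an available move for \emph{whatever} has been taken so far, so the game cannot halt prematurely; the uniqueness-of-ordering forced by the hypothesis is what upgrades ``some extension exists'' to ``every play extends.'' For the ``only if'' direction I would exhibit, for a block of size $\ge 3$ with two cut vertices, an explicit premature dead end: take vertices so as to disconnect the block's two cut vertices inside the remaining set while the taken set is still connected, after which no further move respects both conditions.

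The main obstacle I expect is the ``only if'' case analysis, especially verifying that the excluded configurations \emph{really} admit no legal completion (part~1) or admit a stalling play (part~2). The delicate point is bookkeeping the interaction between the two connectivity constraints at a cut vertex: taking a cut vertex merges its role in the taken set but may shatter the remaining set, so one must track exactly when each cut vertex may be consumed. I would handle this by always reasoning at the level of the block-cut tree and reducing to the $2$-connected case via Corollary~\ref{cor_order}, so that the only genuinely new work is controlling the transitions across cut vertices.
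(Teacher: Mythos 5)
Your proposal follows essentially the same route as the paper: both reduce the conditions to the block--cut structure being a path, use Corollary~\ref{cor_order} inside each block for the ``if'' direction of part~(1) and Lemma~\ref{lemma_2connected} for part~(2), and establish necessity by showing that a cut vertex with three branches (or a block with three cut vertices, or a block of size at least three with two cut vertices) leaves some cut vertex permanently unavailable under rule~(R). The only cosmetic difference is that you name the block--cut tree explicitly where the paper just arranges the $2$-connected components into a sequence; the content is the same.
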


\begin{proof}
\noindent (1) First we show that the two conditions are necessary. 

Suppose $v$ is a cut vertex separating $G$ into more than two components. Let $C$ be the component where Alice made her first turn, and let $C_1$ and $C_2$ be two other components. By rule (T), before $v$ is taken, only vertices of $C$ can be taken. Therefore by rule (R), the vertex $v$ never becomes available since it separates $C_1$ from $C_2$ in the remaining graph.

Let $C$ be a $2$-connected component of $G$. Suppose for contradiction that $C$ contains at least three cut vertices of $G$. Then $G-C$ has at least three components $C_1,C_2,C_3$. Let $v_i$ be the only vertex of $C$ neighboring with $C_i$ ($i=1,2,3$). By rule (T), before any vertex of $C$ is taken, the players can take vertices from at most one of the components $C_1,C_2,C_3$, say $C_1$. Before any vertex of $C_2\cup C_3$ is taken, one of the vertices $v_2,v_3$ has to be taken. But these vertices never become available by rule (R), since they both separate $C_2$ from $C_3$.

If both conditions are satisfied, then the $2$-connected components of $G$ can be arranged into a sequence $C_1, C_2, \dots, C_k$, where for each $i=1,2,\dots k-1$, the components $C_i$ and $C_{i+1}$ share a cut vertex $v_i$. 
By applying Corollary~\ref{cor_order} for each of the components $C_i$ and the cut vertices it contains, we obtain the following order in which the players can take the vertices:
$$u_{1,1},u_{1,2},\dots,u_{1,n_1},v_1,u_{2,1},u_{2,2},\dots,u_{2,n_2},v_2,\dots,v_{k-1},u_{k,1},u_{k,2},\dots,u_{k,n_k},$$ 
where 
\begin{align*}
\{u_{1,1},u_{1,2},\dots,u_{1,n_1},v_1\}&=C_1, \\
\{v_{i-1},u_{i,1},u_{i,2},\dots,u_{i,n_i},v_i\} &= C_i,\ {\rm for}\ i=2,\dots,k-1,\ {\rm and} \\
\{v_{k-1},u_{k,1},u_{k,2},\dots,u_{k,n_k}\} &= C_k.
\end{align*}

\noindent (2) If game TR always ends with all vertices taken, then we may assume that the two conditions from part (1) are satisfied. Suppose that $G$ has a $2$-connected component $C$ with at least $3$ vertices and with two cut vertices of $G$: a vertex $v_1$ separating $C$ from $C_1$ and $v_2$ separating $C$ from $C_2$. If Alice starts with taking a vertex from $C\setminus\{v_1,v_2\}$, then neither of the vertices $v_1, v_2$ becomes available during game TR.

If both conditions are satisfied, then the graph $G$ is $2$-connected or it is a union of two $2$-connected subgraphs $H, H'$ (including the degenerate cases when $H$ or $H'$ has only one vertex) and a path $P=v_1,v_2, \dots,v_k$ of cut vertices where $P\cap H=\{v_1\}$ and $P\cap H'=\{v_k\}$. By rule (R), Alice has to start in $H-v_1$ or $H'-v_k$ (or in $v_1$ or $v_k$ in the degenerate cases). Suppose without loss of generality that she starts with taking a vertex $u$ in $H-v_1$. By rule (R), all vertices of $H-v_1$ have to be taken before $v_1$. By Lemma~\ref{lemma_2connected} applied to the set $V(H)$ and the vertex $v_1$, all the vertices of $H$ will be indeed taken. After that the players have to take the vertices of the path $P$ sequentially from $v_1$ to $v_k$. If $H'$ has at least three vertices, then by Lemma~\ref{lemma_2connected}, also all the vertices of $H'$ will be taken before the game ends.
\end{proof}


\subsection{Proof of Theorem~\ref{veta_kconnected}}\label{sub_2_2}

In games T and TR, Bob can choose the following $k$-connected graph with an even number of vertices (for any given $k\ge 2$):
Take a large even cycle and replace each vertex in it by a $2\lceil k/4\rceil$-clique and each edge by a complete bipartite graph $K_{2\lceil k/4\rceil,2\lceil k/4\rceil}$.
Assign weight $1$ to one vertex in every other $2\lceil k/4\rceil$-clique, and weight $0$
to all the other vertices of the graph. See Figure~\ref{obr_kconnected}, left. Bob uses the following strategy.
\begin{enumerate}
\item[1)] Take an availabe vertex of weight $1$. 
\item[2)] If no vertex of weight $1$ is available, take an available vertex from one of the $2\lceil  k/4\rceil$-cliques where at least one vertex has already been taken. 
\end{enumerate}
It is easy to see that by this strategy Bob makes sure that Alice takes at most one vertex of weight $1$, and only in her first or second turn.

In game R, Bob can choose the following $k$-connected (bipartite) graph $G$ with an odd number of vertices. The vertex set is a disjoint union of sets $X,Y$ and $Z$ where $Y$ is an $m$-element set for some large $m\ge k+2$, $Z$ is the set of all $k$-element subsets of $Y$ and $X$ is a set of $|Y|+|Z|+2$ or $|Y|+|Z|+3$ elements so that the total number of vertices is odd. The edge set of $G$ consists of all edges between $X$ and $Y$ and all edges that connect a vertex $z \in Z$ with each of the $k$ vertices $y \in Y$ such that $y\in z$. Each vertex from $Y$ has weight $1$, all the other vertices have weight $0$. See Figure~\ref{obr_kconnected}, right.

\begin{figure}
\begin{center}
\includegraphics[scale=1]{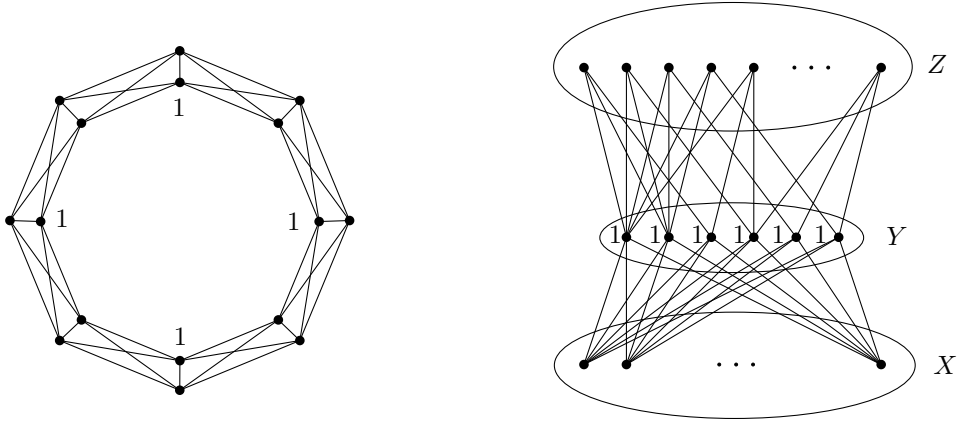}
\end{center}
\caption{Left: a $4$-connected graph with an even number of vertices where Alice gets at most one vertex of weight $1$ in games T and TR. Right: a $3$-connected graph with an odd number of vertices where Alice gets at most one vertex of weight $1$ in game R; Vertices with no label have weight $0$.}

\label{obr_kconnected}
\end{figure}

Consider a position in the game. Let $H$ be a subgraph of $G$ induced by the remaining vertices. A vertex $v \in H$ is {\em available\/} if it is not a cut vertex of $H$ (equivalently, $v$ can be taken in the next turn). A {\em leaf\/} is a vertex in $H$ of degree one. 

\begin{claim}
Bob can force Alice to get at most $\lfloor k/2\rfloor$ vertices of weight $1$.
\end{claim}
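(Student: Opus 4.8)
My plan is to understand exactly which moves are legal in game $R$ on $G$ and then give Bob a strategy that locks Alice out of the weight-$1$ vertices, which are precisely the vertices of $Y$. Throughout, let $H$ denote the subgraph induced by the remaining vertices, let $D$ be the set of vertices of $Y$ already taken, and put $d=|D|$. I would first record two structural facts. Every vertex $z\in Z$ is always available: its neighbours lie in $Y$, and all remaining vertices of $Y$ are mutually connected through the remaining vertices of $X$, so $z$ is never a cut vertex of $H$. A vertex $y\in Y$ is available if and only if it is not \emph{locked}, where I call $y$ locked if some remaining $z\in Z$ has $y$ as its unique remaining neighbour in $Y$; as long as at least two vertices of $Y$ remain, this is the only way $y$ can be a cut vertex. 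Since a locking $z$ must satisfy $z\setminus\{y\}\subseteq D$, no vertex of $Y$ can be locked while $d<k-1$.

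This suggests splitting the game into a \emph{free phase} ($d<k-1$) and a \emph{locked phase} ($d\ge k-1$), and letting Bob follow the rule: take an available vertex of $Y$ if one exists, and otherwise take a \emph{safe} weight-$0$ vertex, i.e.\ one whose removal keeps $H$ connected and unlocks no vertex of $Y$ (for instance a vertex of $X$). In the free phase no vertex of $Y$ is locked, so Bob takes a vertex of $Y$ on each of his turns. Since $d$ grows by at most one per move and first reaches $k-1$ after exactly $k-1$ vertices of $Y$ have been taken, a short counting argument (using that Bob takes a vertex of $Y$ on every one of his turns) shows that Alice can take at most $\lceil(k-1)/2\rceil=\lfloor k/2\rfloor$ of these first $k-1$ vertices of $Y$.

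The core of the argument is the locked phase, where I would maintain the invariant that at the start of each of Alice's turns every remaining vertex of $Y$ is locked, so that Alice can never take a weight-$1$ vertex there. The maintenance step is clean: one move of Alice can unlock at most one vertex of $Y$, because a single $z\in Z$ dangles on a unique vertex, and Bob then immediately takes this newly available $y$. The key point is that taking a vertex of $Y$ never unlocks another vertex of $Y$, since a $z$ locking $y'$ cannot contain the taken vertex; hence once the invariant holds it holds for the rest of the game, and Alice's total is at most the $\lfloor k/2\rfloor$ vertices from the free phase.

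The main obstacle is to establish the invariant exactly at the transition $d=k-1$ and to guarantee that Bob always has a legal response. For the transition I would have Bob, on each of his free-phase turns, take a vertex of $Y$ lying outside every vertex of $Z$ removed so far; as the free phase lasts at most $2(k-1)$ moves, only $O(k^2)$ vertices of $Y$ are excluded, so such a choice exists for $m$ large, and it forces that when $d$ first equals $k-1$ no removed $z$ has the form $D\cup\{y\}$ with $y$ remaining, i.e.\ every remaining vertex of $Y$ is indeed locked. The more delicate issue is that Bob must never be forced to surrender a weight-$1$ vertex: when no vertex of $Y$ is available he needs a safe weight-$0$ move, which exists as long as at least two vertices of $X$ remain. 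I expect to close this using the two features deliberately built into the construction, namely that $X$ is larger than $Y\cup Z$ and that $|V(G)|$ is odd, to show that Bob never exhausts his safe moves and, in the final moves, always claims any surviving weight-$1$ vertex before Alice. This endgame bookkeeping is the step I expect to be the most technical.
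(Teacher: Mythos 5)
Your free-phase analysis is sound and is a genuinely different route to the bound on Alice's early gain than the paper's: the paper lets Bob grab \emph{any} available vertex of $Y$ and instead counts the vertices of $Z$ that become fully surrounded, obtaining $\binom{a+b}{k}\le b-a+1$ and hence $2a\le k$; your variant, in which Bob additionally avoids the $k$-sets already removed, buys a clean transition to a state where every remaining weight-$1$ vertex is locked (at the cost of needing $m=\Omega(k^2)$ rather than $m\ge k+2$, which the construction permits). The locked-phase invariant, and the observation that a single turn unlocks at most one vertex of $Y$ which Bob then reclaims, also match the paper's reasoning.

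The gap is exactly the step you flag as ``endgame bookkeeping'', and it is not routine: you must show that Bob always has a \emph{safe} weight-$0$ move, and vertices of $X$ do not suffice for this. Since $|X|=|Y|+|Z|+2$ or $+3$ is only about half of the graph, $X$ is exhausted (or reduced to a single cut vertex) well before $Y\cup Z$ is; after that, every remaining vertex of $Z$ may be the unique pendant locker of its vertex of $Y$, and then any legal move of Bob's unlocks a weight-$1$ vertex for Alice. The paper's proof is built around precisely this difficulty: Bob's rule 2) spends his idle turns on \emph{redundant} vertices of $Z$ (non-leaves, or leaves whose $Y$-neighbour has another pendant leaf), which preserves the invariant while keeping $X$ in reserve, and drives the game into a canonical third-phase position ($X'$ complete to $Y'$, a perfect matching between $Y'$ and $Z'$, $|X'|$ positive and even, Bob to move); there a mirroring strategy combined with the odd total number of vertices forces Alice, not Bob, to be the one who must break a pendant pair. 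Without an argument of this kind your invariant can fail in the late game, so the claim is not yet proved.
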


\begin{proof}
Bob has the following strategy.
\begin{enumerate}
\item[1)] If possible, he takes an available vertex of $Y$. 
\item[2)] If no vertex of $Y$ is available, then if possible, he takes either a vertex of $Z$ that is not a leaf or a leaf vertex of $Z$ whose neighbor in $Y$ neighbors at least one other leaf in $Z$.
\item[3)] If neither 1) nor 2) applies, he takes any vertex of $X$.
\end{enumerate}


There are three phases of the game. The $i$th phase lasts as long as Bob acts only according to the rules with numbers at most $i$.

During the first phase Bob takes only vertices from $Y$.
Let $a$ and $b$ be the numbers of vertices from $Y$ taken in the first phase
by Alice and by Bob, respectively. Since the first phase ends by Alice's
turn, Alice takes exactly $b+1-a$ vertices of $X\cup Z$ in the first
phase. Due to rules (R) and 1), all the vertices of $Z$ whose all $k$
neighbors are taken in the first phase must be taken by Alice in the
first phase. There are ${a+b\choose k}$ such vertices.
It follows that ${a+b\choose k}\le b-a+1$. Equivalently,
$$2a\le a+b-{a+b\choose k}+1.$$
If $a+b\in\{k-1,k\}$ then the right-hand side equals $k$; otherwise it is
smaller than $k$. Thus, $2a\le k$. It means that Alice takes at most
$\lfloor k/2\rfloor$ vertices of weight $1$ in the first phase.
We further show that Alice takes only vertices of weight $0$ in the
other two phases.


At the beginning of the second phase each remaining $y \in Y$ neighbors at least one leaf in $Z$. During the second phase this property is preserved because in each turn at most one vertex from $Y$ may become available. Note that only Alice can make a vertex $y$ in $Y$ available, so Bob will take such $y$ in the following turn. Thus Alice is forced to take vertices only from $X \cup Z$ during the second phase. 

As $X$ has at least two more vertices than $Y\cup Z$, at some point the third phase has to start. At the beginning of the third phase the remaining vertex set is a union of three sets $X',Y'$ and $Z'$ that are subsets of $X,Y$ and $Z$, respectively. There is a complete bipartite graph on $(X', Y')$ and there is a matching between $Y'$ and $Z'$. The first turn in the third phase is Bob's. He takes a vertex from $X'$ (there is an available vertex in $X'$ since the size of $X'$ is positive and even). Alice may take vertices from $X' \cup Z'$. Whenever Alice takes a vertex of $X'$, then in the consecutive turn Bob will do the same. Whenever Alice takes a vertex from $Z'$, then in the consecutive turn Bob will take the available vertex from $Y'$. This implies that either the entire $Y' \cup Z'$ is taken in the previously described way or the graph $G$ transforms into a star centered in the only remaining vertex of $X'$ with each edge subdivided by a vertex in $Y'$, and with each leaf in $Z'$. As the original graph $G$ had odd number of vertices, the next turn is Alice's and consequently Bob collects all the remaining vertices of $Y$. The claim follows.
\end{proof}

Now we show the constructions of graphs with the parities opposite to those in the proof above.

For game T and game TR and for every $k\ge 1$, we can construct a $(2k+1)$-connected graph $H_{n,k}$ with an odd number of vertices starting from the graph $H_n$ described by Micek and Walczak~\cite[Example 2.2]{parity_sharing}, replacing each vertex of weight $0$ by $2k+1$ vertices of weight $0$ forming a clique, and replacing each original edge by a complete bipartite graph. 
The graph $H_n$ consists of vertices $a_1, a_2, \dots, a_n$ of weight $1$, vertices $b_1, b_2, \dots, b_n$ of weight $0$, and a vertex $c_S$ of weight $0$ for every non-empty subset $S\subseteq \{1,2,\dots, n\}$. Each $a_i$ is joined by an edge to $b_i$ and each $c_S$ is joined to all $b_i$ such that $i\in S$. For the graph $H_{n,k}$ Bob has a strategy to take all but one vertex of weight $1$, analogous to the strategy for $H_n$~\cite{parity_sharing}.

For game R, for every $k\ge 1$ we construct a $k$-connected weighted graph $G'_{n,k}$ with $n + {n\choose k}$ vertices (we may assume that $n=2^m>k$ for some positive integer $m$ so that the total number of vertices is even).
The construction generalizes the graph $G'_n$~\cite[Example 5.2]{parity_sharing} consisting of a clique of $n$ vertices of weight $1$, with a leaf of weight $0$ attached to each vertex of the clique. 
The graph $G'_{n,k}$ consists of a clique on $n$ vertices $a_1, a_2, \dots, a_n$ of weight $1$ and a vertex $b_S$ of weight $0$ for each $k$-element subset $S \subseteq \{1,2, \dots, n\}$. The vertex $b_S$ is connected by an edge to all $k$ vertices $a_i$ such that $i \in S$.

Alice can collect at most $\lfloor k/2 \rfloor + 1$ vertices of weight $1$ if Bob plays as follows. 
\begin{enumerate}
\item[1)] If possible, Bob takes a vertex of weight $1$. 
\item[2)] Otherwise he takes a vertex of weight $0$ which is not a unique leaf neighbor of some vertex $a_i$ (in the graph induced by the remaining vertices). 
\end{enumerate}
Bob can always play according to one of these two rules because the number of remaining vertices before his turn is odd. It follows that Bob plays in such a way that no vertex of weight $1$ becomes available after his turn, except the turn after which only two vertices remain.
By the same argument as in the proof of Theorem~\ref{veta_kconnected}, Alice takes at most $\lfloor k/2 \rfloor$ vertices of weight $1$ at the beginning of the game, while Bob plays only by rule 1). Then she takes at most one more vertex of weight $1$ in her last turn.


\subsection{Proof of Theorem~\ref{veta_BE}}\label{sub_2_3}

Let $\varepsilon>0$ be fixed. We construct a weighted even graph $G$ as follows. The graph $G$ consists of a graph $H$ with all vertices of weight 1, and one leaf of weight 0 connected to each vertex of $H$. In a similar construction of the graph $G'_n$ by Micek and Walczak~\cite[Example 5.2]{parity_sharing}, $H$ was a complete graph. Here we take as $H$ a much sparser graph on $n$ vertices, which is still a ``good expander'' in the following sense: the complement of $H$ contains no complete bipartite subgraph $K_{\lfloor\varepsilon n\rfloor, \lfloor\varepsilon n\rfloor}$. In addition, $H$ is a graph with girth $\Omega(\log n)$ and we may also require $H$ to have a bounded maximum degree.
We show that such a graph $H$ exists using probabilistic method.

In the rest of this section we omit the explicit rounding in the expressions involving $\varepsilon n$, to keep the notation simple.

\begin{lemma}\label{lemma_nahodne}
Let $\varepsilon\in (0,1)$, $c=2\log(3e/\varepsilon)/\varepsilon$ and $c'= 1/(2\log(3c))$. There exists $n_0(\varepsilon)$ such that for every $n>n_0(\varepsilon)$ there exists a connected graph $H$ with $n$ vertices of girth at least $c'\log{n}$ and of maximum degree at most $9c$, whose complement contains no copy of $K_{\varepsilon n,\varepsilon n}$. 
\end{lemma}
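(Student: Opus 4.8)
The plan is to use the probabilistic method, specifically a random graph $H=G(n,p)$ with a carefully chosen edge probability $p$, and then delete a small number of vertices to destroy the two ``bad'' features (short cycles and high-degree vertices) that occur only rarely. I would set $p=c/n$ for the constant $c=2\log(3e/\varepsilon)/\varepsilon$ given in the statement, so that the expected degree is the constant $c$. The three properties to control are: (i) the complement of $H$ contains no $K_{\varepsilon n,\varepsilon n}$, equivalently every pair of disjoint $\varepsilon n$-sets spans at least one edge of $H$; (ii) the girth is at least $c'\log n$; and (iii) the maximum degree is at most $9c$. Property (i) is the only one that genuinely wants the edge probability to be large, while (ii) and (iii) want it small, so the whole argument is a balancing act around $p=c/n$.

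For property (i), I would estimate the probability that some fixed pair of disjoint $\varepsilon n$-subsets $A,B$ has \emph{no} $H$-edge between them: this is $(1-p)^{|A||B|}=(1-p)^{(\varepsilon n)^2}\le \exp(-p\varepsilon^2 n^2)=\exp(-c\varepsilon^2 n)$. A union bound over all such pairs costs a factor of at most $\binom{n}{\varepsilon n}^2\le (e/\varepsilon)^{2\varepsilon n}=\exp(2\varepsilon n\log(e/\varepsilon))$. The choice $c=2\log(3e/\varepsilon)/\varepsilon$ is exactly what makes the exponent $-c\varepsilon^2 n + 2\varepsilon n\log(e/\varepsilon)$ negative and linear in $n$, so this bad event has probability tending to $0$ (in fact exponentially small). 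This shows that with high probability the complement of $H$ contains no $K_{\varepsilon n,\varepsilon n}$, and crucially this property is \emph{monotone}: it survives the deletion of vertices only if we delete few enough, so I would keep the deletion count $o(\varepsilon n)$.

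For properties (ii) and (iii) I would argue that the violations are rare and can be cleaned up by local deletions. For the girth: the expected number of cycles of length at most $\ell=c'\log n$ in $G(n,c/n)$ is $\sum_{i=3}^{\ell} \frac{n^i}{2i}p^i \le \sum_{i=3}^{\ell}\frac{c^i}{2i}\le \tfrac12\sum_{i\le\ell}c^i$, which is polynomial in $n$ of degree roughly $c'\log c\cdot \log n$; the choice $c'=1/(2\log(3c))$ makes this $o(n)$, indeed $O(n^{1/2})$ or so. By Markov's inequality the number of short cycles is $o(n)$ with high probability, and deleting one vertex from each kills them all, removing only $o(n)$ vertices. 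For the maximum degree: each degree is $\mathrm{Bin}(n-1,c/n)$, concentrated around mean $c$, and a standard Chernoff bound gives $\Pr[\deg(v)>9c]\le e^{-\Omega(c)}$ per vertex; the expected number of vertices of degree exceeding $9c$ is $o(n)$, so again deleting them removes only $o(n)$ vertices. The main technical obstacle — and the step I would be most careful about — is coordinating these deletions: I must verify that the total number of deleted vertices is $o(\varepsilon n)$ so that property (i) is preserved (shrinking each $\varepsilon n$-class only slightly), and that deletion cannot \emph{create} short cycles or raise degrees, which it cannot since both are monotone under taking subgraphs. Finally, connectivity is arranged by restricting to the largest connected component (which in $G(n,c/n)$ with $c>1$ has linear size) or by adding a spanning tree on $o(n)$ extra edges that does not introduce short cycles; a clean route is to note that after deletions a giant component of size $(1-o(1))n$ remains and satisfies all three properties, so I take $H$ to be that component, relabel, and absorb the $o(n)$ loss into the choice of $n_0(\varepsilon)$.
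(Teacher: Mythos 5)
Your overall plan (random graph with edge probability $c/n$, first-moment bounds on short cycles, high-degree vertices and complements of bicliques, then deletion plus a connectivity patch) is exactly the paper's, but two of your quantitative claims are false as stated, and the whole cleanup step rests on them. First, the expected number of vertices of degree exceeding $9c$ in $G(n,c/n)$ is \emph{not} $o(n)$: for fixed $\varepsilon$ (hence fixed $c$) the degree of a vertex converges to a Poisson law of mean $\approx c$, so $\Pr[\deg(v)>9c]$ tends to a positive constant and the expected count is $\Theta(n)$. Second, the complement of the giant component of $G(n,c/n)$ also has size $\Theta(n)$ (roughly $e^{-c}n$), not $o(n)$. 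Both constants happen to be exponentially small in $c$ and hence far below $\varepsilon$, so the argument is repairable, but then you must actually carry out a further computation you have only gestured at: after deleting a $\delta$-fraction of vertices the graph has $n''=(1-\delta)n$ vertices and you need its complement to contain no $K_{\varepsilon n'',\varepsilon n''}$ — a \emph{smaller}, hence easier-to-find, biclique than the $K_{\varepsilon n,\varepsilon n}$ you excluded. Monotonicity goes the wrong way here; you would have to rerun the union bound with $\varepsilon'=\varepsilon n''/n$ and check that $c=2\log(3e/\varepsilon)/\varepsilon$ still wins, which forces $\delta\lesssim 1/\log(1/\varepsilon)$. You also never end up with exactly $n$ vertices, as the lemma requires.

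The paper sidesteps all of this bookkeeping with one clean device: sample $H'\sim G(3n,\,c/n)$ on $3n$ vertices (note the probability is still $c/n$, not $c/(3n)$), show the expected numbers of short cycles, of high-degree vertices, and of bicliques $K_{\varepsilon n,\varepsilon n}$ in the complement are respectively $o(n)$, at most $n$, and at most $1$, and then delete $2n$ vertices to land on exactly $n$ vertices. Since the forbidden biclique size $\varepsilon n$ is defined in terms of the \emph{final} vertex count, passing to an induced subgraph only helps, and a budget of $2n$ deletions comfortably absorbs the constant fraction of high-degree vertices. Finally, connectivity is restored not by passing to the giant component but by adding a few edges between components — such edges cannot create any cycle, so the girth is preserved (at the cost of letting the maximum degree grow to $9c+1$). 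I would recommend adopting this ``oversample and prune'' formulation; your version can be made to work, but only after replacing the two incorrect $o(n)$ claims by explicit $e^{-\Omega(c)}n$ bounds and redoing the biclique union bound for the shrunken parameter.
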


\begin{proof}
Let $H'$ be a random graph from the probability space $G(3n, c/n)$. That is, $H'$ is a subgraph of  $K_{3n}$ where every edge is taken independently with probability $c/n$. First we show that the  expected number of cycles of length at most $c'\log n$ in $H'$ is less than $n$. The number of cycles  of length $k$ in the complete graph with $3n$ vertices is ${3n \choose k} k! / (2k) \le (3n)^k$. A  cycle of length $k$ appears in $H'$ with probability $(c/n)^k$. Hence, the expected number of cycles of  length $k$ is at most $(c/n)^k (3n)^k = (3c)^k$. Therefore, the expected number of cycles of length at  most $c'\log n$ in $H'$ is at most
$$\sum_{k=3}^{c'\log n} (3c)^k \le (3c)^{c'\log n+1} \le 3cn^{c'\log (3c)} = 3cn^{1/2} = o(n).$$

The expected number of subgraphs $K_{\varepsilon n,\varepsilon n}$ in the complement of $H'$ is at most 
$$\left(1-\frac{c}{n}\right)^{(\varepsilon n)^2} {3n \choose 2\varepsilon n}{2\varepsilon n\choose  \varepsilon n} \le 
e^{-c\varepsilon^2 n} (3n)^{2\varepsilon n}(e/\varepsilon n)^{2\varepsilon n}$$
$$\le e^{-c\varepsilon^2 n} (3e/\varepsilon)^{2\varepsilon n} \le e^{\varepsilon  n(-c\varepsilon+2\log(3e/\varepsilon))} = 1.$$

The expected average degree of $H'$ is smaller than $3c$ but the expected maximum degree of $H'$ is  unbounded. However, the number of vertices of large degree is small. The probability that a given  vertex has degree larger than $9c$ is at most 
$${3n \choose 9c}\left(\frac{c}{n}\right)^{9c}\le  (3n)^{9c}\left(\frac{e}{9c}\right)^{9c}\left(\frac{c}{n}\right)^{9c}\le\left(\frac{e}{3}\right)^{9c}<\frac{1}{3}.$$

(This is also a direct consequence of Markov's inequality).
The expected number of vertices of degree larger than $9c$ is thus at most $n$. 

It follows that there exists a graph with $3n$ vertices such that by deleting some $2n$ vertices, we  obtain a graph $H$ with $n$ vertices, with maximum degree at most $9c$, with no cycle shorter than  $c'\log n$, and with no $K_{\varepsilon n, \varepsilon n}$ in the complement. 

In case the graph $H$ is not connected, we add the necessary edges connecting different components of  $H$ to make $H$ connected, in such a way that the maximum degree of $H$ does not exceed $9c+1$.
\end{proof}

Let $\varepsilon>0$ be fixed. By taking a graph $H$ from Lemma~\ref{lemma_nahodne} for every $n>n_0(\varepsilon)$ and attaching a leaf to each vertex of $H$, we get an infinite class $\mathcal{G}_{\varepsilon}$ of graphs with bounded maximum degree, thus with bounded expansion. Note that to obtain a class of bounded expansion it is not necessary to delete vertices of high degree as in the proof of Lemma~\ref{lemma_nahodne}, since a.a.s. the random graphs $G(n,c/n)$ form a class with bounded expansion~\cite{BE}.

The class $\mathcal{G}$ is constructed as follows. For every $\varepsilon>0$, let $c,c'$ and $n_0(\varepsilon)$ be as in Lemma~\ref{lemma_nahodne}. Let $n>n_0(\varepsilon)$ be such that $c'\log n > 9c+1$ and let $H=H_{\varepsilon,n}$ be the graph given by Lemma~\ref{lemma_nahodne}. Let $G_{\varepsilon,n}$ be the graph obtained by attaching a leaf to each vertex of $H_{\varepsilon,n}$. Then $\mathcal{G}=\{G_{1/m,n}, m=1,2,\dots\}$ is a class of graphs satisfying $\mathrm{girth}(G)\ge \Delta(G)$ and therefore has bounded local expansion~\cite{BLE}.

Now we show that Alice's gain in game R played on the graph $G$ is bounded by $4\varepsilon$ of the total weight. That is, Alice takes at most $4\varepsilon n$ vertices of $H$ during the game. We need only the property that the complement of $H$ contains no $K_{\varepsilon n, \varepsilon n}$ and that $H$ is connected.

During the game, we call a remaining vertex $v$ of $H$ {\em exposed\/} if its neighboring leaf of weight $0$ has been taken. Let $H_R$ be the remaining subgraph of $H$. Let $B$ be the set of exposed cut vertices of $H_R$.  
Let $K = H_R - B$. Call a component of the graph $K$ {\em large\/} if it has at least $\varepsilon n$ vertices and {\em small\/} otherwise. Observe that $K$ has at most one large component, which we denote by $L$.
Let $B_L \subseteq B$ be the set of exposed cut vertices adjacent to $L$. Let $S=H_R-L-B_L$. Observe that $|B_L| \le |V(S)| < \varepsilon n$.

Bob's strategy is the following.

\begin{enumerate}
\item[1)] If Alice took a vertex of weight 0 neighboring a vertex $v \in L$
in the previous turn, Bob takes $v$, if it is available (that is, if $v$ is not a cut vertex in $H_R$). 

\item[2)] If Alice makes a vertex $v\in B_L$ available (in which case $v$ does not belong to $B$ after her turn), Bob takes $v$.

\item[3)] If neither of the previous rules apply, Bob takes any vertex $v$ such that $v \notin L$, $v$ is not a vertex of weight $0$ neighboring a vertex of $L$, and taking $v$ does not make any vertex of $B_L$ available. In case there is no large component $L$, Bob may take any available vertex.
\end{enumerate}

We claim that Bob's strategy is complete, that is, he can always play by one of the rules. We show this in a series of observations.

\begin{observation}\label{obs_L}
The large component contains at most one exposed vertex, and that can happen only after Alice's turn.
\end{observation}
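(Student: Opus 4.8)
The plan is to prove both halves of the statement at once, by induction on the number of moves, with the invariant that after each of Bob's moves (and at the start) the large component $L$ contains \emph{no} exposed vertex, while after each of Alice's moves it contains \emph{at most one}. The base case is immediate, since no leaf has yet been taken. Throughout I would use the fact that the remaining subgraph $H_R$ of $H$ is connected at every stage: the leaves are pendant, so connectivity of the whole remaining graph (forced by rule (R)) is equivalent to connectivity of $H_R$, and consequently a vertex of $H$ is available exactly when it is exposed and is not a cut vertex of $H_R$.

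The combinatorial heart of the argument I would isolate as a separate claim about the block structure of $H_R$: deleting a single vertex $w$ from the connected graph $H_R$ can turn a cut vertex $u$ into a non-cut vertex only if $\deg_{H_R}(w)=1$ and $u$ is the (unique) neighbor of $w$. To see this, write the components of $H_R-u$ as $P_1,\dots,P_m$ with $m\ge 2$; then $(H_R-w)-u$ has at least $m-1$ components, so it is connected only when $m=2$ and $P_1=\{w\}$. Since a vertex has a unique neighbor for at most one choice of that neighbor, deleting one vertex \emph{un-cuts at most one vertex} of $H_R$; exposing a vertex does not touch $H_R$ and so un-cuts nothing. This is the step I expect to be the main obstacle, because it is exactly what guarantees that a single move of Alice can inject \emph{at most one} exposed vertex into $L$; without it Alice could in principle flood $L$.

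Granting this, the inductive step splits on Alice's move. If Alice exposes a vertex $w$ (takes a leaf), then $H_R$ is unchanged, and the only vertex that can newly sit in $L$ as an exposed vertex is $w$ itself, and only if $w$ was not already a cut vertex (so that it remains in $K$); hence $L$ gains at most one exposed vertex, and in that case Bob takes $w$ by rule~1, which is legal because $w\in K$ together with $w$ exposed forces $w$ to be non-cut, i.e.\ available. If instead Alice takes an available vertex $w$ of $H$, then $w$ is exposed, so by the invariant $w\notin L$; by the structural claim deleting $w$ un-cuts at most one vertex $u$, and only such a $u$ can enter $K$ as an exposed vertex. If this $u$ lies in $B_L$ and joins $L$, it is the unique new exposed vertex of $L$, and Bob removes it by rule~2. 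In every remaining situation $L$ stays clean and Bob plays rule~3.

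Finally I would verify that Bob's cleanup never recreates the problem. When Bob removes the unique exposed vertex $x$ of $L$ (by rule~1 or~2), the structural claim says the only vertex this deletion can un-cut is the unique neighbor $y$ of $x$ in the case $\deg_{H_R}(x)=1$; but then $y\in K$, since otherwise $x$ would be isolated in $K$ and form a singleton component, contradicting $x\in L$. A vertex already in $K$ is not exposed-and-cut, so un-cutting it introduces no exposed vertex into $L$. A rule-3 move is by definition chosen so as neither to expose a vertex of $L$ nor to make a vertex of $B_L$ available, hence it also leaves $L$ clean. This closes the induction and establishes both that $L$ carries at most one exposed vertex and that such a vertex is present only immediately after Alice's turn.
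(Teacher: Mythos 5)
Your proof is correct and its engine is the same as the paper's: Bob's rules never expose a vertex of $L$, and rule 1) immediately removes the single vertex Alice can expose per turn. The paper dispatches the observation in exactly those two sentences, while you run a full induction and, more substantially, isolate a block-structure lemma that the paper never states: deleting one vertex of the connected graph $H_R$ un-cuts at most one cut vertex, and only when the deleted vertex is pendant with that cut vertex as its unique neighbor. That lemma is overkill for the leaf-taking route into $L$ (Alice takes one leaf per turn, so at most one vertex of $L$ is exposed per turn, full stop), but it is exactly what the paper silently relies on elsewhere: it is what makes rule 2) well defined (at most one vertex of $B_L$ can become available per turn, cf.\ Observation~\ref{obs_B_L}), and it underlies the later claim that the large component can expand only by the single vertex $v\in B_L$. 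So your longer route buys rigor that the paper defers to, or omits from, the surrounding observations.

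One imprecision, which you share with the paper: when the un-cut vertex $u\in B_L$ rejoins $K$, the new component of $K$ containing $u$ consists of $u$ together with \emph{all} former components of $K$ adjacent to $u$, not just $L\cup\{u\}$; a swallowed small component could in principle carry several exposed vertices, so the assertion that $u$ ``is the unique new exposed vertex of $L$'' needs a word of justification. This does not damage what the observation is actually used for: Bob takes $u$ by rule 2) in the very next turn, the merge is undone, and the large component is clean whenever Alice is on turn, which is all that Corollary~\ref{cor_A_not_L} requires.
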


\begin{proof}
This is a direct consequence of Bob's strategy: he is not allowed to expose a vertex of $L$, and by rule 1) he immediately takes a vertex of $L$ exposed by Alice.
\end{proof}

\begin{corollary}\label{cor_A_not_L}
Alice never takes a vertex of $L$.
\qed
\end{corollary}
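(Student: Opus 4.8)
The plan is to deduce the corollary directly from Observation~\ref{obs_L}, exploiting the leaf structure of $G$. Recall that in $G$ every vertex of $H$ carries a private weight-$0$ leaf, and that in game R a vertex may be taken only if its removal keeps the remaining graph connected.

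First I would isolate an availability criterion for vertices of $H$. Suppose $v\in H$ is not exposed. Then its leaf $\ell_v$ is still present in the remaining graph, and $v$ is the unique neighbor of $\ell_v$. Hence removing $v$ isolates $\ell_v$, so $v$ is a cut vertex and is unavailable, unless the remaining graph consists solely of the pair $\{v,\ell_v\}$. In short, a non-exposed vertex of $H$ can be taken only in the degenerate endgame with exactly two remaining vertices.

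Next I would apply Observation~\ref{obs_L}. Fix any turn of Alice and consider the state just before she plays, that is, the state after Bob's preceding move (or the initial position). Since an exposed vertex of $L$ can occur only immediately after one of Alice's own moves, in this state $L$ contains no exposed vertex; equivalently, every vertex of $L$ still carries its leaf at the moment Alice is to move.

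Finally I would combine the two steps. As $L$ is large we have $|V(L)|\ge \varepsilon n\ge 2$, so for any $v\in L$ the remaining graph contains strictly more than $\{v,\ell_v\}$. By the availability criterion, $v$ is then a cut vertex and cannot be taken, so Alice has no legal move into $L$, which is exactly the claim. I expect no genuine obstacle here: the argument is immediate once Observation~\ref{obs_L} is in hand, and the only point needing a word of care is excluding the two-vertex endgame, which the largeness of $L$ rules out at once.
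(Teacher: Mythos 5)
Your argument is correct and is exactly the reasoning the paper intends: the corollary is stated with an immediate \qed, relying on the fact that a non-exposed vertex of $H$ is a cut vertex of the remaining graph (its private leaf would be isolated) and that, by Observation~\ref{obs_L}, no vertex of $L$ is exposed when Alice is to move. Your explicit handling of the two-vertex endgame, excluded since $|V(L)|\ge\varepsilon n\ge 2$, is a detail the paper leaves implicit but changes nothing in substance.
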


\begin{observation}\label{obs_B_L}
A vertex of $B_L$ can become available only after Alice's turn and Bob takes it immediately in his next turn.
\end{observation}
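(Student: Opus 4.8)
The plan is to fix a vertex $v\in B_L$ at the position where it lies in $B_L$ and to follow the single move across which $v$ stops being a cut vertex of $H_R$, i.e.\ becomes available; I want to show this move must be Alice's and that Bob then removes $v$ at once. The structural object I would use is the decomposition of $H_R-v$ into connected components. Since $v$ is adjacent to $L$ but $v\notin L$, all of $L$ sits in one component of $H_R-v$, the \emph{$L$-side} $D$; let $F$ be the union of the remaining components. As $v$ is a cut vertex, $F\neq\emptyset$, and no vertex of $F$ is adjacent to $L$ (such an edge would merge two components of $H_R-v$), so $F\subseteq V(S)$, whereas $D\supseteq L$ has at least $\varepsilon n$ vertices by the definition of a large component.

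First I would rule out that any move of Bob turns $v$ available. A move by rule 3) cannot do so, by the very definition of the rule. In a rule 1) move Bob removes a vertex $u\in L\subseteq D$, and in a rule 2) move he removes a vertex $w\neq v$ that was in $B_L$ and is therefore adjacent to $L$, hence also in $D$. So in both cases Bob deletes one vertex $b$ of the $L$-side and leaves $F$ intact. Then $H_R-\{v,b\}$ still splits into the nonempty set $F$ and the set $D\setminus\{b\}$, which is nonempty because $|D|\ge|L|\ge\varepsilon n>1$; these two lie in different components, so $v$ is still a cut vertex. This proves that no vertex of $B_L$ can become available on Bob's turn.

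Finally I would treat Alice's move. Removing a single vertex $a$ makes $v$ non-cut only if it empties $F$, that is only if $F=\{a\}$; then $a$ is isolated in $H_R-v$, so its unique neighbour in $H_R$ is $v$ and $a$ is a (weight-$1$) leaf of $H_R$ hanging off $v$. Two things follow. Since such an $a$ is attached only to $v$, it cannot at the same time be the far side of a second vertex of $B_L$, so Alice's move makes \emph{at most one} vertex of $B_L$ available and Bob's single reply is enough. Moreover, because taking a weight-$0$ pendant leaf of $G$ does not change $H_R$ at all, the move that un-cuts $v$ is the removal of the weight-$1$ vertex $a$; hence Bob's rule 1), which reacts only to weight-$0$ vertices, does not apply, rule 2) fires, and Bob takes $v$ immediately. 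The delicate point, and the step I expect to need the most care, is exactly this last bookkeeping: checking that un-cutting a $B_L$ vertex forces Alice to play inside $H_R$ and un-cuts at most one such vertex, so that the phrase ``Bob takes it immediately'' is well defined and not pre-empted by rule 1).
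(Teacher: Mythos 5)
Your argument is correct and follows essentially the same route as the paper's proof: a case analysis of Bob's three rules (rules 1) and 2) only delete vertices on the $L$-side of a cut vertex $v\in B_L$, and rule 3) forbids the bad event outright), combined with the observation that a single deletion by Alice can un-cut $v$ only by removing the unique vertex of the non-$L$ side of $H_R-v$. Your additional bookkeeping --- that at most one vertex of $B_L$ can become available in one turn, and that rule 1) cannot pre-empt rule 2) because the un-cutting move removes a weight-$1$ vertex --- spells out details the paper's terser proof leaves implicit.
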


\begin{proof}
By following rules 1) and 2) Bob takes a vertex from $L$ or $B_L$ and thus cannot make a vertex of $B_L$ available. Rule 3)  explicitly forbids making a vertex of $B_L$ available. Alice can make a vertex of $B_L$ available only by taking its neighbor from $H$. For Bob's next turn only rule 2) applies and the observation follows.
\end{proof}

\begin{corollary}\label{cor_A_not_B_L}
Alice never takes a vertex of $B_L$.
\qed
\end{corollary}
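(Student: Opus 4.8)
The plan is to derive this immediately from Observation~\ref{obs_B_L}, since the corollary is essentially a logical restatement of the conclusion already established there. The key fact to record first is that, by definition, every vertex of $B_L$ is an exposed \emph{cut vertex} of $H_R$, and hence is not available in the sense defined earlier: it cannot be taken in the next turn. Therefore, for Alice ever to take a vertex $v \in B_L$, that vertex $v$ would first have to cease being a cut vertex and become available precisely at a moment when it is Alice's turn to move.

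First I would invoke Observation~\ref{obs_B_L}, which asserts two things: a vertex of $B_L$ can become available only immediately after Alice's turn (i.e. as a consequence of Alice's own move, so that the player on turn at that instant is Bob), and Bob then takes this vertex in his very next turn, by rule 2) of his strategy. Combining these, whenever a vertex of $B_L$ becomes available it is removed by Bob before control returns to Alice. Consequently there is never a moment at which it is simultaneously Alice's turn and some vertex of $B_L$ is available to her; hence Alice never takes a vertex of $B_L$, which is exactly the claim.

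I do not expect any genuine obstacle here. All the substantive work---in particular verifying that Bob, when following rules 1)--3), never himself frees a vertex of $B_L$, and that Alice is the only player who can do so---has already been carried out in the proof of Observation~\ref{obs_B_L}. Thus the corollary follows as a direct logical consequence requiring no further argument, which is why it is stated with the omitted (\emph{qed}) proof.
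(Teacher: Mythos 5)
Your argument is correct and is precisely the reasoning the paper leaves implicit behind the \qed: vertices of $B_L$ are cut vertices of $H_R$ and hence unavailable, and by Observation~\ref{obs_B_L} any such vertex that becomes available does so only on Bob's turn and is taken by Bob before Alice moves again. This matches the paper's intended derivation exactly.
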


\begin{observation}
If Bob has to follow rule 3), there is a vertex $v$ he can take.
\end{observation}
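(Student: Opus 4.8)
The plan is to show that whenever rules 1) and 2) are inapplicable, Bob still has a legal move satisfying the three conditions of rule 3). I begin with the decisive easy case: a present weight-$0$ leaf whose $H$-neighbour does not lie in $L$ is always such a vertex. A leaf is never a cut vertex, so it is available; it is not in $L$ and not adjacent to $L$; and taking it deletes only a pendant, leaving $H_R$ unchanged, so it cannot turn any vertex of $B_L$ into a non-cut vertex. The final clause of rule 3) handles the case when no large component $L$ exists, since the connected graph $H_R$ (together with the remaining leaves) always has an available vertex. Hence I may assume that $L$ exists and that every present leaf is adjacent to $L$, which is equivalent to saying that every remaining vertex of $H$ outside $L$ is exposed.

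Under this assumption the vertices of the small components of $K$ are all exposed, and since every exposed cut vertex of $H_R$ belongs to $B$, none of them is a cut vertex of $H_R$; thus they are available. Being in components of $K=H_R-B$ other than $L$, they are also non-adjacent to $L$, so such a vertex $x$ can fail rule 3) only through its third condition, namely by making some $w\in B_L$ available. The key connectivity observation is that deleting a non-cut vertex $x$ can release a cut vertex $w$ only when $\{x\}$ is a whole component of $H_R-w$: otherwise removing $x$ from the already disconnected graph $H_R-w$ leaves it disconnected, so $w$ stays a cut vertex. Consequently Bob is safe unless every available vertex outside $L$ is a single pendant attached directly to a vertex of $B_L$. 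Working in the block--cut tree of the connected graph $H_R$, I would show that a safe choice exists as soon as there is a leaf block attached through a cut vertex outside $B_L$, a leaf block with at least two non-cut vertices, or a vertex of $B_L$ carrying at least two pendants.

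It then remains to rule out the degenerate configuration in which $H_R$ is exactly $L$ together with pairwise disjoint pendant pairs $(w_i,x_i)$, where each $w_i\in B_L$ is adjacent to $L$ and $x_i$ is its unique pendant neighbour. Here I invoke a parity invariant: the graph $G$ has an even number of vertices and an odd number of them have been taken before each of Bob's turns, so at that moment the number of exposed vertices of $H_R$ is odd (the number of remaining vertices equals $2|H_R|$ minus the number of exposed vertices). In the degenerate configuration this number of exposed vertices equals $2|B_L|$ plus the number of exposed vertices of $L$, forcing $L$ to contain exactly one exposed vertex. By Observation~\ref{obs_L} such a vertex can only have been created on Alice's immediately preceding turn, so rule 1) applies and Bob is not in fact forced to use rule 3)---a contradiction.

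The main obstacle is the bookkeeping in the two middle steps: pinning down the exact lonely-pendant criterion under which a move releases a vertex of $B_L$, and arranging the block--cut-tree case analysis so that the sole surviving configuration is the pendant star, which the parity invariant together with Observation~\ref{obs_L} then eliminates. Checking that deleting a peripheral vertex never disturbs the cut status of a distant vertex of $B_L$, and disposing of the clause with no large component, are the routine supporting details.
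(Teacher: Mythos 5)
Your argument is correct and follows essentially the same route as the paper's: reduce to the situation where every candidate move would release a vertex of $B_L$, show that this forces $H_R$ to be $L$ together with matched pendant pairs attached to $B_L$, and finish with a parity count of the remaining vertices. The only (inessential) difference is in the endgame: the paper uses Observation~\ref{obs_L} to conclude that $L$ has no exposed vertex, so the number of remaining vertices would be the even quantity $2|L|+2|B_L|$, contradicting Bob being on turn, whereas you count exposed vertices, force $L$ to contain one, and contradict the inapplicability of rule 1) --- and your block--cut-tree bookkeeping plays the role of the paper's terse assertion that each available $v$ must be the unique neighbour outside $L$ of the vertex $u\in B_L$ it releases.
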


\begin{proof}
Suppose that Bob is forced to follow rule 3) and that $H_R$ has a large component $L$. By Observation~\ref{obs_L}, none of the vertices of $L$ is exposed. In particular, the total number of vertices in $L$ and the remaining leaves of weight $0$ attached to $L$ is even. If taking every available vertex $v$ makes some vertex $u\in B_L$ available, then $v$ is the only neighbor of $u$ outside $L$. In particular, $|B|=|B_L|=|V(S)|$ and hence the total number of vertices of $H_R$ is even, which is a contradiction with Bob being on turn. Therefore there is an available vertex $v$ satisfying the conditions of rule 3).
\end{proof}


It remains to estimate Alice's gain. To this end, we need an upper bound on the size of the ``latest'' large component. Let $L_i$, $i=0,1,2,\dots, f$, be the large component after Bob's $i$th turn. The number $f$ is chosen as the largest possible.

\begin{observation}
For $i=1,2,\dots, f$, we have $L_i\subseteq L_{i-1}$. There is no large component after $f+1$ or more Bob's turns. 
\end{observation}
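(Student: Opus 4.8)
The plan is to prove both assertions by tracking how the large component evolves during one round (Alice's $i$th move followed by Bob's $i$th move), using the fixed decomposition $V(H_R)=V(L)\sqcup B_L\sqcup V(S)$ together with the fact, immediate from the definitions, that $S$ is \emph{anticomplete} to $L$: every neighbour in $H_R$ of a vertex of $L$ lies in $L\cup B_L$ (a neighbour outside $B$ would lie in the same component of $K=H_R-B$, hence in $L$, and a neighbour in $B$ is by definition in $B_L$). Throughout I also use that by rule (R) the graph $H_R$ stays connected, and that $|V(S)|<\varepsilon n$ with $|B_L|\le|V(S)|$.

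For the first assertion I argue, for $i\le f$ (so that both $L_{i-1}$ and $L_i$ are defined), that no vertex outside $V(L_{i-1})$ can enter the large component. Since vertices are only removed from $H_R$, a vertex can join the large component only by leaving $B$, i.e.\ by ceasing to be a cut vertex of $H_R$ while staying exposed, and then being adjacent to the surviving part of $L$; such a vertex is adjacent to $L_{i-1}$ and hence lies in $B_L^{(i-1)}$. By Corollary~\ref{cor_A_not_L} and Corollary~\ref{cor_A_not_B_L} Alice removes neither a vertex of $L_{i-1}$ nor of $B_L^{(i-1)}$, and by Observation~\ref{obs_B_L} Bob takes immediately any vertex of $B_L$ that becomes available, so no such vertex survives into $K^{(i)}$. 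Here I would record the elementary fact that deleting one vertex $v$ can turn a cut vertex $b$ of $H_R$ into a non-cut vertex only when $v$ is an entire component of $H_R-b$ (a deletion cannot reconnect $H_R-b$, it can only remove a whole component), i.e.\ when $v$ is a pendant neighbour of $b$; hence one Alice move frees at most one vertex of $B_L$ and Bob's single reply suffices. It then remains to exclude $S$-vertices: writing $V(L_i)=\bigl(V(L_i)\cap V(L_{i-1})\bigr)\cup\bigl(V(L_i)\cap V(S^{(i-1)})\bigr)$, the two parts are anticomplete, while $L_i$ is connected and has at least $\varepsilon n>|V(S^{(i-1)})|$ vertices, so the $S$-part is empty and $V(L_i)\subseteq V(L_{i-1})$.

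For the second assertion I would prove the stronger statement that a large component after Bob's $i$th turn forces a large component after his $(i-1)$th turn; non-reappearance then follows by contraposition, propagating ``no large component'' forward from turn $f+1$. Given $L_i$, its $\ge\varepsilon n$ vertices are already present at time $i-1$ and induce a connected subgraph of $H_R^{(i-1)}$. By the pendant remark, the at most two vertices removed this round were the only ones freed from $B^{(i-1)}$, so $V(L_i)$ meets $B^{(i-1)}$ in at most two vertices; deleting these from the connected set $V(L_i)$ leaves pieces each contained in a single component of $K^{(i-1)}$. The point is that these pieces cannot be spread across several \emph{small} components: distinct components of $K^{(i-1)}$ are pairwise anticomplete, so if $V(L_i)$ met small components of too large a total size one could two-colour them into anticomplete sets of size $\ge\varepsilon n$ each, producing a $K_{\varepsilon n,\varepsilon n}$ in the complement of $H$ and contradicting Lemma~\ref{lemma_nahodne}. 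Hence $V(L_i)$ essentially lives in one component of $K^{(i-1)}$, which is therefore large.

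The main obstacle is precisely this bookkeeping for the component structure of $K=H_R-B$ across a single round: both directions reduce to understanding how deleting or un-cutting a bounded number of cut vertices can merge the small components, and the control comes entirely from the expander property (no $K_{\varepsilon n,\varepsilon n}$ in the complement) together with the fact that Bob's rules~1)--3) and Observations~\ref{obs_L} and~\ref{obs_B_L} forbid the large component from gaining a new exposed vertex or re-absorbing a vertex of $B_L$. I expect the delicate case to be a single move that threatens to un-cut several boundary vertices at once; the pendant-component remark above is exactly what keeps this to at most one per move.
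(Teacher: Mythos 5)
Your treatment of the first assertion ($L_i\subseteq L_{i-1}$) is correct and is essentially the paper's argument written out in more detail: the only way $L$ can acquire a vertex is through a vertex of $B_L$ leaving $B$, Observation~\ref{obs_B_L} guarantees Bob removes any such vertex before the next measurement, and the anticompleteness of $S$ to $L$ together with $|V(S)|<\varepsilon n$ rules out absorbing anything from $S$. The pendant-neighbour remark (deleting $v$ can un-cut only the unique neighbour of $v$, and only when $v$ is a leaf of $H_R$) is a clean justification of a step the paper leaves implicit.

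The second assertion is where your argument has a genuine gap. You want to deduce, from the fact that $V(L_i)\setminus B^{(i-1)}$, of size at least $\varepsilon n-2$, is spread over small components of $K^{(i-1)}$, a copy of $K_{\varepsilon n,\varepsilon n}$ in the complement of $H$. But two-colouring pairwise anticomplete sets, each of size less than $\varepsilon n$, into two classes of size at least $\varepsilon n$ each requires total size at least $3\varepsilon n$ (this is exactly the computation in Lemma~\ref{lemma_eps_3eps}); with only $\varepsilon n-2$ vertices available there is nothing to colour. Concretely, all of $V(L_i)\setminus B^{(i-1)}$ may lie inside a \emph{single} component of $K^{(i-1)}$ of size $\varepsilon n-1$, which is small, so your claimed implication ``large at time $i$ forces large at time $i-1$'' does not follow; indeed, as a purely combinatorial statement it fails right at the threshold, since a small component of size $\varepsilon n-1$ becomes large after absorbing one un-cut vertex of $B$. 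Note also that your part~2 uses nothing about Bob's strategy, which is a warning sign. The paper argues in the forward direction instead: a small component can grow only by merging with other small components through a vertex of $B$ that becomes available, and while the large component exists all small components and all of $B\setminus B_L$ lie in $S$ with $|V(S)|<\varepsilon n$, so any such merger stays below $\varepsilon n$ and the unique large component is the monotonically shrinking $L$; this is what makes $L_0\supseteq L_1\supseteq\dots\supseteq L_f$ a well-defined initial run ending at $f$. (Admittedly the paper is itself very terse about what happens after $L_f$ disappears, but the final bound $|T|+|L_f|+|B_{L_f}|$ only needs that everything remaining at that moment lies in $T\cup L_f\cup B_{L_f}$.) To repair your approach you would have to replace the backward induction by this forward containment-in-$S$ argument, or otherwise bring Bob's rules into play.
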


\begin{proof}
The large component expands only when some vertex $v$ of $B_L$ becomes available (and thus $v$ is added to $L$). By Observation~\ref{obs_B_L}, $v$ is taken by Bob in the next turn and thus the large component remains the same as two turns before. Similar observation applies also for small components.
\end{proof}

The large component shrinks after Alice's and Bob's $i$th turns (that is, $L_i\subset L_{i-1}$) if either some of its vertices is taken or if some of its cut vertices becomes exposed. In the first case the size of $L$ drops exactly by $1$, in the second case the size of $L$ can drop by up to $\varepsilon n$. 

Now consider the last large component $L_f$. If some of the vertices of $L_f$ is taken in the next Bob's turn, then $L$ has precisely $\varepsilon n$ vertices, since the remaining subset of $L_f$ after Bob's turn is a small component of $H_R$. 

Suppose that a cut vertex $v$ of $L_f$ is exposed in the next Alice's turn. Then $L_f-v$ consists of small components only.

For $i=0,1,2,\dots, f$, let $S_i$ denote the graph $S$ after Bob's $i$th turn. Let $T=\bigcup_{i=0}^f S_i$. By Corollaries~\ref{cor_A_not_L} and~\ref{cor_A_not_B_L}, the only vertices of weight $1$ Alice can take belong to $T\cup L_f\cup B_{L_f}$. The following lemma thus provides an upper bound on Alice's gain.

\begin{lemma}\label{lemma_eps_3eps}
$|T|<\varepsilon n$ and $|L_f-v|+|T|<3\varepsilon n$.
\end{lemma}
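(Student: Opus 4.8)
The plan is to prove both inequalities by a single application of the property that the complement of $H$ contains no $K_{\varepsilon n,\varepsilon n}$, after showing that a suitable family of vertex sets is pairwise disjoint and pairwise non-adjacent in $H$. The crucial structural fact I would isolate first is this: for every $i\in\{0,1,\dots,f\}$ there is no edge of $H$ between $V(S_i)$ and $V(L_f)$, and $V(S_i)\cap V(L_f)=\emptyset$. To see this, recall that after Bob's $i$th turn $L_i$ is a component of $K=H_R-B$, so every neighbor of $L_i$ in $H_R$ lies in $L_i\cup B_{L_i}$; hence $S_i=H_R-L_i-B_{L_i}$ has no edge to $L_i$ in $H_R$. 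Since the remaining graph $H_R$ is an induced subgraph of $H$ and, by the monotonicity $L_f\subseteq L_{f-1}\subseteq\cdots\subseteq L_i$, we have $V(L_f)\subseteq V(L_i)$, there is no edge of $H$ between $V(S_i)$ and $V(L_f)$. Taking the union over $i$ shows that $T=\bigcup_{i=0}^f S_i$ is disjoint from $V(L_f)$ and has no edge to it in $H$.

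For the first inequality I would argue by contradiction. If $|T|\ge\varepsilon n$, then since $|V(L_f)|\ge\varepsilon n$ (it is a large component) and there are no edges of $H$ between the disjoint sets $T$ and $V(L_f)$, the complement of $H$ contains a complete bipartite graph $K_{\varepsilon n,\varepsilon n}$, contradicting the defining property of $H$. Hence $|T|<\varepsilon n$.

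For the second inequality I would combine $T$ with the pieces of $L_f-v$. Since $L_f-v$ consists only of small components $C_1,\dots,C_m$, each of size $<\varepsilon n$, and $L_f-v$ is an induced subgraph of $H$, there is no edge of $H$ between distinct $C_j$; moreover, as $C_j\subseteq V(L_f)$, the first step gives no edge of $H$ between $T$ and any $C_j$. Thus the blocks $T,C_1,\dots,C_m$ are pairwise disjoint, pairwise non-adjacent in $H$, and each of size $<\varepsilon n$ (using $|T|<\varepsilon n$ from the first part). Suppose for contradiction that $|T|+|L_f-v|=|T|+\sum_j|C_j|\ge 3\varepsilon n$. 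Greedily add blocks to a group $P$ until $|P|\ge\varepsilon n$; since each block has size $<\varepsilon n$, when we stop we have $\varepsilon n\le|P|<2\varepsilon n$, so the union $P'$ of the remaining blocks satisfies $|P'|\ge 3\varepsilon n-2\varepsilon n=\varepsilon n$. As $P$ and $P'$ are unions of distinct blocks, there is no edge of $H$ between them, so the complement of $H$ again contains $K_{\varepsilon n,\varepsilon n}$, a contradiction. Therefore $|T|+|L_f-v|<3\varepsilon n$.

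I expect the main obstacle to be the first step, namely controlling the union $T$ rather than a single $S_i$. The per-turn bound $|V(S_i)|<\varepsilon n$ is immediate, but $T$ aggregates contributions from all $f+1$ turns and could a priori be large; the point is that monotonicity of the large component lets every $S_i$ be tested against the \emph{same} final set $V(L_f)$, so the non-adjacency accumulates and a single instance of $K_{\varepsilon n,\varepsilon n}$-freeness bounds the whole union at once. Once this non-adjacency is established, both inequalities follow from the same greedy grouping of small, pairwise non-adjacent blocks.
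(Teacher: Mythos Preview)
Your proof is correct and follows essentially the same approach as the paper. The paper's argument is nearly identical: it first observes that $T$ is non-adjacent to $L_f$ in $H$ via the monotonicity $L_f\subseteq L_i$ and the definition of $S_i$, deduces $|T|<\varepsilon n$, and then applies the same greedy grouping to the small components of $(L_f-v)\cup T$; the only cosmetic difference is that the paper decomposes $T$ into its components whereas you treat $T$ as a single block, which is a harmless simplification since $|T|<\varepsilon n$ already.
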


\begin{proof}
Observe that there is no edge between a vertex of $T$ and a vertex of $L_f$. Indeed, every vertex of $T$ belongs to $S_i$ for some $i\le f$, $L_f \subseteq L_i$ and $L_i$ is non-adjacent to $S_i$ by the definition of $S_i$. Since $|V(L_f)|\ge \varepsilon n$, the inequality $|T|<\varepsilon n$ follows, otherwise the complement of $H$ would contain a forbidden copy of $K_{\varepsilon n,\varepsilon n}$.

The graph $K'=(L_f-v) \cup T$ consists of small components only. If $K'$ has at least $\varepsilon n$ vertices, consider a minimal subgraph $L'$ of $K'$ that has at least $\varepsilon n$ vertices and is a union of some components of $K'$. Then $L'$ has less than $2\varepsilon n$ vertices and $K'-L'$ has less than $\varepsilon n$ vertices, due to the forbidden $K_{\varepsilon n,\varepsilon n}$ in the complement of $H$. Therefore $K'$ has less than $3\varepsilon n$ vertices.
\end{proof}

By previous observations and Lemma~\ref{lemma_eps_3eps}, Alice's gain is at most 
$$|T|+|L_f|+|B_{L_f}|\le |T|+|L_f|+|S_f| \le |T|+|L_f|+|T| \le 4\varepsilon n.$$
This completes the proof of Theorem~\ref{veta_BE}.


\subsection{Proof of Theorem~\ref{veta_canonical_pspace}}
First we consider the canonical game TR.
We proceed by polynomial reduction from the standard PSPACE-complete problem TQBF (also called QBF).
An instance of the TQBF problem is a fully quantified boolean formula with $n$ variables and alternating quantifiers, and the question is whether $\Phi$ is true. Without loss of generality we may assume that $n$ is even and that the formula starts with the existential quantifier: 
$$\Phi = \exists x_1 \forall x_2 \exists x_3 \forall x_4 \dots \forall x_n \varphi(x_1, x_2, \dots, x_n).$$

We may also assume without loss of generality that $\varphi$ in the previous expression is a 3-SAT formula. 

As the game always ends after polynomially many turns, one can search through all possible game states and determine who has the winning strategy in PSPACE.
To show that the problem is also PSPACE-hard, it suffices to prove that for every formula $\Phi$ there is a graph $G_{\Phi}$ constructible in polynomial time such that $\Phi$ is true if and only if Alice has a strategy to win on $G_{\Phi}$ in the canonical game TR.


\subsubsection{The construction of $G_{\Phi}$}

First we introduce a {\em V-gadget\/} that will be used many times in the construction of $G_{\Phi}$. The V-gadget is a path $P$ of length four. The middle vertex $c$ of $P$ is distinguished because $c$ will be identified with other vertices during the construction. 

For every variable of $\varphi$ we build a {\em variable gadget}. For the variable $x_i$ the gadget consists of a path $P_i$ of length two between the vertices $T_i$ and $F_i$ that represent the two possible values of $x_i$, and we attach a V-gadget to the middle vertex of $P_i$, see Figure~\ref{f:var}. The variable gadgets are connected by edges $T_iT_{i+1}$, $T_iF_{i+1}$, $F_iT_{i+1}$ and $F_iF_{i+1}$ in $G_{\Phi}$ for all $i<n$, see Figure~\ref{f:clause}.

\begin{figure}
\begin{center}
\includegraphics[scale=1]{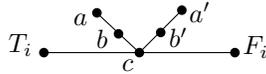}
\end{center}
\caption{Variable gadget for $x_i$ where $T_i$ and $F_i$ represent the two possible values, TRUE and FALSE, of the variable $x_i$.}
\label{f:var}
\end{figure}

For every clause $c_l$ of $\varphi$ we introduce a new vertex $C_l$ in $G_{\Phi}$. The vertex $C_l$ is connected to the three vertices in $G_{\Phi}$ corresponding to the literals of the clause $c_l$. In case a variable $x_i$ stands with negation in $c_l$, the vertex $C_l$ is connected to $T_i$ in $G_{\Phi}$, otherwise $C_l$ is connected to $F_i$. We attach a V-gadget to $C_l$. Further we add one special vertex $L$ to $G_{\Phi}$ and edges $T_nL$, $F_nL$, and $LC_l$ for each $C_l$, see Figure~\ref{f:clause}. 

\begin{figure}
\begin{center}
\includegraphics[scale=1]{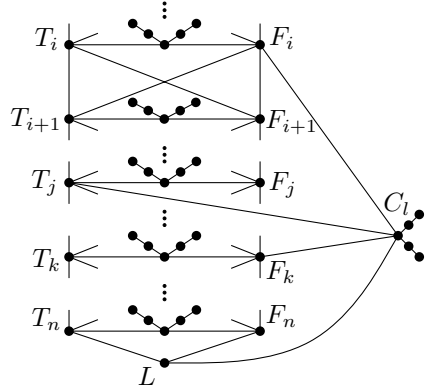}
\end{center}
\caption{Variable gadgets and the vertex $C_l$ corresponding to the clause $c_l=(x_i \vee \neg x_j \vee x_k)$.}
\label{f:clause}
\end{figure}

The {\em order enforcer\/} for a vertex $u$ is a gadget that prevents Alice from starting at $u$. This property is proved in Observation~\ref{obs_enforcer} below. The {\em special neighbors\/} of $u$ are the neighbors of $u$ among the vertices $T_1$, $F_1$, $T_2$, $F_2$, $\dots$, $T_n$, $F_n$, $L$. The order enforcer for the vertex $u$ connects $u$ to a newly added vertex $E_u$, adds a path $S_i$ of length two between $E_u$ and each special neighbor $u_i$ of $u$, and adds a V-gadget to the middle vertex of each $S_i$, see Figure~\ref{f:order}.
We attach an order enforcer simultaneously for each vertex $u\in \{T_2,F_2,T_3,F_3,\dots,T_n,F_n,L\}$. 

\begin{figure}
\begin{center}
\includegraphics[scale=1]{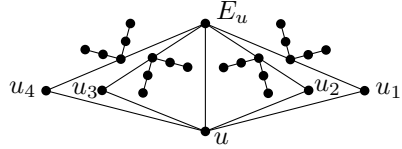}
\end{center}
\caption{Order enforcer for $u$ where $u_1, u_2, u_3, u_4$ are the special neighbors of $u$ and $E_u$ is the newly added vertex.}
\label{f:order}
\end{figure}

\subsubsection{The game}

In the following we make some easy observations.

\begin{observation}\label{obs_V}
In a game where Alice wins, no vertex of a V-gadget can be taken.
\end{observation}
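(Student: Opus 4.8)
The plan is to argue by contradiction, tracking the \emph{first} moment at which any V-gadget vertex is taken. Suppose a play that Alice wins removes a V-gadget vertex, and let $x$ be the first such vertex; write its gadget as the path $q,p,c,p',q'$, where $c$ is the identified centre and $q,q'$ are the two leaves (each of degree one in $G_\Phi$). The first step is to use condition (R) to force $x$ to be a leaf. As long as $q$ survives, $p$ is a cut vertex of the remaining graph (it separates $q$), and since $x$ is the first vertex of its gadget to be removed, $q$ and symmetrically $q'$ are still present; hence neither $p$ nor $p'$ can be taken under (R). Likewise, while an arm survives, $c$ separates that arm from the rest of $G_\Phi$, so $c$ is a cut vertex and cannot be taken under (R). Therefore $x\in\{q,q'\}$.

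The second step pins the leaf down in time using condition (T). The leaf $q$ has $p$ as its unique neighbour, and $p$ is not yet taken, because it lies in the same gadget and $x$ is the first gadget vertex removed. So $q$ is not adjacent to the current set of taken vertices, and by (T) it can be taken only as the opening move of the whole game; since Alice moves first, $x$ must be Alice's first move.

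It then remains to check that opening on such a leaf is losing for Alice. A priori I expected this to be the main obstacle, since one might fear it requires a global parity or strategy argument; in fact it is immediate because the continuation is entirely forced. After Alice takes $q$ the taken set is $\{q\}$, and by (T) the next vertex must be adjacent to $q$, whose only neighbour is $p$; as $p$ is now a pendant it is legal under (R), so Bob's only move is $p$. Now the taken set is $\{q,p\}$, and the sole remaining vertex adjacent to it is $c$; but $c$ still separates the surviving arm $\{p',q'\}$ from the rest, so (R) forbids it. Hence Alice has no legal move and loses, contradicting the hypothesis. I would close by emphasizing that the whole point is this forcing: condition (T) makes the taken set grow into the arm and then dead-end at the frozen centre $c$, so no V-gadget vertex can ever be taken in a game Alice wins.
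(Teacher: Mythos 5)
Your proof is correct and takes essentially the same route as the paper's: rule (R) forces the first V-gadget vertex taken to be a leaf of the path, rule (T) then forces that to happen only as the opening move, and the continuation is forced ($q$, then $p$, then a dead end at the cut vertex $c$), so Alice loses. The paper's version is terser but identical in substance.
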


\begin{proof}
Each V-gadget in $G_{\Phi}$ is connected to
other vertices only at its middle vertex $c$, see
Figures~\ref{f:var} and \ref{f:clause}. By rule (R),
the first vertex taken from the V-gadget is
$a$ or $a'$. Consequently, by rule (T), it can
be taken only in the first turn. If Alice takes vertex $a$ in the first
turn, then Bob has
to take vertex $b$. By the rules there is no further vertex that Alice
could take. So she loses the game. The case of $a'$ is analogous.
\end{proof}

As a straightforward consequence of Observation~\ref{obs_V} we get that no vertex $C_l$ corresponding to some clause $c_l$ can be taken from $G_{\Phi}$.

\begin{observation}\label{obs_noboth}
For every $i$, only one of $T_i$ and $F_i$ can be taken.
\end{observation}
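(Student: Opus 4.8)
The plan is to derive the statement directly from Observation~\ref{obs_V} together with rule (R). The structural feature to exploit is that in the variable gadget for $x_i$ the vertices $T_i$ and $F_i$ are joined only through the middle vertex $m_i$ of the path $P_i$, and this same $m_i$ is the attachment point (the middle vertex $c$) of an attached V-gadget. By the construction, the only neighbors of $m_i$ in $G_\Phi$ are $T_i$, $F_i$, and the two neighbors of $m_i$ inside its V-gadget: the cross edges between variable gadgets, the clause edges, and the order enforcers all attach to the $T_j, F_j, L$ vertices, never to a middle vertex $m_i$.

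First I would record that, by Observation~\ref{obs_V}, in any game that Alice wins none of the vertices of the V-gadget attached to $m_i$ is ever taken; in particular $m_i$ itself is never taken, since it is the middle vertex $c$ of that V-gadget. Next I would argue by contradiction: suppose that at some point both $T_i$ and $F_i$ have been taken, and consider the turn at which the second of the two is removed. After this turn, look at the subgraph induced by the remaining vertices. The whole V-gadget attached to $m_i$ is still present (none of its vertices is ever taken), and its only edges leaving the gadget ran to $T_i$ and $F_i$, both now removed. Hence this V-gadget forms a connected component of the remaining graph that is non-adjacent to the rest of $G_\Phi$, which is nonempty (it still contains, for instance, the other variable gadgets and $L$). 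This contradicts rule (R), which requires the remaining subgraph to stay connected throughout the game. Therefore the second of $T_i, F_i$ can never legally be taken, so at most one of $T_i, F_i$ is ever taken.

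The only point that needs care, and the step I would check most carefully, is the claim that $m_i$ has no neighbor outside $\{T_i, F_i\}$ and its own V-gadget, so that removing both $T_i$ and $F_i$ genuinely severs the V-gadget from the rest of the graph. This is immediate from inspecting the construction (Figures~\ref{f:var}--\ref{f:order}), but it is exactly where the argument would fail if any of the later gadgets had been attached to a middle vertex rather than to a $T_j/F_j/L$ vertex. Everything else is a direct application of Observation~\ref{obs_V} and rule (R), so I do not expect any genuine difficulty beyond this bookkeeping.
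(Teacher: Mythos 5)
Your proof is correct and follows essentially the same route as the paper's one-line argument: both $T_i$ and $F_i$ being taken would sever the middle vertex of $P_i$ together with its V-gadget (whose vertices survive by Observation~\ref{obs_V}) from the rest of the graph, which is nonempty because at least one other V-gadget also survives, contradicting rule (R). You merely spell out in more detail the structural fact that the middle vertex has no neighbors outside $\{T_i,F_i\}$ and its V-gadget, which the paper leaves implicit.
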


\begin{proof}
Since $G_{\Phi}$ has at least two V-gadgets, Observation~\ref{obs_V} implies that taking both $T_i$ and $F_i$ would disconnect $G_{\Phi}$. 
\end{proof}

\begin{observation}\label{obs_enforcer}
 Let $u$ be a vertex in $G_{\Phi}$ to which an order enforcer is attached. If in the first turn Alice takes $u$ or $E_u$, then she loses the game.
\end{observation}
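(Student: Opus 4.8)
The plan is to show that taking $u$ (or $E_u$) in the first turn leaves Alice unable to make a second move, so that after Bob's response she has no legal move and loses. The key structural fact I would use is the geometry of the order enforcer (Figure~\ref{f:order}): the vertex $E_u$ is joined to $u$ and, through the length-two paths $S_i$, to each special neighbor $u_i$ of $u$, with a V-gadget hanging off the middle of each $S_i$. By Observation~\ref{obs_V}, in any game Alice wins no vertex of any V-gadget is ever taken, so the midpoints of the paths $S_i$ (to which V-gadgets are attached) are ``frozen'': removing such a midpoint would detach the two pendant leaves of its V-gadget and violate rule~(R). This confines all legal play to the remaining skeleton of the enforcer.

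First I would treat the case where Alice takes $u$. After $u$ is removed, I claim the only vertex Bob can legally take that keeps the taken set connected (rule~(T)) and the remaining set connected (rule~(R)) is $E_u$; indeed $E_u$ is the unique neighbor of $u$ that is not blocked. I would then argue that once Bob takes $E_u$, every remaining neighbor of the taken pair $\{u,E_u\}$ is either a frozen V-gadget midpoint or a vertex whose removal disconnects the remaining graph, so Alice has no move satisfying both (T) and (R) and therefore loses. The analogous argument handles the case where Alice starts at $E_u$: Bob responds by taking $u$ (the symmetric partner), after which Alice is again stuck for the same connectivity reasons. The heart of the matter is a careful check, against Figure~\ref{f:order}, that after the pair $\{u,E_u\}$ is taken, each of the paths $S_i$ together with its attached V-gadget forms a ``trap'': the midpoint cannot be taken (it would cut off a V-gadget leaf by rule~(R)), and the endpoint $u_i$ cannot be reached by a connected taken set without first passing through that forbidden midpoint (rule~(T)), while also noting that $u_i$ itself is a cut vertex of the remaining graph.

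The step I expect to be the main obstacle is verifying the second claim precisely, namely that after $\{u,E_u\}$ is removed, \emph{no} vertex at all is available to Alice. One must rule out not just moves into the enforcer's own paths $S_i$ but also any move that might be reachable through the special neighbors $u_i$ via the rest of $G_\Phi$; this is where connectivity of the whole construction, rather than the enforcer in isolation, must be invoked. The clean way to do this is to observe that the taken set $\{u,E_u\}$ has all its frontier vertices blocked simultaneously: each candidate next vertex either violates (T) (it is not adjacent to the current connected taken set, since every path to it runs through a frozen midpoint) or violates (R) (taking it separates a V-gadget or an enforcer branch from the rest). Once this simultaneous blocking is established, the conclusion that Alice has no legal second move, hence loses, is immediate from the losing condition of the canonical game TR.
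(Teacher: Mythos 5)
Your proposal follows the paper's proof exactly: Bob answers $u$ with $E_u$ (or vice versa), and Alice is then stuck because every neighbor of the taken pair $\{u,E_u\}$ is either the middle vertex of a V-gadget or a special neighbor $u_i$ whose removal would strand the midpoint of the path $S_i$, violating rule (R). One small correction: your claim that $u_i$ ``cannot be reached by a connected taken set without first passing through that forbidden midpoint'' is false --- $u_i$ is by definition a \emph{neighbor} of $u$, so rule (T) does not block it; the only obstruction is the rule-(R) one you also state, namely that $u_i$ is a cut vertex of the remaining graph.
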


\begin{proof}
Suppose that Alice takes $u$. Then Bob takes $E_u$ and 
there is no further vertex for Alice to take, as each special neighbor of $u$ would disconnect the order enforcer and every other neighbor of $u$ or $E_u$ is the middle vertex of a V-gadget. If Alice takes $E_u$, then Bob takes $u$. Similarly, Alice cannot take any further vertex. In both cases Alice loses the game.
\end{proof}

Using similar arguments as in the previous proof we also observe the following fact.

\begin{observation}\label{obs_Eu}
If some neighbor of $u$ is taken, $E_u$ cannot be taken anymore. 
\qed
\end{observation}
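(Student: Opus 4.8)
The plan is to first pin down exactly which neighbours of $u$ can enter the game at all in a run that Alice wins, and then to forbid taking $E_u$ by the same cut-vertex argument for rule (R) that underlies the proof of Observation~\ref{obs_enforcer}. Working in a game where Alice wins, Observation~\ref{obs_V} tells us that no vertex of any V-gadget is ever taken. In particular the centers $m_1,\dots,m_k$ of the V-gadgets sitting on the enforcer paths $S_1,\dots,S_k$ of $u$ are never taken; and every neighbour of $u$ other than $E_u$ and the special neighbours $u_1,\dots,u_k$ is itself a center of some V-gadget (the middle vertex of the variable gadget at $u$, a clause vertex $C_l$ adjacent to $u$, or the middle vertex of another enforcer in which $u$ is a special neighbour), hence is also never taken. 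So the only neighbours of $u$ that can ever be taken are $E_u$ and the special neighbours $u_1,\dots,u_k$.

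Next I would split on which neighbour of $u$ has been taken. If it is $E_u$ itself, the conclusion is immediate. Otherwise some special neighbour $u_j$ has been taken while $E_u$ has not, and I must show $E_u$ can never be taken afterwards. Suppose for contradiction that $E_u$ is taken on some later turn. Since $u_j$ is already gone, this is not the first turn, so rule (T) forces $E_u$ to be adjacent to the current (connected) taken set. The only neighbours of $E_u$ are $u$ and the centers $m_1,\dots,m_k$, and the $m_i$ are never taken; hence $u$ must already have been taken at this moment.

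The heart of the argument is then a violation of rule (R). The enforcer path $S_j$ joins $E_u$ to $u_j$ through the middle vertex $m_j$, to which a V-gadget is attached, so the only neighbours of $m_j$ outside its own V-gadget are $E_u$ and $u_j$. At the hypothetical turn taking $E_u$, both $u$ and $u_j$ have already been removed; removing $E_u$ as well would leave $m_j$ together with its whole V-gadget joined to the rest of the graph only through $E_u$. Thus $E_u$ is a cut vertex of the remaining graph, and rule (R) forbids taking it, giving the desired contradiction.

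I do not anticipate a real obstacle: the entire argument reduces to two short connectivity facts already used for Observation~\ref{obs_enforcer}, namely that (T) restricts how the taken set can grow and that (R) blocks cut vertices. The only point requiring genuine care is the bookkeeping in the first paragraph, that is, correctly identifying which vertices adjacent to $u$ and to $E_u$ are V-gadget centers and therefore untouchable, so that the (T)-adjacency step and the (R)-cut-vertex step are applied to the right vertices $u$, $u_j$, and $m_j$.
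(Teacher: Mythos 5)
Your proof is correct and is essentially the argument the paper intends: the paper only remarks that the claim follows ``using similar arguments as in the previous proof,'' and your write-up fleshes out exactly those ingredients (Observation~\ref{obs_V} to rule out the V-gadget centers among the neighbours of $u$ and $E_u$, rule (T) to force $u$ to precede $E_u$, and rule (R) to block $E_u$ as a cut vertex isolating $m_j$ once a special neighbour $u_j$ is gone). No gaps.
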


The game must proceed as follows. As a consequence of Observations~\ref{obs_V},~\ref{obs_noboth} and~\ref{obs_enforcer} Alice will take $T_1$ or $F_1$ in the first turn. By Observations~\ref{obs_noboth} and~\ref{obs_Eu}, the only possible choices in the $i$th turn for the player on turn are $T_i$ and $F_i$ for $i\leq n$. In the $(n+1)$st turn Alice has to take $L$ or she has no turn and loses the game.

If Alice cannot take $L$, it means that some vertex $C_l$ corresponding to a clause $c_l$ would get disconnected from the part of $G_{\Phi}$ that contains the remaining vertices $T_i$ and $F_i$. This occurs if and only if previously all three vertices corresponding to the literals of $c_l$ were taken. If after taking $L$ the subgraph induced by the remaining vertices of $G_{\Phi}$ is connected, then there is no further vertex to take as it would necessarily disconnect $G_{\Phi}$. 

For $i\leq n$, let $x_i$ be TRUE if $T_i$ was taken by one of the players and FALSE if $F_i$ was taken. It follows that the formula $\varphi$ is satisfied if and only if Alice can take $L$ at the end of the game, that is, if and only if she wins the game. As Alice's turns in $G_{\Phi}$ correspond to the variables with the existential quantifier in $\Phi$ and Bob's turns in $G_{\Phi}$ to the variables with the universal quantifier, it follows that Alice wins if and only if $\Phi$ is a true formula.

Obviously the construction of $G_{\Phi}$ was carried out in polynomial time. This completes the proof of Theorem~\ref{veta_canonical_pspace} for the canonical game TR. 

\subsubsection{The misere game TR}
The proof of the PSPACE-completeness for the misere game TR is very similar to the proof for the canonical game TR, with the following differences. Instead of $n$ even we consider $n$ odd, so that Alice chooses between $T_n$ and $F_n$ and Bob then takes $L$ or has no move in the $(n+1)$st turn. The V-gadget now consists of a path of length $2$ attached by its middle vertex, and the order enforcer for a vertex $u$ is now doubled, with an additional edge $E_uF_u$. See Figures~\ref{obr_clause_misere} and~\ref{obr_enforcer_misere}.

\begin{figure}
\begin{center}
\includegraphics[scale=1]{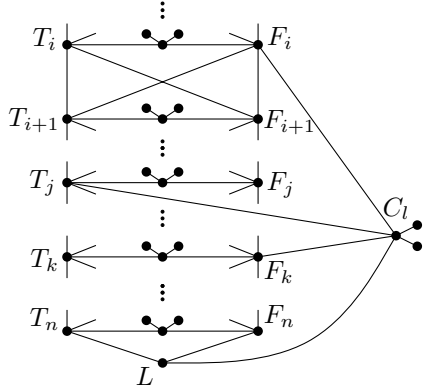}
\end{center}
\caption{A part of the graph $G_{\Phi}$ constructed for the misere game TR.}
\label{obr_clause_misere}
\end{figure}

\begin{figure}
\begin{center}
\includegraphics[scale=1]{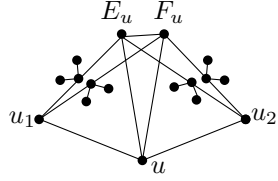}
\end{center}
\caption{An order enforcer for the misere game TR attached to a vertex $u$ and its two special neighbors.}
\label{obr_enforcer_misere}
\end{figure}


\subsection{Proof of Theorem~\ref{veta_weighted_pspace}}
As in the proof of Theorem~\ref{veta_canonical_pspace}, we show a polynomial reduction from the TQBF problem. Without loss of generality we may assume that the considered formula $\Phi$ is of the form $\Phi = \exists x_1 \forall x_2 \exists x_3 \forall x_4 \dots \exists x_n \varphi(x_1, x_2, \dots, x_n)$ where $\varphi$ is a NAE-$3$-SAT formula. That is, each clause of $\varphi$ has three literals and it is satisfied if and only if at least one of the three literals is evaluated as TRUE and at least one as FALSE.

\subsubsection{The construction of $G_{\Phi}$}

\begin{figure}
\begin{center}
\includegraphics[scale=1]{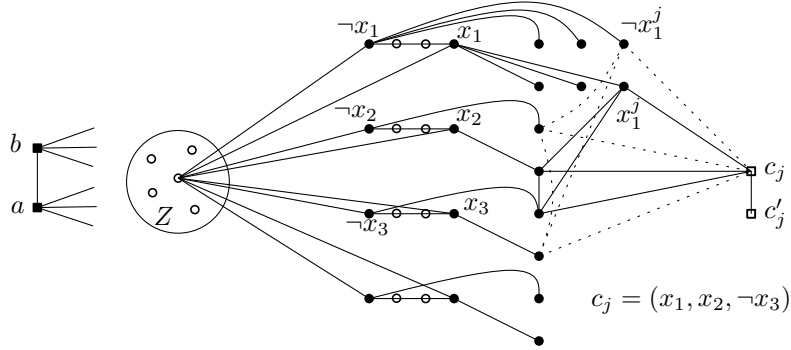}
\end{center}
\caption{An illustration of a part of the graph $G_{\Phi}$ for game R and TR.}
\label{f:weighted_1_graph}
\end{figure}

Let $m$ be the number of clauses in $\varphi$. We construct a $3$-connected weighted graph $G_{\Phi}$ of size $O(m+n)$ in the following way (see Figure~\ref{f:weighted_1_graph}). 
For each variable $x_i$ we take a path $P_i$ with 4
vertices. The end vertices of $P_i$ are labeled $x_i$ and $\neg x_i$.
If a variable $x_i$ occurs in the clause $c_j$, we add a vertex $x_i^j$ connected by an edge to $x_i$ and a vertex $\neg x_i^j$ connected by an edge to $\neg x_i$. The vertices $x_i^j$ and $\neg x_i^j$ are connected by clause gadgets depicted in Figure~\ref{f:weighted_2_clauses}. We add a set $Z$ of $10(m+n)+1$ 
vertices that are connected to all vertices $x_i$ and $\neg x_i$.
Finally, we add two vertices $a$ and $b$ connected to all other vertices (including the edge between $a$ and $b$). Observe that the constructed graph $G_{\Phi}$ is $3$-connected and has an odd number of vertices. 

\begin{figure}
\begin{center}
\includegraphics[scale=1]{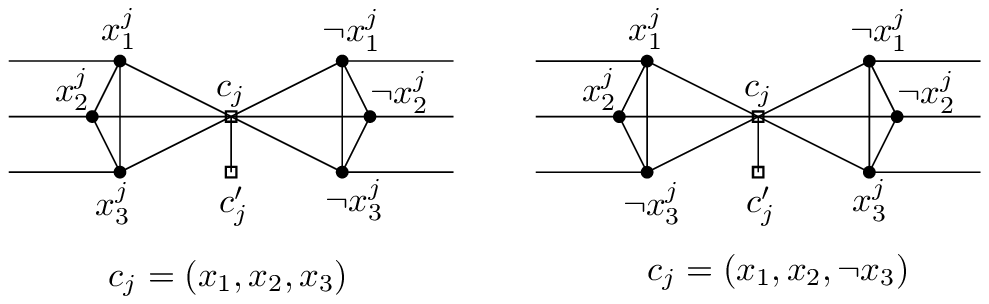}
\end{center}
\caption{Examples of NAE-3-SAT clause gadgets.}
\label{f:weighted_2_clauses}
\end{figure}

The weights $w:V(G_{\Phi}) \rightarrow [0, \infty)$ are set as follows: 

\begin{eqnarray*}
w(x_i)=w(\neg x_i) & = &9^{n+1-i},\\
w(x_i^j)=w(\neg x_i^j) & = & 10/999^j,\\
w(c_j)=w(c'_j) & = & 11/999^j,\\
w(b) & = & 9^{n+2},\\
w(a) & = & 9^{n+2}+2\cdot (9^{n-1}+9^{n-3}+ \cdots + 9^2) + 1/999^{m+1}.
\end{eqnarray*}

All other vertices (that is, the inner points on the paths $P_i$ and the vertices of $Z$) have weight $0$.
The $2n+8m+2$ vertices of positive weight can be partitioned into the following groups: $V_0=\{a,b\}, V_1=\{x_1, \neg x_1\}, \dots, V_n=\{x_n, \neg x_n\}$, and the $m$ groups $V_{n+1}, \dots, V_{n+m}$, each $V_{n+j}$ consisting of $8$ vertices of the clause gadget corresponding to the clause $c_j$. Note that the weights are chosen so that each vertex in a group $V_k$ has larger weight than the sum of weights of all the vertices from groups $V_l, l > k$. 
Let $W$ denote the total weight of all vertices in $G_{\Phi}$.

In the rest of this section we prove the following statement.

\begin{proposition}
Alice has a winning strategy in game TR played on $G_{\Phi}$ if and only if $\Phi$ is true.
Alice has a winning strategy in game R played on $G_{\Phi}$ if and only if $\Phi$ is true.
\end{proposition}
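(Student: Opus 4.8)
The plan is to prove both equivalences at once by first reducing game TR to game R. The crucial observation is that the vertices $a$ and $b$ are adjacent to every other vertex, so as soon as one of them has been taken, rule (T) is automatically satisfied by every subsequent move: each newly taken vertex is adjacent to the already-taken vertex $a$ or $b$. Since the weights are chosen so that $a$ is the unique heaviest vertex, I would first argue that Alice's optimal opening move in either game is to take $a$ (with Bob answering $b$); after move $1$ the taken set always contains $a$, so the legal moves of game TR coincide with those of game R for the rest of the play. Consequently the two games have the same value and it suffices to analyse game R. The engine driving everything is the super-increasing weight structure across the groups $V_0 > V_1 > \cdots > V_{n+m}$: since each vertex of a group outweighs the sum of all vertices in later groups, I would establish a \emph{domination principle}, namely that in optimal play each player, on her turn, takes an available vertex from the currently heaviest non-exhausted group, and that nothing is gained by descending to a lighter group or by spending a turn on a zero-weight vertex unless forced.

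Given the domination principle, I would next pin down the skeleton of optimal play. Alice takes $a$, Bob takes $b$; thereafter the heaviest remaining vertices are the variable endpoints. Here the key structural fact is the \emph{locking mechanism}: once $a$ and $b$ are gone, each path $P_i = x_i\, p_i\, q_i\, \neg x_i$ has its two inner (zero-weight) vertices attached to the rest of the graph only through $x_i$ and $\neg x_i$, so by rule (R) at most one of $x_i,\neg x_i$ can be removed before $p_i,q_i$ are cleared. Thus each variable group contributes exactly one \emph{decision} move, and since $n$ is odd these decision moves fall on turns whose parity matches the quantifier prefix: the existential (odd-index) variables are decided on Alice's turns and the universal (even-index) variables on Bob's turns. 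Reading ``the taken endpoint of $P_i$'' as the truth value of $x_i$, the decision phase realizes exactly an assignment in which Alice controls the $\exists$-variables and Bob the $\forall$-variables, played in the order of the prefix.

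The remaining work is the endgame accounting, which is where the delicate part lies. After the decision phase the still-untaken heavy vertices are the $n$ locked endpoints (of weight $9^{n+1-i}$) together with the tiny clause gadgets, while the large odd buffer $Z$ of zero-weight vertices supplies tempo. I would show, via a pairing/tempo strategy built on $Z$ and the inner path vertices, that the locked endpoints split so that the net contribution of the variable groups is cancelled by the bonus hidden in $w(a)$: indeed $w(a)-w(b)=2(9^{n-1}+9^{n-3}+\cdots+9^2)+1/999^{m+1}$, and the sum $2(9^{n-1}+\cdots+9^2)$ is precisely the total weight of all universal (even-indexed) variable pairs. Hence, up to the clause gadgets and the additive term $1/999^{m+1}$, the players finish level and the outcome is decided inside the gadgets. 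Finally I would analyse the gadgets: using the super-increasing weights $10/999^j,\,11/999^j$ (so clauses resolve in order $j=1,\dots,m$), I would verify that for each clause $c_j$ Alice can secure the heavier vertices $c_j,c'_j$ if and only if the three literals of $c_j$ are not all equal under the current assignment, i.e. iff $c_j$ is NAE-satisfied; the leftover term $1/999^{m+1}$ in $w(a)$ is the tie-breaker that tips Alice over exactly one half precisely when every clause is NAE-satisfied.

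Putting the pieces together, Alice can force her total above $W/2$ if and only if she can steer the $\exists$-variables so that, against every choice of Bob's $\forall$-variables, the resulting assignment NAE-satisfies every clause of $\varphi$ — which is exactly the statement that $\Phi$ is true; the conclusion for game TR then follows from the reduction of the first step. I expect the main obstacle to be the endgame accounting: making the domination principle fully rigorous (including that neither player profits from unlocking an endpoint for the opponent or from squandering tempo on $Z$), and proving that the locked endpoints split exactly as claimed, so that the game is genuinely balanced at $W/2$ up to the clause gadgets and the single tie-breaking term. Verifying that the clause gadget realizes NAE with these precise tiny weights is the other technically demanding step.
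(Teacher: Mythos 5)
Your plan follows essentially the same route as the paper: reduce TR to R via the universal vertices $a,b$, force the opening $a,b$ and then one decision per variable gadget in order (with the parity of the decision turns matching the quantifier prefix), show that Bob collects all $n$ locked endpoints so that his surplus on the variable groups is exactly $2(9^{n-1}+9^{n-3}+\cdots+9^2)$ and is cancelled by $w(a)-w(b)$, and resolve the clause gadgets in order of $j$ with $1/999^{m+1}$ as the tie-breaker. Two details need correcting: ``the locked endpoints split'' must be sharpened to ``Bob gets all of them'' (this is what your cancellation identity presupposes, and it is exactly what the tempo argument on $Z$ has to deliver); and Alice can never secure both heavy vertices $c_j,c'_j$ of a gadget --- Bob is on turn when a gadget opens and takes $c'_j$ --- so the correct dichotomy is that she gets one heavy and three light vertices (exactly half, $41/999^j$) when $c_j$ is NAE-satisfied and four light vertices ($40/999^j$) otherwise, which is in fact the split your final tie-breaker accounting already assumes.
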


\subsubsection{Starting the game}

We start with the following easy observations.

\begin{observation}\label{obs_12_2}
Once one of the vertices $a$ or $b$ is taken, game TR reduces to game R as condition (T) is always satisfied further on. \qed
\end{observation}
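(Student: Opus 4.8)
The plan is to exploit the single structural fact responsible for this observation: by construction both $a$ and $b$ are dominating vertices of $G_{\Phi}$, each being adjacent to every other vertex of the graph (they were added precisely as vertices joined to all other vertices, including to each other). The claim then reduces to the elementary fact that any vertex set containing a dominating vertex induces a connected subgraph.

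Concretely, I would fix an arbitrary position in game TR reached after one of $a,b$ has been taken; by symmetry it suffices to treat the case where $a$ has been taken, since the argument with $b$ in place of $a$ is identical. Let $S$ be the current set of taken vertices, so that $a \in S$. Because $a$ is adjacent to every vertex of $G_{\Phi}$ other than itself, it is in particular adjacent to every vertex of $S \setminus \{a\}$; hence $G_{\Phi}[S]$ contains a spanning star centered at $a$ and is therefore connected. Since $S$ was an arbitrary taken set containing $a$, condition (T) holds at this position and at every later one, no matter which further vertices are taken.

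It follows that, from the turn in which $a$ or $b$ first enters the taken set, condition (T) imposes no restriction on the remaining play, so the only binding requirement is condition (R); this is exactly the assertion that game TR reduces to game R thereafter. I do not expect any genuine obstacle here: the entire content is the universality of $a$ and $b$, which is immediate from the construction, combined with the triviality that a dominating vertex connects its induced set. The one point worth recording explicitly is the symmetry in $a$ and $b$, which lets us argue for just one of them.
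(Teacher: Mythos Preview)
Your argument is correct and matches the paper's reasoning: the observation is stated with an immediate \qed in the paper, the implicit justification being exactly that $a$ and $b$ are universal vertices, so any taken set containing one of them induces a connected subgraph.
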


We will call the graph induced by the remaining vertices in some position in the game briefly the {\em remaining graph}.

\begin{observation}\label{obs_cut_bob}
If in some position in game R played on $G_{\Phi}$ the remaining graph has a cut vertex $v$, then Bob has a strategy to get $v$.
\end{observation}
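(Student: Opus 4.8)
The plan is to show that whenever the remaining graph has a cut vertex $v$, Bob can play so that he is the one to eventually take $v$. The key structural fact I would exploit is that once $v$ is a cut vertex of the remaining graph in game R, rule (R) forbids taking $v$ as long as it still separates the remaining graph into at least two components. Thus $v$ can only be taken once all but one of the components hanging off $v$ have been completely removed; at that moment $v$ becomes a non-cut vertex (a leaf or interior vertex of the single surviving piece together with $v$), and exactly one player will face the position in which $v$ is finally available. The whole argument is therefore a parity/pairing argument about whose turn it is when $v$ becomes takeable.

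First I would fix the moment at which $v$ first appears as a cut vertex and let $A_1, A_2, \dots, A_t$ (with $t \ge 2$) denote the vertex sets of the components of the remaining graph with $v$ deleted. The idea is that $v$ stays blocked until exactly one component, say $A_s$, remains; every vertex of the other components $A_j$ ($j \ne s$) must be taken before $v$. Bob's strategy is a mirroring/pairing strategy: he will arrange to \emph{respond} to Alice's moves in such a way that he controls the parity of the turn on which the last blocking vertex disappears and $v$ becomes available. Concretely, Bob treats the components other than the one Alice is draining as a reserve of moves he can always answer with, ensuring that after each of his turns the number of remaining vertices that still block $v$ has the parity he wants, so that $v$ opens up on Alice's turn leaving Bob to grab it, or Bob can simply take $v$ himself on his turn once it is free. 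I would make this precise by induction on the number of vertices, invoking the connectivity-preservation guaranteed by rule (R) and, where two-connected pieces are involved, Corollary~\ref{cor_order} to guarantee that the vertices of a component can always be peeled off one at a time while keeping the rest connected.

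The main obstacle I expect is handling the bookkeeping of \emph{which} player is forced to make the ``last'' move that frees $v$, because Alice may try to stall by taking vertices far away from $v$ or by creating new cut vertices elsewhere, changing the component structure dynamically. The clean way around this is to argue locally about $v$ only: I would show that regardless of where Alice plays, Bob can always maintain the invariant that the number of remaining vertices strictly separated from the chosen surviving side by $v$ has a fixed parity after each Bob turn, using the freedom to either answer inside the same component Alice is emptying or to take a ``neutral'' vertex elsewhere whenever Alice's move does not threaten $v$. Since the total number of vertices to be removed before $v$ becomes free is fixed once the component to survive is chosen, a simple parity count then shows Bob can force $v$ to become available on a turn that is his, or on Alice's turn only when she is compelled to expose it without being able to take it.

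Finally I would note that this claim is exactly the tool needed downstream: it says Bob never loses a cut vertex to Alice, which in the weighted construction means Bob captures the heavy vertices $a$, $b$, and the variable-literal vertices whenever they occur as cut vertices, and this is what drives the reduction. I would therefore state the conclusion as: following the pairing strategy above, $v$ is taken by Bob, completing the proof. The only genuinely delicate point, to be spelled out in full in the actual proof, is verifying that Bob's ``neutral'' responses always exist and never themselves create a situation violating rule (R) for his own future moves; this follows from the connectivity guarantees but should be checked component by component.
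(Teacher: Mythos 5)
Your high-level reading is right: rule (R) locks $v$ in place while it separates the remaining graph, so the only question is who is on turn when $v$ finally becomes free. But the mechanism you propose for settling that question does not work, and it misses the one fact that actually makes the observation true. You present the parity as something Bob \emph{engineers} by pairing his moves against Alice's, maintaining an invariant on ``the number of remaining vertices that still block $v$'' relative to a surviving component chosen in advance. Bob has no such power in general: if the remaining graph at his turn were a three-vertex path $a$--$v$--$b$, he must take $a$ or $b$ and Alice then takes $v$, and no pairing strategy helps. The observation holds only because $G_{\Phi}$ has an odd number of vertices, so that after every one of Alice's turns the remaining graph has an even number of vertices; you never invoke this, and without it your ``simple parity count'' would prove a false statement. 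In addition, the invariant you want to maintain is not obviously maintainable --- Bob cannot always choose at will between taking a ``blocking'' vertex and a ``neutral'' one, since all available vertices may lie on one side of $v$ --- and you explicitly defer exactly this point to ``the actual proof.'' Fixing the surviving component in advance is also fragile, because Alice may choose to drain it.

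The paper's argument is simpler and shows where the parity really comes from. Bob never tries to control parity; he only makes sure he never \emph{frees} $v$: as long as some component of $H-v$ has at least two vertices, he takes an available vertex of such a component (one always exists, e.g.\ a deepest vertex of that component in a spanning tree of $H$ rooted at $v$), which leaves every component of $H-v$ nonempty and hence keeps $v$ forbidden to Alice by rule (R). Once no such component exists, $H$ is a star centered at $v$, and since $|V(H)|$ is even at Bob's turn it has an odd number of leaves; the players are then forced to strip leaves alternately until exactly two vertices remain at Bob's turn, and he takes $v$. No pre-chosen surviving component, no mirroring, and no appeal to Corollary~\ref{cor_order} (which concerns $2$-connected graphs and is not the relevant tool here) is needed. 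If you want to salvage your framing, the statement to prove is: Bob can always move without making $v$ available (and takes $v$ the moment it is available), hence $v$ can first become available only after one of Alice's moves, and the even order of the remaining graph at Bob's turns guarantees this happens with Bob to move.
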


\begin{proof}
Let $H$ be the remaining graph after some of Alice's turns. If there is a component in $H - v$ containing at least two vertices, Bob takes a vertex from that component (at least one vertex from such component is always available). When no such component exists, the graph $H$ is a star with central vertex $v$ and odd number of leaves. Now Bob and Alice alternately take leaves of $H$, until only two vertices are left. Then $v$ becomes available for Bob.
\end{proof}

\begin{corollary}\label{cor_bob_any}
If Bob is on turn in game R, he has a strategy to get any of the remaining vertices.
\qed
\end{corollary}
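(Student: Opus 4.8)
The plan is to derive the corollary directly from Observation~\ref{obs_cut_bob} by a short dichotomy on whether the target vertex is already available. Fix an arbitrary remaining vertex $v$ and suppose Bob is on turn, with remaining graph $H$. The key preliminary observation to record is that in game R a vertex can be legally taken precisely when its removal leaves the remaining graph connected, that is, exactly when it is not a cut vertex of $H$ (this includes the trivial situations where $H$ has at most two vertices, in which every vertex is available).

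First I would dispose of the easy case. If $v$ is not a cut vertex of $H$, then $v$ is available, and since it is Bob's move he simply takes $v$ immediately, securing it in a single turn. In the remaining case $v$ is a cut vertex of $H$. This is exactly the hypothesis of Observation~\ref{obs_cut_bob}, whose conclusion that Bob has a strategy to obtain $v$ applies verbatim: that observation is established in the same setting (Bob on turn, remaining graph $H$, against any play of Alice), so no adaptation is needed. Thus Bob secures $v$ in this case as well.

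Since the two cases are exhaustive, Bob can force any prescribed remaining vertex into his possession, which is the claim. There is no substantive obstacle in this step; the only point requiring care is the equivalence between availability in game R and being a non-cut vertex of the connected remaining graph, which guarantees that the case split is a genuine dichotomy and that the cut-vertex case matches the hypothesis of Observation~\ref{obs_cut_bob} exactly, including the parity of the number of remaining vertices on which that observation's argument rests.
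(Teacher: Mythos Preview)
Your argument is correct and is exactly the immediate derivation the paper intends: the \qed in the paper signals that the corollary follows at once from Observation~\ref{obs_cut_bob} via the dichotomy you spell out (if $v$ is available Bob takes it, otherwise $v$ is a cut vertex and Observation~\ref{obs_cut_bob} applies). Your remark about parity is accurate, since both statements live in the setting of game R on $G_{\Phi}$ with Bob on turn, so the hypotheses line up verbatim.
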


%
In the analysis of the game we may without loss of generality assume that the players {\em play optimally\/}, that is, each of them follows a strategy maximizing his/her gain.

Now in a sequence of lemmas we show that the players do not have much freedom in choosing their strategy if they want to play optimally. Roughly speaking, the players will not deviate too much from a greedy strategy, which consists in taking an available vertex of largest weight.

\begin{lemma}\label{lemma_bob_posbira_zbyvajici_promenne}
Suppose that both players play optimally. Fix $i\in \{0,1,\dots,n\}$. Suppose further that Alice's first turn is on $a$, Bob's first turn is on $b$, and for each $j=1,2, \dots, i$ in the $(j+2)$nd turn of the game either $x_j$ or $\neg x_j$ is taken. Then Bob has a strategy to get all the remaining vertices from $V_1 \cup V_2 \cup \dots \cup V_i$.
\end{lemma}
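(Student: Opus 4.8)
The goal is to show that, under the stated opening moves, Bob can collect every vertex of $V_1\cup\dots\cup V_i$ that Alice did not already take during the first $i+2$ turns. The key structural fact I would exploit is the weight hierarchy established just before the lemma: each vertex in a group $V_k$ outweighs the sum of the weights of \emph{all} vertices in the later groups $V_l$ with $l>k$. Combined with Corollary~\ref{cor_bob_any} (after $a$ or $b$ is taken the game is game R, and whenever Bob is on turn he can reach \emph{any} remaining vertex), this forces an essentially greedy pairing: the two vertices $x_j,\neg x_j$ of each $V_j$ behave like a matched pair, and by Observation~\ref{obs_noboth}-type reasoning at most one of them can ever be taken without disconnecting the remaining graph through the paths $P_j$.

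\textbf{Main steps.}
First I would record that after Alice plays $a$ and Bob plays $b$, Observation~\ref{obs_12_2} puts us in game R, so Corollary~\ref{cor_bob_any} is available: on any of his turns Bob can secure whichever remaining vertex he likes. Second, I would argue by induction on the turn number (equivalently on $j$) that the players are effectively forced, when playing optimally, to take exactly one vertex from each pair $\{x_j,\neg x_j\}$ in turn $j+2$; the weight gap guarantees that deviating to a cheaper vertex costs more than the entire remaining tail, so a rational player never skips the current heaviest available pair. Third, for the claim proper, consider any index $j\le i$ and let $p_j\in\{x_j,\neg x_j\}$ be the vertex of $V_j$ that was \emph{not} taken in turn $j+2$ (one of the two was taken by hypothesis). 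I would show $p_j$ remains reachable for Bob: because only one of the pair has been removed, the relevant path $P_j$ and the common neighborhood through $a,b$ and the set $Z$ keep $p_j$ connected to the rest, so $p_j$ is never prematurely forced to be a cut vertex that only Alice can resolve. Then I would apply Corollary~\ref{cor_bob_any} repeatedly: Bob, using his turns, picks up these leftover vertices $p_j$ in decreasing weight order, and the weight hierarchy certifies that grabbing the heaviest remaining $p_j$ always dominates any alternative play, so an optimal Bob does exactly this.

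\textbf{The main obstacle.}
The delicate point is not the arithmetic but the \emph{timing and availability}: I must verify that each leftover vertex $p_j$ is genuinely available to Bob on some turn before Alice can interfere, and that Bob's commitment to collecting $V_1\cup\dots\cup V_i$ does not conflict with the moves already fixed by hypothesis in turns $3,\dots,i+2$. The clean way to handle this is to lean entirely on Corollary~\ref{cor_bob_any}, which says availability is never an issue \emph{for Bob} once we are in game R — he can reach any remaining vertex — so the only thing left to check is that optimal play really does funnel both players into the pairwise pattern. That forcing argument, via the ``each $V_k$ outweighs the entire tail'' inequality, is where the real content lies: I would make it precise by a backward/exchange argument showing any optimal strategy that postpones a heavy vertex can be improved, and I expect this exchange argument to be the technically heaviest part, though it is routine given the exponential weight separation.
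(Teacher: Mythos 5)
There is a genuine gap, and it sits exactly where you placed your confidence: the step ``apply Corollary~\ref{cor_bob_any} repeatedly so Bob picks up the leftover vertices one after another'' does not work. That corollary guarantees that Bob, when on turn, can secure any \emph{single} designated remaining vertex; it says nothing about securing a set of $i$ vertices simultaneously, and the guarantees do not compose. The underlying strategy (Observation~\ref{obs_cut_bob}) commits Bob's intermediate moves to shrinking components of $H-v$ until $v$ becomes available, and while he is executing it for $w_1$, Alice is free to spend her own turns dismantling the pendant path of some other $w_j$ (taking the inner vertices $z_j$ and then $y_j$ of $P_j$) and then grab $w_j$ herself. The entire content of this lemma is precisely this simultaneity problem, and your proposal does not address it. The paper's proof resolves it with one explicit strategy for Bob: avoid $Y_i\cup Z$ as long as possible and take a vertex of $W_i$ the moment it ceases to be a cut vertex. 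This is sound because (a) each $w_j$ stays a cut vertex --- hence unavailable to Alice --- until Alice herself spends turns removing the inner vertices of $P_j$, and (b) since the only common neighbours of two vertices of $W_i$ lie in the large set $Z$, at most one vertex of $W_i$ can become available after any single turn of Alice, so Bob's one move per turn suffices to catch every exposure; a parity count then settles the endgame when only $W_i\cup Y_i\cup Z$ remains. None of (a), (b), or the endgame argument appears in your proposal.

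A secondary misdirection: you identify as ``where the real content lies'' an exchange argument showing that optimal play forces one vertex of each pair $\{x_j,\neg x_j\}$ to be taken in turn $j+2$. That is not something this lemma must prove --- it is a \emph{hypothesis} of the lemma, and the forcing is established separately in Lemma~\ref{lemma_xi} (which in fact relies on the present lemma, so proving it here would be circular in your ordering). Likewise the weight hierarchy ``each $V_k$ outweighs the whole tail'' plays no role in this proof; the statement is purely about Bob's ability to reach certain cut vertices, not about comparing weights.
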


\begin{proof}
For each $j=1,2, \dots, i$, let $w_j$ be the remaining vertex of the pair $x_j, \neg x_j$ and let $W_i=\{w_1, w_2, \dots, w_i\}$. Every vertex of $W_i$ is a cut vertex in the remaining graph and is adjacent to all vertices from $Z$. Each $w_j$ has a neighbor $y_j$ in the path $P_j$ that is still not taken. Let $Y_i=\{y_1,y_2,\dots,y_i\}$. Bob's strategy is to avoid taking vertices from $Y_i\cup Z$ as long as possible, and take a vertex of $W_i$ whenever it becomes available. When no such turn is possible, he takes an available vertex from $Z$.

First we observe that if the remaining graph $H$ after some of the Alice's following turns is not covered by $W_i\cup Y_i \cup Z$, then every component $C$ of $H-(W_i\cup Y_i \cup Z)$ contains a vertex that is not a cut vertex in $H$. Indeed, since $V(H)\cap (W_i\cup Y_i \cup Z)$ induces a connected subgraph of $H$, we may take the end vertex of the longest path starting in $V(H)\cap (W_i\cup Y_i \cup Z)$ and ending in $C$.


As the vertices of $Z$ are the only common neighbors of any pair of vertices from $W_i$, at most one vertex of $W_i$ is made available after each Alice's turn, unless Alice took the last remaining vertex of $Z$. But this will not happen as long as Bob is avoiding taking vertices from $Z$, since $Z$ contains more than half of the vertices of $G_{\Phi}$.

It follows that if Bob has no available vertex outside $Y_i \cup Z$, then all the vertices of the remaining graph belong to $W_i\cup Y_i \cup Z$. Hence in the rest of the game, Bob's strategy reduces to avoiding $Y_i$ and taking a vertex of $W_i$ whenever Alice makes it available. When only one vertex $z\in Z$ remains, we have a similar situation as in the proof of Theorem~\ref{veta_kconnected}: the remaining graph is a tree which is a union of paths $zw_jy_j$, it has an odd number of vertices and so Alice is on turn. Consequently Bob can get all the remaining vertices $w_j$. 
\end{proof}

\begin{lemma}\label{lemma_bob_b_x1_2x2}
Suppose that both players play optimally. Suppose further that Alice's first turn is on $a$ and Bob's first turn is on $b$. Then Bob has a strategy to get vertices of total weight at least $w(b)+w(x_1)+2w(x_2)$.
\end{lemma}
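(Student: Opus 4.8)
```latex
\textbf{Plan of proof.}
The plan is to show that after the forced opening (Alice on $a$, Bob on $b$), Bob can guarantee the three heaviest remaining vertices he wants, namely one vertex from $V_1$ and two vertices from $V_2$. By Observation~\ref{obs_12_2}, once $a$ and $b$ are taken the game is game R on the remaining graph, so Corollary~\ref{cor_bob_any} is available: whenever Bob is on turn he can secure any single remaining vertex. The difficulty is that $V_1=\{x_1,\neg x_1\}$ has only two vertices and $V_2=\{x_2,\neg x_2\}$ also has only two, so ``two vertices of $V_2$'' means \emph{both} of $x_2,\neg x_2$, and Bob must arrange the parity/timing so that he is the one collecting them rather than merely blocking. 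I will use the weight hierarchy (each vertex of $V_k$ outweighs the total of all groups $V_l$ with $l>k$) to argue that an optimal Alice, facing these heavy vertices, will herself take one vertex of $V_1$ on her next turn, which is exactly what pins down the order of play needed to invoke Lemma~\ref{lemma_bob_posbira_zbyvajici_promenne}.

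First I would argue that in Bob's third turn (the turn right after $a,b$ are gone and Alice has moved once more) the greedy consideration forces the action onto $V_1$. After $a$ and $b$ are taken, the available vertices of largest weight are $x_1$ and $\neg x_1$ with weight $9^n$; by the separation of weights, whoever fails to engage with $V_1$ concedes an amount larger than everything below. I would show that optimal play makes one of $x_1,\neg x_1$ be taken quickly, and then apply Corollary~\ref{cor_bob_any} so that Bob secures the remaining vertex of the pair $x_1,\neg x_1$; this gives him $w(x_1)=9^n$ from $V_1$. The point is that after this exchange the hypotheses of Lemma~\ref{lemma_bob_posbira_zbyvajici_promenne} are met with $i=1$, guaranteeing Bob the surviving vertex $w_1$ of $V_1$.

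Next I would repeat the argument one level down for $V_2$. Once $V_1$ is resolved, the heaviest remaining vertices are $x_2,\neg x_2$ of weight $9^{n-1}$, and by the same weight-domination argument optimal play directs attention to $V_2$. Here the key extra claim is that Bob can obtain \emph{both} vertices of $V_2$: after the first of $x_2,\neg x_2$ is taken, the remaining one becomes a cut vertex of the remaining graph (it separates the dangling path vertex and its clause attachments from the bulk via the set $Z$), so Observation~\ref{obs_cut_bob} lets Bob claim it, while the parity set up by Lemma~\ref{lemma_bob_posbira_zbyvajici_promenne} lets Bob also be the one to take the first of the pair. Summing, Bob secures $w(b)$ (his opening), $w(x_1)$ (from $V_1$), and $2w(x_2)$ (both of $V_2$), which is the claimed bound $w(b)+w(x_1)+2w(x_2)$.

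The main obstacle I anticipate is making rigorous the step ``optimal Alice is forced to engage $V_2$ so that Bob collects \emph{both} its vertices.'' Blocking a cut vertex via Observation~\ref{obs_cut_bob} is routine, but showing that Bob is the player who takes the \emph{first} of $x_2,\neg x_2$ (so that the second becomes a cut vertex he can then also take) requires a careful parity/timing bookkeeping combined with the weight hierarchy to rule out Alice profitably delaying into lower groups. I expect this to hinge on invoking Lemma~\ref{lemma_bob_posbira_zbyvajici_promenne} with $i=2$ to control the leftover structure, together with the observation that each vertex of $V_2$ strictly outweighs the entire tail $V_3\cup\dots\cup V_{n+m}$, so Alice can never compensate for conceding a $V_2$ vertex by grabbing lighter ones.
```
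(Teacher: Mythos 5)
Your ingredients are the right ones and essentially match the paper's proof, but the step you flag as the main obstacle dissolves once you restructure the argument as a plain case analysis on Alice's second move rather than an attempt to characterize her optimal play. The paper argues: (i) if Alice's second turn is \emph{not} on $x_1$ or $\neg x_1$, Bob takes $x_1$, which turns $\neg x_1$ into a cut vertex (the two inner vertices of $P_1$ now hang off it), so by Corollary~\ref{cor_bob_any} Bob also secures $\neg x_1$ and collects at least $w(b)+2w(x_1)>w(b)+w(x_1)+2w(x_2)$ --- no claim about what an optimal Alice ``must'' do is needed here, only that Bob's guaranteed haul in this branch already exceeds the target; (ii) if Alice takes $x_1$ (or symmetrically $\neg x_1$), Bob simply takes $x_2$ in his second turn --- the ``parity bookkeeping'' you worry about is nothing more than the fact that turn $4$ is his --- and then a single application of Lemma~\ref{lemma_bob_posbira_zbyvajici_promenne} with $i=2$ hands him both $\neg x_1$ and $\neg x_2$, for a total of $w(b)+w(x_2)+w(x_1)+w(x_2)$, which is exactly the claimed bound.

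One genuine danger in your plan as written: you propose to ``resolve $V_1$'' via Corollary~\ref{cor_bob_any} and then separately ``resolve $V_2$'' via Observation~\ref{obs_cut_bob}. Each of those guarantees only a \emph{single} prescribed vertex, and the strategies they produce need not compose --- committing to the strategy that secures $\neg x_1$ may forfeit $\neg x_2$. The whole point of Lemma~\ref{lemma_bob_posbira_zbyvajici_promenne} is that one strategy secures all the leftover variable vertices $w_1,\dots,w_i$ simultaneously, so in case (ii) you should invoke it once with $i=2$ and not fall back on the single-vertex statements. (Also a small inaccuracy: after $x_2$ is taken, $\neg x_2$ separates only the two inner vertices of $P_2$ from the rest; the clause-attachment vertices $\neg x_2^j$ stay connected to the bulk through their clause gadgets. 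The cut-vertex conclusion you need still holds.)
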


\begin{proof}
If Alice takes a vertex $v\notin\{x_1,\neg x_1\}$ in her second turn, then Bob takes $x_1$ in his second turn. This makes $\neg x_1$ unavailable for Alice's third turn. By Corollary~\ref{cor_bob_any}, Bob has a strategy to get $\neg x_1$ in the rest of the game. Therefore he gets vertices of total weight at least $w(b)+2w(x_1)>w(b)+w(x_1)+2w(x_2)$. 

Now suppose that Alice takes $x_1$ in her second turn (the case of $\neg x_1$ is symmetric). Then Bob takes $x_2$ in his second turn. By Lemma~\ref{lemma_bob_posbira_zbyvajici_promenne}, Bob has a strategy to take both $\neg x_1$ and $\neg x_2$ in the rest of the game.
\end{proof}

\begin{lemma}\label{lemma_ab}
If both players play optimally in game R or game TR, then Alice's first turn is on $a$ and Bob's first turn is on $b$.
\end{lemma}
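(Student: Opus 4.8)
The plan is to reduce everything to one elementary principle about \emph{super-dominant} vertices and then to a short ``reflection'' comparison of the two openings $a$ and $b$. Write $P:=\sum_{k\ge 1}w(V_k)=W-w(a)-w(b)$ for the total weight off the pair $\{a,b\}$. The group-domination property gives $w(b)>P$, and the explicit formulas give $w(a)>w(b)$. Call a remaining vertex $d$ \emph{super-dominant} in a position of game~R if $w(d)$ exceeds the total weight of all other remaining vertices. The principle I will use is: if $d$ is super-dominant and it is some player's turn, then that player optimally takes $d$. Indeed, taking $d$ secures strictly more than half of the remaining weight, whereas after any other move the opponent can take $d$ (if the opponent is Bob this is Corollary~\ref{cor_bob_any}; if the opponent is Alice it holds because in all our applications $d\in\{a,b\}$ is adjacent to every vertex and the remaining graph stays connected), again securing strictly more than half; hence deviating is strictly worse. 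Throughout I write $g$ for Alice's optimal gain in game~R on $G-\{a,b\}$ with Alice to move, and $g'_u$ for Alice's optimal gain in game~R on $G-\{a,b,u\}$ with Bob to move.

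Next I determine Bob's reply to the openings $a$ and $b$. If Alice plays $a$, then afterwards $b$ is super-dominant, since everything except $b$ weighs $P<w(b)$; so Bob takes $b$. Symmetrically, if Alice plays $b$, then $a$ is super-dominant (everything except $a$ weighs $P<w(a)$) and Bob takes $a$. In \emph{both} cases the position reached is literally the same: game~R on $G-\{a,b\}$ with Alice to move. Hence the value to Alice of opening with $a$ is $w(a)+g$, while the value of opening with $b$ is $w(b)+g$, and since $w(a)>w(b)$, opening with $a$ strictly beats opening with $b$.

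It remains to rule out every opening $u\notin\{a,b\}$. I bound the value of opening $u$ from above by exhibiting a Bob strategy: Bob answers $u$ by taking $a$. In the resulting position $b$ is super-dominant (everything except $b$ weighs $P-w(u)<w(b)$), so by the principle Alice's best reply is $b$, after which it is Bob's turn on $G-\{a,b,u\}$. Thus Bob holds Alice to at most $w(u)+w(b)+g'_u$, so the value of opening $u$ is at most $w(u)+w(b)+g'_u$. On the other hand, in the position defining $g$ one legal option for Alice is to take $u$ itself, which yields her $w(u)+g'_u$; hence $g\ge w(u)+g'_u$. Combining, the value of opening $a$ satisfies
$$w(a)+g\ \ge\ w(a)+w(u)+g'_u\ >\ w(b)+w(u)+g'_u\ \ge\ (\text{value of opening }u),$$
using $w(a)>w(b)$. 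Therefore $a$ strictly beats every other opening, so in optimal play Alice takes $a$ and, as shown above, Bob then takes $b$.

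The one step that needs care—and the main (routine) obstacle—is the availability bookkeeping that legitimizes every ``grab'' above: that $a$, $b$, and the chosen third vertex are each available when a player wants them and stay available after the opponent's alternative move, and that the option ``Alice takes $u$ in $G-\{a,b\}$'' is legal. Inspecting the cases, the only connectivity fact required is that $G$ stays connected after deleting $a$, $b$, and any one further vertex, i.e.\ that $G-\{a,b\}$ has no cut vertex. This holds because $a$ and $b$ are adjacent to all other vertices and the set $Z$ has more than $|V(G)|/2$ vertices and is adjacent to every $x_i,\neg x_i$, so every neighbourhood reconnects through $Z$; hence $G-\{a,b\}$ is $2$-connected and every $u\ne a,b$ is indeed a legal first move there. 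Verifying this single fact from the construction closes the argument.
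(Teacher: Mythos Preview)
Your argument is clean and the super-dominant idea is the right instinct, but it rests on a structural claim that is false. You assert in the final paragraph that $G_{\Phi}\setminus\{a,b\}$ is $2$-connected, and you use this both to justify the super-dominant principle for Bob (``after any other move the opponent can take $d$'') and to ensure that $u$ is a legal move in the position defining $g$. But $G_{\Phi}\setminus\{a,b\}$ is \emph{not} $2$-connected: each heavy clause vertex $c_j$ is a cut vertex, since the other heavy vertex $c'_j$ is adjacent only to $c_j$ (and to $a,b$). The paper's own proof explicitly singles out this case: ``If $v$ is a cut vertex in $G_{\Phi}\setminus\{a,b\}$ (namely, one of the vertices $c_j$)\dots''.

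This breaks your argument at the very first step. After Alice plays $a$, suppose Bob deviates to $v=c_j$. Then $b$ becomes a cut vertex of the remaining graph (it is the only vertex still linking $c'_j$ to the rest), so Alice \emph{cannot} take $b$, and your comparison ``opponent grabs $d$, hence deviating is strictly worse'' does not go through. Nothing in your setup rules out that Bob profits by first taking $c_j$, forcing Alice into a move she would not otherwise make, and collecting $b$ later. The paper closes this gap not by a dominance argument but by a concrete calculation: Alice responds with $x_1$, then $\neg x_1$ or $x_2$, and one invokes Lemma~\ref{lemma_bob_b_x1_2x2} to show Bob would have done strictly better taking $b$. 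The same cut-vertex issue also invalidates your inequality $g\ge w(u)+g'_u$ when $u=c_j$, since then $u$ is not a legal move in $G_{\Phi}\setminus\{a,b\}$ and $g'_u$ is not even defined. That case is easily patched (Bob takes $a$, eventually gets $b$ by Corollary~\ref{cor_bob_any}, so Alice's total is at most $P<w(a)$), but the first gap is genuine and your reflection argument does not survive it without importing the paper's extra work.
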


\begin{proof}
Suppose that Alice takes $a$ in her first turn and then Bob takes $v \neq b$. 
If $v$ is not a cut vertex in $G_{\Phi} \setminus \{a,b\}$, then Alice can take $b$ in her second turn and get vertices of total weight at least $w(a)+w(b) > W-w(b)$ which implies that Bob's first turn was not optimal. If $v$ is a cut vertex in $G_{\Phi} \setminus \{a,b\}$ (namely, one of the vertices $c_j$), then $b$ becomes the only cut vertex in the remaining graph, so it is the only unavailable vertex for Alice in her second turn. Thus Alice can take $x_1$ in her second turn. As $b$ still separates the remaining vertices, Bob cannot take $b$ in his next turn.
It follows that Alice can take $\neg x_1$ or $x_2$ in her third turn, depending on Bob's choice in his second turn (we may assume without loss of generality that the formula $\Phi$ has at least two variables). In this way Alice gained vertices of total weight at least  $w(a) + 9^{n} + 9^{n-1} > W-w(b)-w(x_1)-2w(x_2)$. Therefore by Lemma~\ref{lemma_bob_b_x1_2x2}, Bob's first turn was not optimal.

Now suppose that Alice's first turn is on $u \neq a$. As both $a$ and $b$ are connected to all other vertices, Bob can play his first turn on $a$. By Observation~\ref{obs_12_2}, we can consider only game R from now on. 
If Alice took $b$ in her second turn, then she is not playing optimally, as she can get to a position with the same remaining graph using a better strategy: she takes $a$ in her first turn, then by the previous paragraph Bob takes $b$, and then Alice takes $u$. Thus Alice takes $v \neq b$ in her second turn. Then by Corollary~\ref{cor_bob_any}
Bob has a strategy to get $b$, so he gains at least $w(a)+w(b) > W-w(a)$. Hence Alice's only optimal first turn is on $a$.
\end{proof}

\subsubsection{Variable gadgets}

A {\em variable gadget\/} for the variable $x_i$ is the path $P_i$ connecting vertices $x_i$ and $\neg x_i$, with only the end vertices connected to $Z$. 

\begin{lemma}\label{lemma_xi}
If both players play optimally, then for each $i=1,2, \dots, n$, in the $(i+2)$nd turn of the game either $x_i$ or $\neg x_i$ is taken.
\end{lemma}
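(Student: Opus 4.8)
The plan is to prove Lemma~\ref{lemma_xi} by induction on $i$, using the weight hierarchy to force both players into a near-greedy behavior. By Lemma~\ref{lemma_ab} we already know the game starts with Alice taking $a$ and Bob taking $b$, so the base case is that the game is in a position after two turns where $a,b$ are gone. The key leverage throughout is the weight design: each vertex of a group $V_k$ outweighs the entire combined weight of all groups $V_l$ with $l>k$. Hence an optimal player who can secure a vertex of $V_k$ will never trade it away for vertices of strictly smaller-indexed groups; equivalently, control over the high-weight pairs $\{x_j,\neg x_j\}$ dominates all later decisions.

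First I would set up the inductive hypothesis: assume that for each $j=1,2,\dots,i-1$, in the $(j+2)$nd turn either $x_j$ or $\neg x_j$ was taken. I then want to show that in the $(i+2)$nd turn the player on turn must play $x_i$ or $\neg x_i$. The natural way is a two-sided argument. Suppose the player on turn deviates and takes some vertex $v\notin\{x_i,\neg x_i\}$. I would split into the two cases according to whether it is Alice or Bob on turn (the parity of $i$). If it is Bob's turn and he fails to take from $V_i$, then the remaining vertex of the pair $\{x_i,\neg x_i\}$ together with the earlier untouched partners becomes reachable for the opponent; invoking Lemma~\ref{lemma_bob_posbira_zbyvajici_promenne} (for the Bob side) and Corollary~\ref{cor_bob_any} (to grab cut vertices), I would show the deviating player loses more group-$V_i$ weight than could ever be recovered from the lower groups $V_{i+1},\dots,V_{n+m}$, contradicting optimality. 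The symmetric argument handles Alice's turn: if Alice declines to take from $V_i$, Bob seizes one of $x_i,\neg x_i$ and, by Lemma~\ref{lemma_bob_posbira_zbyvajici_promenne}, collects the rest of the earlier surviving partners, again swinging more than a full $V_i$-weight in Bob's favor.

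The quantitative heart is the comparison: taking a vertex of weight $9^{n+1-i}$ (or a clause-gadget weight) versus forfeiting it. Because $w(x_i)=9^{n+1-i}$ strictly exceeds $\sum_{l>i}(\text{total weight of }V_l)$ by the stated weight inequality, any deviation that costs the player one vertex of $V_i$ while gaining only vertices from strictly later groups is strictly suboptimal. I would make this precise by bounding the maximum additional weight obtainable from groups $V_{i+1},\dots,V_{n+m}$ and showing it is dominated by a single $V_i$-vertex. The surviving partner vertices are cut vertices adjacent to all of $Z$, exactly the situation Lemma~\ref{lemma_bob_posbira_zbyvajici_promenne} was built to exploit, so the opponent can guarantee them.

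The main obstacle I expect is the bookkeeping of exactly which vertices are ``available'' (non-cut) at the moment of the $(i+2)$nd turn, since the graph has become tangled by the clause gadgets and the partner vertices $w_j$ are cut vertices. I would need to verify that after the first $i+1$ prescribed turns, the vertices $x_i$ and $\neg x_i$ are both genuinely available (not separated by the already-removed vertices), so that ``take $x_i$ or $\neg x_i$'' is actually a legal move under rules (T)/(R), and that no other high-weight vertex has sneaked into availability. This rests on the $3$-connectivity of $G_{\Phi}$ and on the large buffer set $Z$ of size $10(m+n)+1$, which keeps the graph robustly connected and prevents premature availability of clause vertices $c_j,c'_j$; I would argue that as long as $Z$ is not exhausted (guaranteed because Bob avoids $Z$ and $|Z|>|V(G_\Phi)|/2$), the only vertices whose removal could reconnect or disconnect the relevant part are precisely the pair $x_i,\neg x_i$, pinning down the forced move.
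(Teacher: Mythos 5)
Your overall plan---minimal violating index (equivalently, induction), a parity case split, punishing a deviation by letting the opponent seize a vertex of the pair and then harvest the surviving partners via Lemma~\ref{lemma_bob_posbira_zbyvajici_promenne}, and closing with the weight-hierarchy comparison---is exactly the paper's argument. But one concrete step does not go through as you describe it: you cannot simply ``invoke Lemma~\ref{lemma_bob_posbira_zbyvajici_promenne}'' in the deviating line of play, because that lemma's hypothesis is precisely that in every turn $j+2$ with $j\le i$ a vertex of $\{x_j,\neg x_j\}$ was taken, and the deviation at turn $i+2$ violates this for $j=i$. The paper bridges this with a specific device in the case where Alice deviates: Bob answers with $x_i$, or---when Alice's vertex $u$ is an inner vertex of $P_i$---with the endpoint of $P_i$ adjacent to $u$ (a sub-case you do not treat; note the far endpoint is not even available there, since taking it would disconnect the remaining inner vertex of $P_i$). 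One then observes that the resulting position coincides with the one obtained by swapping the $(i+2)$nd and $(i+3)$rd turns, that the swapped game satisfies the hypotheses of Lemma~\ref{lemma_bob_posbira_zbyvajici_promenne} through turn $i+2$, and that taking $u$ exposes no remaining vertex of $V_1\cup\dots\cup V_i$, so Bob can run the strategy from that lemma's proof from his next turn onward. Without this (or an equivalent) argument, the claim that the opponent collects all of $w_1,\dots,w_i$ is unsupported; Corollary~\ref{cor_bob_any} is not a substitute, since it guarantees Bob one prescribed vertex, not the simultaneous capture of many cut vertices against an adversary.

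Two smaller points. First, in the case where Bob deviates ($i$ even), Alice does not get to ``harvest the partners''---she has no analogue of Lemma~\ref{lemma_bob_posbira_zbyvajici_promenne}; the paper's argument there is simply that Alice grabs one of $x_i,\neg x_i$ on her next turn, which caps Bob's total below what Lemma~\ref{lemma_bob_posbira_zbyvajici_promenne} guarantees him in the non-deviating line, so your phrasing needs to be reoriented from ``Alice collects'' to ``Bob's alternative is provably better.'' Second, your worry about the availability of $x_i$ and $\neg x_i$ is not where the difficulty lies: after Lemma~\ref{lemma_ab} and Observation~\ref{obs_12_2} the game has reduced to game R and the large set $Z$ keeps everything connected, so availability is immediate; the genuine work is the explicit weight inequality comparing the two lines of play, for which your ``one $V_i$ vertex beats all later groups'' observation is the right engine but must be paired with a lower bound on the non-deviating player's guaranteed gain.
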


\begin{proof}
By Lemma~\ref{lemma_ab} and Observation~\ref{obs_12_2}, we need to consider only game R.

Let $i$ be the smallest positive integer that violates the statement of the lemma. 

Suppose first that the $(i+2)$nd turn was Alice's (thus $i$ is odd). Let $u \notin \{x_i, \neg x_i\}$ be the vertex taken by Alice in the $(i+2)$nd turn. If $u \in P_i$, then Bob can take the end vertex of $P_i$ adjacent to $u$ in the next turn. If $u \notin P_i$, then Bob takes $x_i$ in the next turn. 

We now argue that Bob has a strategy to get all the remaining vertices from $V_1\cup V_2\cup \dots \cup V_i$. 
We would get the same remaining graph if Alice and Bob switched the vertices they took in the $(i+2)$nd and the $(i+3)$rd turn. This modified game satisfies conditions of Lemma~\ref{lemma_bob_posbira_zbyvajici_promenne} until the $(i+2)$nd turn. Since Bob is now required to take $u$ in the $(i+3)$rd turn, we cannot apply Lemma~\ref{lemma_bob_posbira_zbyvajici_promenne} directly. But since $Z$ is large enough and taking $u$ does not make any remaining vertex of $V_1\cup V_2\cup \dots \cup V_i$ available to Alice, Bob may get all the remaining vertices from $V_1\cup V_2\cup \dots \cup V_i$ using the strategy from the proof of Lemma~\ref{lemma_bob_posbira_zbyvajici_promenne} from his next turn.

It follows that Bob has a strategy to get vertices of weight at least $w(b)+9^n+2\cdot 9^{n-1}+9^{n-2}+2\cdot 9^{n-3}+\cdots+2\cdot 9^{n-i+1}+2\cdot 9^{n-i} > W-w(a)-w(x_1)-w(x_3)-\cdots-w(x_i)$. Therefore the only optimal choice for Alice in the $(i+2)$nd turn was $x_i$ or $\neg x_i$.

Now suppose that $i$ is even. If Bob takes a vertex $v \notin \{x_i, \neg x_i\}$ in the $(i+2)$nd turn, then either $x_i$ or $\neg x_i$ is available for Alice in the next turn and she takes it. The total weight of the remaining vertices and the vertex $v$ is then smaller than $9^n + 9^{n-1}+ \cdots + 9^{n-i+2} + 2\cdot 9^{n-i+1}$. But by Lemma~\ref{lemma_bob_posbira_zbyvajici_promenne}, Bob has a strategy to get at least that much of the weight if he takes $x_i$ or $\neg x_i$ instead of $v$ in the $(i+2)$nd turn. This finishes the proof of Lemma~\ref{lemma_xi}.
\end{proof}

Let $w_1, w_2, \dots, w_n$ be the sequence of the remaining vertices from the groups $V_1, V_2, \dots, V_n$. The vertices $w_i$ determine a truth assignment $\sigma$ of the variables: $\sigma(x_i)={\rm TRUE}$ if $w_i=\neg x_i$ and $\sigma(x_i)={\rm FALSE}$ if $w_i=x_i$. See Figure~\ref{obr_weighted_3_evaluation}. For each $i=1,2,\dots,n$, let $y_i$ be the neighbor of $w_i$ in the path $P_i$.

\begin{figure}
\begin{center}
\includegraphics[scale=1]{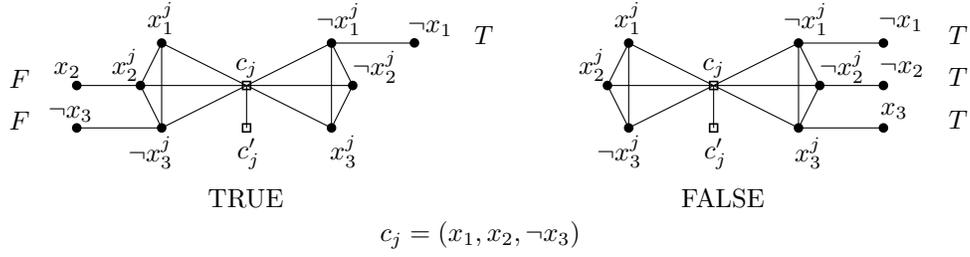}
\end{center}
\caption{A clause gadget after the players chose the evaluation $\sigma$ of the variables. Left: $\sigma(x_1)={\rm TRUE},\sigma(x_2)={\rm FALSE}, \sigma(x_3)={\rm TRUE}$, the clause $c_j$ is evaluated as TRUE. Right: $\sigma(x_1)={\rm TRUE},\sigma(x_2)={\rm TRUE}, \sigma(x_3)={\rm FALSE}$, the clause $c_j$ is evaluated as FALSE.}
\label{obr_weighted_3_evaluation}
\end{figure}

\subsubsection{Clause gadgets}

The subgraph induced by the group $V_{n+j}$ acts as a clause gadget for the clause $c_j$.
Call the two vertices of the group $V_{n+j}$ of weight $11/999^j$ {\em heavy\/} and the six vertices of weight $10/999^j$ {\em light\/}. 

\begin{lemma}\label{lemma_clause_gadget}
Suppose that all vertices $\{w_1, w_2, \dots, w_n\}$ and at least one vertex from $Z$ are still remaining and the players are restricted to play on the subgraph of $G_{\Phi}$ induced by $V_{n+j}$ with Bob being on turn. If both players play optimally, then Alice takes one heavy and three light vertices if the clause $c_j$ is satisfied by $\sigma$, otherwise she takes four light vertices. In other words, Alice takes exactly half of the weight of the group $V_{n+j}$ (namely $41/999^j$) if the clause $c_j$ is satisfied by $\sigma$ and $40/999^j$ otherwise.
\end{lemma}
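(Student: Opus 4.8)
The plan is to treat Lemma~\ref{lemma_clause_gadget} as the analysis of a small, eight-vertex subgame in which Bob moves first, so that each player ends up taking exactly four of the eight vertices of $V_{n+j}$. Since the heavy vertices carry weight $11/999^j$ and the light ones $10/999^j$, the only thing that matters for Alice's gain is \emph{how many of the two heavy vertices she captures}: her gain is $41/999^j$ (exactly half of the group) if she gets one heavy and three light vertices, and $40/999^j$ if she gets four light vertices. Thus I would reduce the statement to the dichotomy that Alice can secure a heavy vertex if and only if the clause $c_j$ is NAE-satisfied by $\sigma$, and otherwise Bob can take both heavy vertices.

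First I would make the effect of the evaluation $\sigma$ explicit at the level of connectivity. Because the light vertex $x_i^j$ is attached only to $x_i$ and $\neg x_i^j$ only to $\neg x_i$, and exactly one of $x_i,\neg x_i$ survives the variable phase as $w_i$, for each of the three variables of $c_j$ exactly one of the two light vertices is \emph{grounded}, i.e.\ still joined to the backbone formed by the surviving $w_i$'s and the (nonempty) remaining part of $Z$; the other is \emph{floating}, reachable from the backbone only through the gadget. I would then read off from the clause-gadget topology of Figure~\ref{f:weighted_2_clauses} that the heavy vertices $c_j,c'_j$ separate the positive from the negative light vertices, and verify that the NAE-condition on $c_j$ is precisely the statement that the three grounded light vertices do not all lie on the same side, equivalently that \emph{both} heavy vertices have a grounded light neighbour. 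An available vertex here is one whose removal does not disconnect the remaining graph of $G_\Phi$; thanks to the rich backbone this is a purely local condition inside the gadget, so availability of a vertex $v$ reduces to the requirement that $v$ does not cut some floating vertex off from the backbone.

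With this translation in hand the proof splits into two cases. If $c_j$ is NAE-satisfied, both heavy vertices have grounded access, so the gadget behaves symmetrically: I would show that after Bob's first move the second heavy vertex is, or can be forced to become, available on one of Alice's turns, giving Alice a strategy to take it; conversely a pairing strategy for Bob---pairing the remaining heavy vertex with a light vertex and otherwise mirroring Alice---keeps Alice from taking more than one heavy. Hence each player gets exactly one heavy and Alice's gain is $41/999^j$. If $c_j$ is \emph{not} satisfied, all three grounded light vertices lie on one side; then one heavy vertex, say $c'_j$, is a cut vertex (it alone connects its floating neighbours to the backbone) and remains so on every one of Alice's turns. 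I would have Bob take the accessible heavy vertex $c_j$ first, and then, using the same mechanism as in Observation~\ref{obs_cut_bob}, maintain $c'_j$ as a cut vertex until he can take it himself, so that Alice is confined to the four light vertices and gains only $40/999^j$.

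The main obstacle is the finite but fiddly game-tree bookkeeping: one must verify, across all of Alice's deviations, that every response prescribed for Bob is legal under rule (R) (that is, it lands on a currently available vertex) and that it really preserves the relevant invariant---``the unclaimed heavy vertex stays available for Bob and unreachable by Alice'' in the unsatisfied case, and ``the split stays balanced'' in the satisfied case. The one genuinely structural input is the identification, carried out above, of the NAE-satisfaction of $c_j$ with the existence of grounded access to both heavy vertices; once this is established, the remaining work is a routine case check over the eight vertices.
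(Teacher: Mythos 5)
Your proposal is correct and follows essentially the same route as the paper: it identifies NAE-satisfaction of $c_j$ with both light triangles retaining a ``grounded'' connection to the backbone through the surviving $w_i$'s, has Bob grab a heavy vertex first, lets Alice take the other heavy vertex exactly when both are accessible, and in the unsatisfied case keeps the remaining heavy vertex a cut vertex by a within-triangle mirroring/pairing strategy until Bob can take it himself. The only differences are cosmetic (labelling of $c_j$ versus $c'_j$, and phrasing the cut-vertex maintenance via Observation~\ref{obs_cut_bob} rather than the explicit triangle-mirroring rule).
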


\begin{proof}
Suppose that the clause $c_j$ is satisfied by $\sigma$. Then each of the triangles induced by light vertices is connected by an edge to some vertex $w_i$ (which is connected to all vertices in $Z$).
Bob can take a heavy vertex ($c'_j$) in his first turn. Alice can also take a heavy vertex in the next turn: either $c'_j$ if Bob took another vertex, or $c_j$ if Bob took $c'_j$. In the remaining turns each player takes three light vertices.

Now suppose that the clause $c_j$ is not satisfied by $\sigma$. Then Bob takes $c'_j$ in his first turn. The remaining heavy vertex $c_j$ is a separator in the remaining graph. Bob's strategy now is to take a light vertex from the same triangle as Alice. Once Alice takes the third light vertex of one of the two triangles, Bob takes $c_j$ and Alice thus gets no heavy vertex. 
\end{proof}


\begin{lemma}\label{lemma_vyherni_strategie}
Suppose that the players played optimally during the first $n+2$ turns.

{\rm (A)} If all clauses are satisfied after the first $n+2$ turns then Alice has a winning strategy for the rest of the game.

{\rm (B)} If at least one of the clauses is not satisfied after the first $n+2$ turns then Bob has a winning strategy for the rest of the game.
\end{lemma}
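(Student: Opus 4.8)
The plan is to reduce the entire endgame (turns $n+3$ onward) to the weight balance inside the clause gadgets, exploiting the identity built into $w(a)$. By Lemmas~\ref{lemma_ab} and~\ref{lemma_xi}, after the first $n+2$ turns Alice holds $a$ together with her variable choices of total weight $O:=9^n+9^{n-2}+\cdots+9^1$, while Bob holds $b$ together with $E:=9^{n-1}+9^{n-3}+\cdots+9^2$; the remaining vertices of positive weight are $w_1,\dots,w_n$ (of total weight $O+E$) and the clause gadgets (of total weight $\Sigma_C:=\sum_{j=1}^m 82/999^j$). First I would record the consequence of the defining identity $w(a)=w(b)+2E+1/999^{m+1}$: writing $A_{\mathrm{end}}$ and $B_{\mathrm{end}}$ for the weights the two players still gain, so that $A_{\mathrm{end}}+B_{\mathrm{end}}=(O+E)+\Sigma_C$, a one-line computation gives that Alice wins the whole game if and only if $A_{\mathrm{end}}>\Sigma_C/2-\tfrac{1}{2\cdot 999^{m+1}}$.

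The core of the argument is to decouple the remaining play into independent clause-gadget subgames, each governed by Lemma~\ref{lemma_clause_gadget} with Bob on turn (Bob is indeed on turn at the start of the endgame, turn $n+3$, since $n$ is odd). Two structural facts drive this. Each $w_i$ is \emph{protected}: it remains a cut vertex as long as its zero-weight pendant tail $y_i$ survives, so no $w_i$ can be taken until its tail is cleared; and the set $Z$, which contains more than half of the vertices of $G_\Phi$, supplies both players with an inexhaustible reservoir of safe filler moves. I would use these to show that optimal play keeps all of $w_1,\dots,w_n$ present throughout the contest over the clause gadgets, so that the hypothesis of Lemma~\ref{lemma_clause_gadget} holds each time a gadget is resolved, and that neither player is ever forced to disturb a $w_i$ or to win/concede part of a gadget out of turn. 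I would also invoke Lemma~\ref{lemma_bob_posbira_zbyvajici_promenne} to give Bob a strategy that denies Alice every $w_i$; this matters because, by the weight hierarchy, a single $w_i$ (of weight $\ge 9\gg\Sigma_C$) would by itself push $A_{\mathrm{end}}$ past the threshold above. Establishing this clean interleaving—all $w_i$ preserved while the $8m$ gadget vertices are distributed, with the turn order inside each gadget matching Lemma~\ref{lemma_clause_gadget}—is the main obstacle; everything else is bookkeeping with $Z$ as padding.

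Granting the decomposition, part (A) follows directly. If every clause is satisfied by $\sigma$, then by Lemma~\ref{lemma_clause_gadget} Alice's responding strategy secures at least half of each gadget (and more of any gadget Bob declines to contest), so $A_{\mathrm{end}}\ge \Sigma_C/2>\Sigma_C/2-\tfrac{1}{2\cdot 999^{m+1}}$ and Alice wins; note she needs no $w_i$ at all. For part (B), suppose some clause $c_{j_0}$ is unsatisfied. Bob combines his $w_i$-denying strategy from Lemma~\ref{lemma_bob_posbira_zbyvajici_promenne} with the ``unsatisfied'' response of Lemma~\ref{lemma_clause_gadget} in $V_{n+j_0}$ (take $c'_{j_0}$, mirror Alice within the two light triangles, then grab $c_{j_0}$), and with at-most-half responses in every other gadget. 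This holds Alice to $A_{\mathrm{end}}\le \Sigma_C/2-1/999^{j_0}$ with no contribution from the $w_i$. Since $1/999^{j_0}\ge 1/999^{m}>\tfrac{1}{2\cdot 999^{m+1}}$, the winning condition fails, so Bob wins. This proves both parts and, together with the correspondence between $\sigma$ and the quantifier alternation of $\Phi$, completes the reduction.
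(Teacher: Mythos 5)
Your weight bookkeeping is correct and matches the paper's: after turn $n+2$ the identity $w(a)=w(b)+2(9^{n-1}+9^{n-3}+\cdots+9^2)+1/999^{m+1}$ does reduce the game to ``Alice wins iff her endgame gain exceeds $\sum_{j=1}^m 41/999^j-\tfrac{1}{2\cdot 999^{m+1}}$'', and your high-level plan (Alice needs half of every gadget and no $w_i$; Bob needs all $w_i$, half of every gadget, and a surplus in an unsatisfied one) is exactly the paper's. But the proposal has a genuine gap: the step you yourself flag as ``the main obstacle'' --- decoupling the unrestricted endgame into the restricted subgames of Lemma~\ref{lemma_clause_gadget} --- is never actually carried out, and it is precisely where all the work lies. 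Saying that $Z$ provides filler moves and that each $w_i$ is protected by its tail does not show that optimal play respects the gadget-by-gadget decomposition with Bob on turn in each gadget; the players are free to shift tempo across gadgets and to attack the tails, and both deviations must be refuted by explicit counting.

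Concretely, two arguments are missing. For (A), you must handle Bob spending turns outside the gadget currently being contested: the paper shows that if Bob plays \emph{twice} outside $V_{n+i}$ before it is exhausted, Alice collects five of its eight vertices and wins outright, and that a \emph{single} such deviation can be deferred to after $V_{n+i}$ is finished without changing the position, which sets up an induction reducing to the ``in order'' play your argument assumes. For (B), the $w_i$-denial and the gadget contests are \emph{not} independent, contrary to how you treat them: Alice can take $z_j$ and then $y_j$ to expose $w_j$, after which the gadget vertex adjacent to $w_j$ becomes \emph{dangerous} for Bob (taking it would hand Alice a $w_j$, which by the weight hierarchy loses him the game). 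Bob must avoid dangerous vertices while still securing his share of $V_{n+k}$, and the paper's resolution is a counting argument --- each dangerous vertex costs Alice two turns outside the gadget, so by the time all remaining vertices of $V_{n+k}$ are dangerous Bob has already banked at least five of them --- together with the observation that some non-dangerous vertex of $V_{n+k}$ is always available while any exist. Without these two pieces neither direction of the lemma is established; everything else in your write-up is, as you say, bookkeeping.
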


\begin{proof}
By the previous lemmas, the remaining vertices with positive weights are $w_1, w_2, \dots, w_n$, where $w_i\in V_i$, and all vertices from the groups $V_{n+1}$, $V_{n+2}$, $\dots$, $V_{n+m}$.

For every $j=1,2,\dots,m$, both Bob and Alice have a strategy to get four vertices of the clause gadget $V_{n+j}$:
if one of the players decides to take vertices from $V_{n+j}$ only, the other player cannot make all the remaining vertices of $V_{n+j}$ unavailable. Indeed, in such a case the remaining vertices would form a tree with every leaf connected to a unique component in the rest of the remaining graph. This is not possible as the only neighbors of $V_{n+j}$ are three vertices $w_i$ that are all connected to a large set $Z$.

\medskip
\noindent (A) We show that Alice wins if she follows the greedy strategy, that is, after the $(n+2)$nd turn she always takes an available vertex of the largest weight. The weights of the two vertices $a,b$ are set in such a way that in order to get more than one half of the total weight, Alice only needs to gather additional vertices of total weight $\sum_{j=1}^m 41/999^j$, which equals to the weight of a collection of one heavy and three light vertices from each clause gadget. Clearly, also any vertex $w_i$ is sufficient, so we can assume that Bob makes no $w_i$ available to Alice. Then Alice is sequentially taking vertices from the groups $V_{n+1}, V_{n+2}, \dots, V_{n+m}$, first the heavy ones and then the light ones. 

Let $i$ be the smallest integer such that after the $(n+2)$nd turn, Bob took a vertex outside $V_{n+i}$ when there was still at least one vertex available in $V_{n+i}$ and no vertex available in $V_{n+1}\cup V_{n+2} \cup \dots \cup V_{n+i-1}$. Then by Lemma~\ref{lemma_clause_gadget} and its proof, Alice got four vertices, including at least one heavy, from each of the groups $V_{n+1}, V_{n+2},\dots,V_{n+i-1}$. By the observation at the beginning of this proof, Alice has a strategy to get four vertices of $V_{n+i}$ as well. Again by the proof of Lemma~\ref{lemma_clause_gadget}, at least one of these vertices will be heavy. If Bob took two vertices outside $V_{n+i}$ while $V_{n+i}$ was still not completely taken, then Alice would get five vertices of $V_{n+i}$ and hence she would win the game. 

Therefore we can assume that Bob took only one vertex $v$ outside $V_{n+i}$ before all vertices from $V_{n+i}$ were taken. But then the position of the game just after taking the last vertex of $V_{n+i}$ is the same as in the case when Bob took $v$ later, in the $(n+3+8i)$th turn of the game (in his first turn after all vertices of $V_{n+i}$ were taken). So we may assume that Bob chose this latter strategy instead. By induction we conclude that Bob always took a vertex from the group $V_{n+i}$ with the smallest possible $i$, and hence Alice got one heavy and three light vertices from every clause gadget.

\medskip
\noindent (B)
Let $i$ be the smallest integer such that the clause $c_i$ is not satisfied. 
We show that Bob has a strategy to get all the vertices $w_j$ and vertices from $V_{n+1},V_{n+2},\dots,V_{n+i}$ of total weight at least $(\sum_{j=1}^{i-1} 41/999^j)+42/999^i$. This equals to the weight of one heavy and three light vertices from each of the groups $V_{n+1},V_{n+2},\dots,V_{n+i-1}$, and two heavy and two light vertices from $V_{n+i}$. Together with the vertices obtained during the first $n+2$ turns, this makes more than one half of the total weight $W$.

Bob starts with taking the vertex $c'_1$. Now we consider two phases of the remaining part of the game. The first phase lasts while in each of her turns, Alice takes a vertex from a group $V_{n+j}$ where $j\le i$ and $j$ is the smallest integer such that $V_{n+j}$ has still some vertex available. Bob's strategy in this phase is to take an available vertex from $V_{n+j}$ where $j$ is the smallest possible such integer, according to the strategy from the proof of Lemma~\ref{lemma_clause_gadget}. In this way Bob gets four vertices from each group $V_{n+j}$ that is completely taken; in particular, one heavy and three light vertices if the corresponding clause $c_j$ is satisfied, and two heavy and two light vertices if the clause $c_j$ is unsatisfied (that is, if $j=i$).

If all vertices from $V_{n+1}, V_{n+2},\dots, V_{n+i}$ are taken during the first phase, then Bob follows the following strategy in the second phase.
\begin{enumerate}
\item[1)] Take an available vertex $w_i$ of largest weight.
\item[2)] If no $w_i$ is available, take any vertex except the vertices $y_i$. 
\end{enumerate}
By the proof of Lemma~\ref{lemma_bob_posbira_zbyvajici_promenne}, Bob gets all the vertices $w_i$ in the rest of the game.

In the remaining case, let $k\le i$ be the smallest integer such that Alice takes a vertex $v$ outside of the group $V_{n+k}$ while some vertex in $V_{n+k}$ is still available and no vertex is available in $V_{n+1}\cup V_{n+2} \cup \dots \cup V_{n+k-1}$. Now Bob has to be a bit careful, since Alice can {\em expose\/} some of the vertices $w_j$ by taking its neighbor $y_j$ (after taking the other inner vertex $z_j$ of the path $P_j$). The neighbor of such an exposed vertex $w_j$ in $V_{n+k}$ may then become dangerous: a {\em dangerous\/} vertex is any available vertex which, after being taken, makes some of the vertices $w_j$ available. In particular, the vertex $y_j$ may become dangerous after its neighbor $z_j$ is taken. Note that a vertex may lose the property of being dangerous after several turns.

Bob's strategy is thus the following.
\begin{enumerate}
\item[1)] Take an available vertex $w_i$ of largest weight.
\item[2)] If no $w_i$ is available, take an available vertex from $V_{n+k}$ that is not dangerous.
\item[3)] If neither of the previous rules apply, take any vertex that is not dangerous. 
\end{enumerate}
We claim that using this strategy Bob gets five vertices of $V_{n+k}$ and all the vertices $w_1,w_2,\dots,w_n$, which is enough for him to win the game. 
 
First we observe that the only case when Bob can take no vertex from $V_{n+k}$ according to his strategy is when all the remaining vertices in $V_{n+k}$ are dangerous. Indeed, let $D$ be the set of dangerous vertices in $V_{n+k}$ and suppose that $V_{n+k}\setminus D$ is nonempty. Consider the longest path $P$ starting in $D$ with all other vertices in $V_{n+k}\setminus D$. Then the end vertex of $P$ in $V_{n+k}\setminus D$ is not a cut vertex in the remaining graph, since all dangerous vertices are connected through their neighbors $w_j$ and the set $Z$. 

Consider the first position in the game when all the remaining vertices of $V_{n+k}$ are dangerous. Let $d$ be the number of these dangerous vertices. For each dangerous vertex of $V_{n+k}$, Alice had to expose one vertex $w_j$, so she had to take two vertices $z_j$ and $y_j$ outside of $V_{n+k}$. It follows that Bob took at least $2d+1$ vertices from $V_{n+k}$ and thus $d\le 2$. If $d=2$, then Bob took at least $5$ vertices from $V_{n+k}$. If $d=1$, then Bob took at least $2$ more vertices than Alice from the $7$ taken vertices of $V_{n+k}$, which is at least $5$ again. Finally, if $d=0$, then Alice took precisely one vertex outside of $V_{n+k}$, so she took $3$ vertices from $V_{n+k}$ while Bob took $5$ of them.

Now by the proof of Lemma~\ref{lemma_bob_posbira_zbyvajici_promenne}, Bob gets all the vertices $w_i$ in the rest of the game. This finishes the proof.
\end{proof}

By the previous lemma, the existence of a winning strategy for Alice on the graph $G_{\Phi}$ is naturally reformulated as the quantified boolean formula $\Phi$, where the existential quantifiers correspond to Alice's turns and the universal quantifiers to Bob's turns on the variable gadgets.

\section*{Acknowledgments} We thank G\"unter Rote for inspiring us to consider also the misere game TR.


\end{document}